\documentclass[lettersize,journal]{IEEEtran}
\usepackage{array}
\usepackage{stfloats}
\usepackage{url}
\usepackage{verbatim}
\usepackage{graphicx,subcaption}
\usepackage[T1]{fontenc}
\usepackage{amsmath,amssymb,amsfonts,dsfont}
\usepackage{graphicx}
\usepackage{algorithm,algorithmic,booktabs}
\usepackage{textcomp}

\newcommand{\HIDEFIG}[1]{}

\newtheorem{remark}{Remark}
\newtheorem{problem}{Problem}
\newtheorem{assumption}{Assumption}

\newtheorem{proposition}{Proposition}
\newtheorem{proof}{Proof}
\def\BibTeX{{\rm B\kern-.05em{\sc i\kern-.025em b}\kern-.08em
    T\kern-.1667em\lower.7ex\hbox{E}\kern-.125emX}}
\def\BibTeX{{\rm B\kern-.05em{\sc i\kern-.025em b}\kern-.08em
    T\kern-.1667em\lower.7ex\hbox{E}\kern-.125emX}}
\usepackage{balance}
\begin{document}
\title{Equivalent Circuit Model–based Electric Vehicle Evacuation with Mobile Charging Stations}
\author{Joseph Moyalan, \IEEEmembership{Member, IEEE}, Ricardo de Castro, \IEEEmembership{Senior Member, IEEE}, Shuang Feng, Xuchang Tang, Xinfan Lin, \IEEEmembership{Senior Member, IEEE}, Qijian Gan
\thanks{This work is supported by the California Climate Action Seed Grants of the University of California (Grant No. R02CP6996).
Joseph Moyalan, Shuang Feng, and Ricardo de Castro are with the Department of Mechanical and Aerospace Engineering, UC Merced. Xuchang Tang and Xinfan Lin are with the Department of Mechanical and Aerospace Engineering, UC Davis. Qijian Gan is with Partners for Advanced Transportation Technology (PATH), UC Berkeley.}}

\maketitle
\markboth{IEEE TRANSACTIONS ON INTELLIGENT TRANSPORTATION SYSTEMS}%
{Moyalan \MakeLowercase{\textit{et al.}}: ECM--Based EV Evacuation With Mobile Charging Stations}
\begin{abstract}
The increasing penetration of electric vehicles (EVs) introduces new challenges for emergency evacuation planning due to limited driving range, long charging times, and constrained charging infrastructure, particularly under disaster-induced disruptions. This paper proposes a novel optimization-based evacuation framework for EVs using Equivalent Circuit Models (ECMs) to jointly address routing, charging, and congestion management. By leveraging electrical analogies, traffic flow is modeled as electrical current, travel time as resistance, and driving range as voltage, enabling the use of Kirchhoff’s laws to enforce flow balance and energy feasibility constraints. The proposed controllable ECM incorporates binary switches to regulate route selection and explicitly models charging delays and range replenishment at both Fixed Charging Stations (FCSs) and Mobile Charging Stations (MCSs). The resulting formulation leads to an integer programming problem that determines optimal evacuation routes, charging durations, and the placement and number of MCSs to minimize evacuation time. The framework is extended to multiple origin–destination pairs using the principle of superposition and supports fairness-aware performance metrics, including worst-case, average, and variance-based evacuation times. Simulation studies on large-scale transportation networks in California demonstrate that the proposed approach significantly improves evacuation efficiency and robustness, particularly in scenarios with limited charging access, highlighting the critical role of MCSs in EV-based emergency evacuations.
\end{abstract}

\begin{IEEEkeywords}
Electric vehicles and electric mobility, Traffic networks, Optimization and control, Transportation modelling and design
\end{IEEEkeywords}
\section{Introduction}
As transportation remains one of the largest contributors to global emissions, accelerating the deployment of Electric vehicles (EVs) is not only a technological imperative but also a critical societal pathway toward a cleaner, healthier, and more sustainable future \cite{dik2022electric,aldhanhani2024future,khaleel2024electric}. However, despite their environmental benefits, EVs still face several disadvantages when compared to conventional gasoline vehicles. EVs are constrained by limited driving range and longer refueling times, as charging, even with fast chargers, typically takes significantly longer than refueling a gasoline vehicle, contributing to range anxiety among users \cite{campana2023optimal}. As reported in \cite{EVIOT}, charging times vary significantly by charger level: Level 1 (120V) requires more than 8 hours to deliver roughly 75–80 miles of range, Level 2 (240V) reduces this to about 4 hours, and Level 3 (480V+) can provide up to 180 miles in less than an hour. Also, the growth of EV adoption continues to exceed the expansion of charging station infrastructure, resulting in limited charging access, particularly in disadvantaged communities \cite{nicholas2019quantifying}. Long charging times and a high EV-to-charger ratio lead to congestion, long waiting times, and uncertainty in trip planning, particularly during peak travel periods or in urban areas with high EV penetration. These challenges are further exacerbated in emergency evacuation scenarios.

Recent years have witnessed a significant increase in mass evacuation events triggered by natural disasters such as tornadoes, hurricanes, and wildfires. For instance, Hurricane Irma (2017) led to the evacuation of more than six million people across Florida, Georgia, and South Carolina in the USA \cite{wong2018understanding}. In 2025, California alone experienced several large wildfires, including the Gifford Fire in San Luis Obispo and Santa Barbara Counties \cite{Gifford_fires}, the Orleans Complex Fire in the Six Rivers National Forest \cite{Orleans_fires}, and the Pickett Fire in Napa County \cite{Pickett_fires}. Earlier urban–wildland interface fires, such as the Eaton Fire \cite{Eaton_fires} and the Palisades Fire \cite{Palisades_fires} in Los Angeles County, caused widespread destruction, forced tens of thousands of evacuations, and underscored the severe risks wildfires pose to densely populated areas, infrastructure, and communities.

Emergency evacuation is a critical component of disaster management, aimed at ensuring the rapid, safe, and orderly movement of populations away from threatened regions. Effective evacuation planning requires coordinated transportation network management \cite{zareafifi2024planning}, real-time traffic control \cite{feng2020can}, and clear communication strategies \cite{moyalan2026optimal}. In the literature, evacuation problems are typically formulated as network design problems \cite{bell1997transportation}. However, traditional evacuation planning approaches are inadequate when additional constraints are considered, such as battery state-of-charge, location dependency, and charging availability of Fixed Charging Stations (FCSs), and power grid limitations. This highlights the need for algorithms that jointly address routing and charging decisions.

Several studies have investigated EV evacuation planning \cite{feng2020can,donaldson2022integration,li2022optimal,yusuf2024evacuation,zheng2010modeling}. In \cite{donaldson2022integration}, EV evacuation charging is modeled as a mobile and spatially redistributed load, showing that synchronized demand spikes can significantly alter power flows and stress transmission networks. To address this, the authors propose an integrated resilience framework combining evacuation modeling with Unit Commitment (UC) and AC Optimal Power Flow (AC-OPF). In \cite{li2022optimal}, a three-stage approach is introduced that consolidates the network through pre-assigned charging stations, ranks nodes based on critical paths, and formulates evacuation planning as a mixed-integer linear program (MILP) to minimize total evacuation time. Similarly, \cite{zheng2010modeling} presents an optimization–simulation framework for no-notice evacuations that accounts for background traffic and congestion. However, these approaches generally assume a fully operational charging infrastructure and do not explicitly consider power grid disruptions that can disable charging stations during disasters and leave EVs stranded, thereby limiting their effectiveness. Moreover, these methodologies identify limited charging station availability and restricted location reachability as major bottlenecks to rapid evacuation. Consequently, the use of Mobile Charging Stations (MCSs) is explored due to their flexibility and location-independent charging capability.

Many studies suggest MCSs, such as battery-equipped trucks \cite{zhang2020mobile,afshar2021mobile,beyazit2024fairness}, can deliver power where it is most needed, including during emergency evacuations, in remote areas, or to support peak demand. A substantial body of literature addresses the optimal deployment of MCSs for various applications \cite{liu2022mobile,wang2019location}. In \cite{liu2022mobile}, a federated learning–based framework is proposed for the proactive placement of idle MCSs in the Internet of Electric Vehicles (IoEV), framing deployment as a prediction-driven decision problem to increase successful charging rates while reducing cost and waiting time. In contrast, \cite{wang2019location} develops a dynamic location–allocation framework that incorporates a stochastic user equilibrium traffic assignment to jointly capture the effects of congestion and charging availability on EV routing decisions. More recently, \cite{beyazit2024fairness} introduces a real-time, traffic-aware MILP that optimizes multiple quality-of-service (QoS) metrics, such as energy satisfaction, delay, and final state-of-charge, while explicitly enforcing fairness among EV users using Jain’s fairness index. Despite the versatility of MCSs across these applications, their use in emergency evacuation remains largely unexplored. A notable exception is \cite{tang2024enhancing}, which proposes a two-stage optimization framework for large-scale EV evacuations that integrates MCS deployment while accounting for heterogeneity in battery capacity and initial state-of-charge. In this approach, energy-feasible evacuation routes are first generated using an augmented network with virtual charging nodes, followed by an MILP that schedules vehicle departures, assigns routes, and deploys MCSs to minimize total evacuation time.

This paper introduces an equivalent circuit model (ECM) framework for EV evacuation planning with Mobile Charging Stations (MCSs). By leveraging analogies from electrical engineering, ECMs enable non-electrical systems to be analyzed using circuit laws, such as Ohm’s and Kirchhoff’s laws, and fundamental components, including resistors, sources, capacitors, inductors, and switches. A well-established example of this approach is in thermal systems modeling, where voltage, current, and resistance are used as analogs for temperature, heat flow, and thermal resistance, respectively \cite{wunsche1997electro}. ECMs have also gained traction in transportation system analysis over the past several decades. In these models, traffic flow is commonly represented as electrical current, while the impedance associated with travel time or distance is modeled as electrical resistance. This analogy has been used to accelerate traffic network simulations, estimate traffic flow distributions, and compute shortest travel paths. Early work, such as \cite{furber1973electrical}, demonstrated the use of electrical circuit analogs to build computational tools for faster traffic simulation. Furthermore, recognizing the uncertainty inherent in driver behavior, several studies \cite{furber1973electrical,wellin1975simulation,sinop2023robust} adopt the principle that traffic tends to follow routes of minimum travel cost, analogous to the flow of electrical current along paths of least resistance, to infer flow patterns in transportation networks.
\begin{figure}
    \centering
    \includegraphics[width=0.99\linewidth]{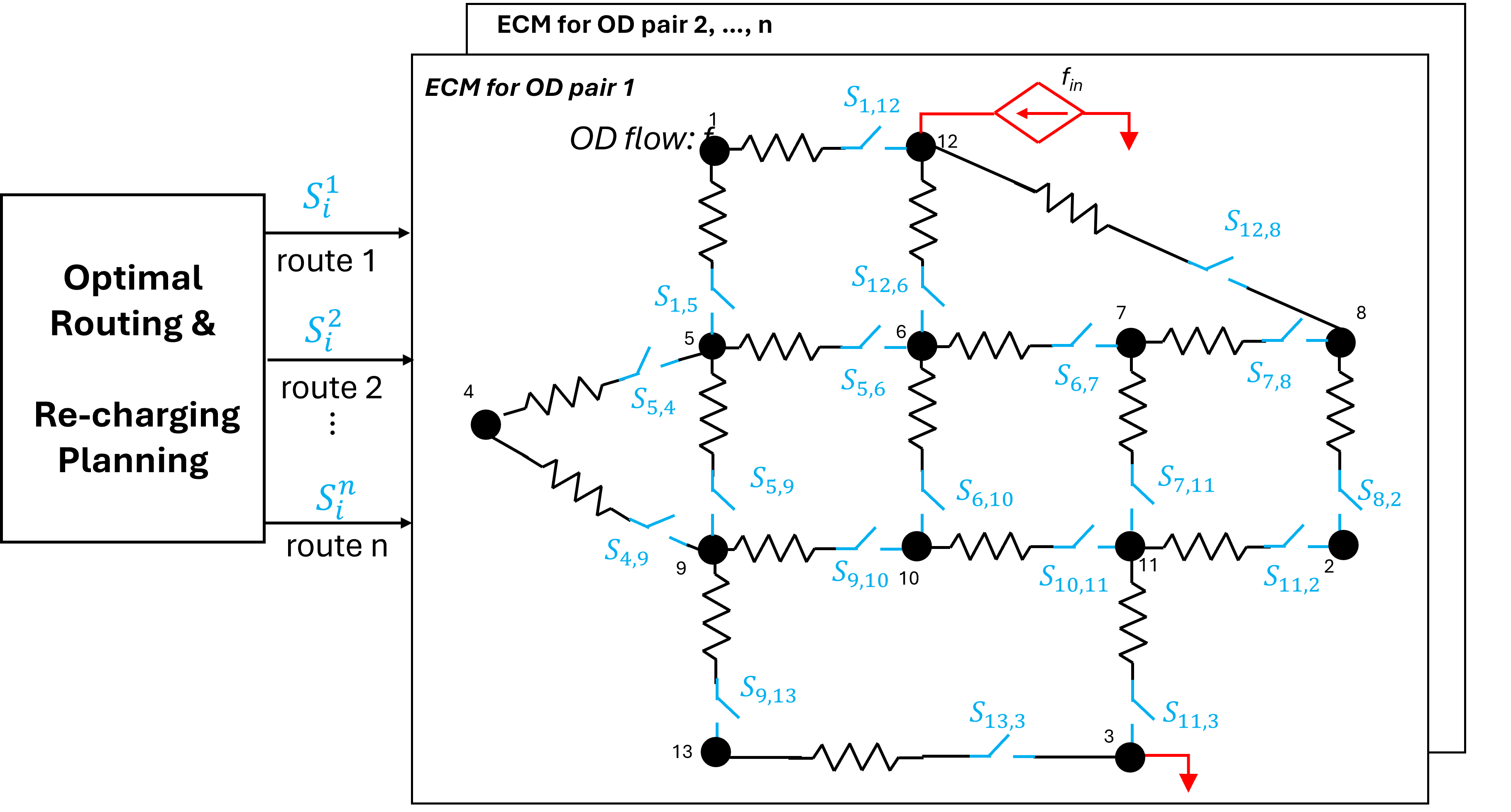}
    \caption{Overview of the evacuation planning with controllable ECMs.}
    \label{fig:controlableECM}
\end{figure}
The main contributions of this paper are as follows:
\begin{enumerate}
    \item We develop a controllable extension of ECM for traffic networks by incorporating additional circuit components, specifically switching elements, that enable active regulation of traffic flow (see Fig.~\ref{fig:controlableECM}). The enhanced circuit representation further integrates components that capture delays associated with vehicle recharging, an aspect that becomes especially important when evacuating EVs whose driving range is more limited than that of conventional internal combustion vehicles~\cite{zhang2022multi,purba2022evacuation}. In addition, we demonstrate how the superposition principle~\cite{alexander2007fundamentals} can be leveraged to decompose and evaluate routing scenarios involving multiple origins and destinations.
    \item We propose a new method to capture the loss of driving range along evacuation routes. This new analogy uses voltage as a proxy for driving range, which allows us to determine vehicles’ energy consumption along the transportation/charging network using Kirchhoff’s voltage law. To the best of the author’s knowledge, this has not been addressed in ECM-based traffic models previously proposed in the literature.
    \item We incorporate MCSs into the ECM-based evacuation framework to enable a flexible and cost-effective solution. The proposed algorithm jointly determines the optimal placement and required number of MCSs at each location to minimize the total evacuation time across all EVs.
    \item The proposed algorithm also determines the optimal charging duration for EVs at each charging infrastructure.
\end{enumerate}

This paper extends our earlier work in \cite{moyalan2026optimal}. In contrast to the prior study, this work introduces the use of MCSs to support evacuation and proposes a novel methodology for modeling the loss of driving range using Kirchhoff’s voltage law. Additionally, the computation of optimal charging durations for EVs at each charging infrastructure is newly developed in this paper.

The remainder of the paper is organized as follows. Section~\ref{section:prelim} presents the preliminaries, notation, and problem formulation. The main results, including the ECM-based formulation of the evacuation problem, are detailed in Section~\ref{section:main}. Simulation results are provided in Section~\ref{section:results}, followed by concluding remarks.

\section{Preliminaries and Problem Statement} \label{section:prelim}
This section introduces the notation and the problem statement of the paper.  
\begin{table}[hb]
\caption{List of Key Parameters of the Evacuation Problem} 
\label{table:symbols}
\centering 
\resizebox{\columnwidth}{!}{
\begin{tabular}{c c}
\hline
\textbf{Symbol} & \textbf{Description} \\ 
\hline
 & \textbf{Transportation Parameters} \\ 

$\mathbb{S}$ & Set of nodes of the transportation network. \\ 
$\mathbb{E}$ & Set of edges of the transportation network. \\
$d_{ij}$ & Distance between the nodes $i$ and $j$.\\
$T_{ij}$ & Free flow traveling time between the nodes $i$ and $j$.\\
$f^{free}_{ij}$ & Free flow capacity of the edge.\\
$k_{ij}$ & Density of the cars on the edge.\\
$u^{avg}_{ij}$ & Average speed of the cars in the edge.\\
\hline
& \textbf{Charging Parameters}\\
$\mathbb{E}_{ch}$ & Set of edge containing charging station. \\
$T_{ch}$ & Minimum charging duration for a charging station.\\
$d_{ch}$ & Minimum driving range gained from a charging station.\\
$f^{max}_{ch,(ij)}$ & Number of EVs per hour that the charger can serve.\\
\hline
& \textbf{Evacuation Requests}\\
$\mathcal{OD}$ & Set of origin-destination pairs.\\
$r^0_{od}$ & Initial driving range of vehicle.\\
$t_{evac,m}$ & Evacuation time for the $m^{th}$ od-pair.\\
\hline

\end{tabular}
}
\end{table}
\subsection{Transportation network}
\begin{enumerate}
    \item {\it Nodes}: $\mathbb{S}=\{s_1, ..., s_{N_S}\}$ denotes the set of vertices in the transportation network. At a local scale, nodes may represent intersections, while at a regional scale, they may correspond to towns or cities connected by the road network.\\
    \item {\it Edges} denote the connections between two nodes $s_i$ and $s_j$ and represent the road segments of the network. Formally, $(s_i, s_j) \in \mathbb{E} \subset \mathbb{S} \times \mathbb{S}$, where $\mathbb{E}$ is the set of all network edges. Similarly, a {\it path} between two nodes $s_1$ and $s_l$ is defined as a sequence of nodes ${\bf p} = [s_1, s_2, \dots, s_l]$ such that $(s_k, s_{k+1}) \in \mathbb{E}$ for all $k = 1, \dots, l-1$.\\
    \item {\it Weights}: The distance (in km) between two nodes is denoted by $d_{ij}$, and the vector $\mathbb{D} = [d_{ij}]$ contains the distances for all node pairs in the network. Similarly, the free-flow travel time (in hours) between two nodes is denoted by $T_{ij}$, and the vector $\mathbb{T} = [T_{ij}]$ captures the free-flow travel times for all node pairs. The maximum flow of vehicles (in vehicles/hour) that an edge can support without congestion is denoted by $f^{free}_{ij}$, and the vector $\mathbb{F}^{free} = [f^{free}_{ij}]$ stores the free-flow capacities of all network edges.\\
    \item {\it Transportation Network} is denoted as a combination of the transportation graph and weights $\Theta_T=\{\mathbb{S},\mathbb{E}, \mathbb{D},\mathbb{T},\mathbb{F}^{free}\}$.
\end{enumerate}
\begin{remark}\label{remark:free_flow_capacity}
    The free-flow capacity ($\mathbb{F}^{free}$) of a link is determined by the maximum achievable vehicle density and the allowable speed on that link. Vehicle density is influenced by factors such as the number of lanes, the minimum spacing between vehicles, and the composition of traffic, including light-, medium-, and heavy-duty vehicles. In contrast, speed limits are typically set based on the characteristics of the surrounding environment, such as residential streets, school areas, or highways. By establishing reasonable upper limits for both vehicle density and travel speed, the free-flow capacity of each link can be estimated \cite{knoop2017introduction}.
\end{remark}
\subsection{Charging network}
\begin{enumerate}
    \item {\it FCS Charging Edges}: $\mathbb{E}_{FCS}\subset \mathbb{E}$ represents a set of edges containing an FCS, where vehicles can recharge their batteries. We denote $N_{FCS}$ as the total number of FCS available in the network.\\
    \item {\it MCS Charging Edges}: $\mathbb{E}_{MCS} \subset \mathbb{E}$ denotes the set of edges where MCS can be deployed to support EV evacuation. Since MCS may also be positioned at FCS locations to accommodate additional charging demand, it follows that $\mathbb{E}_{FCS} \subset \mathbb{E}_{MCS}$. We denote by $N_{MCS}$ the total number of MCS available within the network.\\
    \item {\it Weights}: Vehicles that stop at charging stations remain stationary for a duration $w^{ch}_{ij}T_{ch,(i,j)}$ where $w^{ch}_{ij}$ is a positive integer representing the number of charging intervals and $T_{ch,(i,j)}$ denotes the fixed charging time. For simplicity, we assume this value is constant across all charging infrastructure in the network and denote it by $T_{ch}$. Similarly, the additional driving range gained after charging can be written as a multiple of a fixed range denoted by $d_{ch}$. This corresponds to a charging rate of $d_{ch}/T_{ch}$ [km/h]. In the above discussion and throughout the paper, the subscript $ch$ may refer to (and may be replaced by) either $FCS$, indicating parameters associated with a fixed charging station, or $MCS$, indicating parameters associated with a mobile charging station.\\
    \item Re-charging decisions are defined as
    \begin{equation}
    \begin{split}
        \mathbf{p}_{ch} &=[(s_{ch,1}, s_{ch,2},w_{12}^{ch},v_{12}^{ch}), \ldots, \\
          &\quad\quad\quad(s_{ch,l_p}, s_{ch,l_p+1},w_{l_p,l_{p+1}}^{ch},v_{l_p,l_{p+1}}^{ch})] \nonumber
    \end{split}
    \end{equation}
    such that $(s_{ch,k}, s_{ch,k+1}) \in \mathbb{E}_{FCS}\cup \mathbb{E}_{MCS}$ for all $k$. This vector specifies the charging stations visited by the vehicle during the evacuation. Here, $w_{k,k+1}^{ch}$ specifies the charging duration ($w_{k,k+1}^{ch}T_{ch}$). Also, $v_{k,k+1}^{ch}$ is a non-negative integer that denotes the number of MCS deployed to the charging edge. If $v_{k,k+1}^{ch}=0$, then the charging edge contains an FCS that does not require additional MCS for charging support. If the vehicle does not need to stop for charging, then $\mathbf{p}_{ch} = \emptyset$.\\   
    \item We also assume that each FCS has a limit on the number of EVs it can serve per hour. We use the vector $\mathbb{F}^{max}_{FCS} = [f^{max}_{FCS,(i,j)}]$ to store the maximum number of EVs per hour that each FCS in the network can accommodate. We assume that the service rate of all MCS units is the same for the sake of simplicity. Therefore, $f^{max}_{MCS,(i,j)}=f^{max}_{MCS}$. \\
    \item {\it Charging Network} information is captured by $\Theta_{ch}=\{ \mathbb{E}_{FCS},\mathbb{E}_{MCS},\mathbb{F}^{max}_{FCS},f^{max}_{MCS}\}$. 
\end{enumerate}
\subsection{Evacuation Information}
\begin{enumerate}
    \item We consider an evacuation setting in which vehicles travel from an origin node $o \in \mathbb{S}$ to a predetermined destination or shelter $d \in \mathbb{S}$. Let $\mathcal{OD}$ denote the set of all feasible origin--destination pairs, such that $(o,d) \in \mathcal{OD} \subset \mathbb{S} \times \mathbb{S}$.\\
    \item {\it Initial range}: Vehicles may have different initial driving ranges, denoted by $r^0_{od}$ for each $(o,d) \in \mathcal{OD}$. To account for this variability, we group evacuees according to their initial driving ranges, represented by the vector $\mathbf{r}^0 = [r^0_{1}, r^0_{2}, \ldots]^T \subset \mathcal{R}$, where $\mathcal{R}$ denotes the set of all possible initial ranges. \\ 
    \item The total number of od-pairs ($N_{od}$) is calculated using the set $\mathcal{OD}$ and the vector $\mathbf{r}^0$. This is done because each $(o,d) \in \mathcal{OD}$ can have vehicles with different driving ranges. Therefore, the set of all possible od-pairs is given by $m=(o,d,r^0_{od})\in\mathcal{OD}\times \mathcal{R}$. Each od-pair supplies an incoming vehicle flow denoted by $f_{in,m}$.\\
    \item {\it Loss of driving range}:  In addition, the reduction in driving range along a route must be considered. This reduction is influenced by both the chosen path $\mathbf{p}$ and the charging stations encountered or utilized along that path, denoted by $\mathbf{p}_{ch}$. We define the range evolution as
    \begin{equation}
        \mathbf{r}_{od} = [r_{od,1}, \ldots, r_{od,l}],
    \end{equation}
    where $r_{od,1} = r^0_{od}$ is the initial driving range, and $r_{od,k}$ denotes the remaining range after traversing edge $k$. A path is feasible if $\mathbf{r}_{od} \geq 0$ for all entries.

\end{enumerate}
\subsection{Evacuation problem with constant flows}
\begin{assumption}\label{assume:traffic_info}
    We assume that the transportation parameters ($\Theta_T$), charging infrastructure information ($\Theta_{ch}$), and evacuation-related data are available in advance. Furthermore, all evacuees are assumed to depart simultaneously. 
\end{assumption}
\begin{remark}\label{remark:injection_flow}
    By assuming that all evacuees depart at the same time, we indicate that the evacuation process is initiated nearly simultaneously at the origins associated with every origin-destination (od) pair, resulting in vehicle flow beginning from each origin accordingly. This assumption does not require all vehicles to enter the network at the same instant; instead, it means that the release of traffic into the network starts approximately at the same time for all od-pairs. Such an assumption aligns with the macroscopic evacuation modeling framework commonly adopted in evacuation planning studies \cite{purba2022evacuation,lim2012capacitated}. 
\end{remark}
\begin{problem}\label{prob1}
    Given Assumption~\ref{assume:traffic_info}, find an optimal path 
    $\mathbf{p}_{od} = [s_{1,od}, \ldots, s_{l,od}]$ connecting 
    $(o,d) \in \mathcal{OD}$ and a re-charging strategy 
    $\mathbf{p}_{ch,od}$ that minimize the evacuation time ($t_{evac}$), 
    while satisfying the energy feasibility constraint 
    ($\mathbf{r}_{od} \geq 0$) as well as the traffic 
    ($\mathbb{F}^{free}$) and charging 
    ($\mathbb{F}^{max}_{FCS},f^{max}_{MCS}$) flow constraints.

\end{problem}
Next, we will define the emergency evacuation problem involving multiple origin-destination pairs.
\begin{problem}\label{prob2}
    Given Assumption~\ref{assume:traffic_info}, find an optimal path 
    $\mathbf{p}_{m} = [s_{1,m}, \ldots, s_{l_m,m}]$ and a re-charging strategy 
    $\mathbf{p}_{ch,m}$ that minimize the worst-case evacuation time ($t_{evac}$), 
    while satisfying the energy feasibility constraint 
    ($\mathbf{r}_{od,m} \geq 0$) as well as the traffic 
    ($\mathbb{F}^{free}$) and charging ($\mathbb{F}^{max}_{FCS},f^{max}_{MCS}$) flow constraints. 
    The worst-case evacuation time is defined as
    $$
    t_{evac} = \max(t_{evac,m}), \quad \forall\, m \in \mathcal{OD} \times \mathcal{R},
    $$
    where $t_{evac,m}$ denotes the evacuation time along the optimal path $\mathbf{p}_{m}$.
 
\end{problem}
\begin{figure}
    \centering
    \includegraphics[width=1\linewidth]{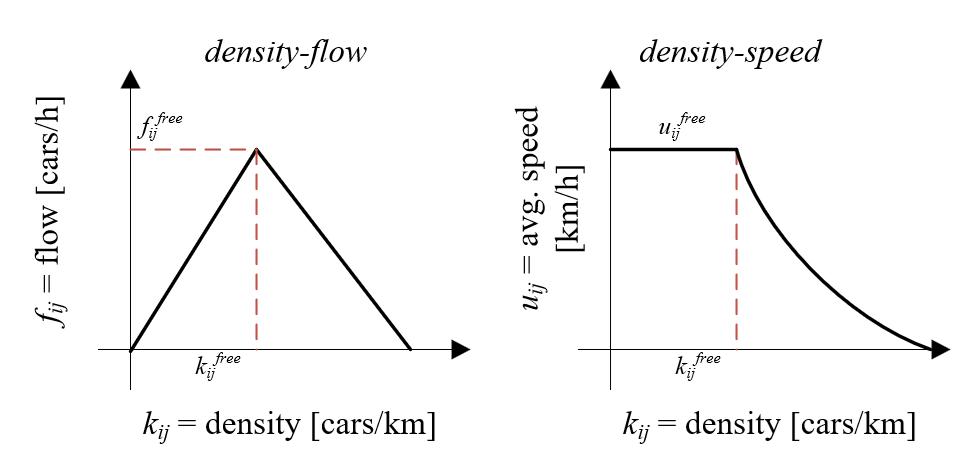}
    \caption{Illustration of the fundamental macroscopic relationship in a traffic link, assuming a triangular shape. }
    \label{fig:FundamentalTranspRelation}
\end{figure}
\begin{figure}
    \centering
    \includegraphics[width=1\linewidth]{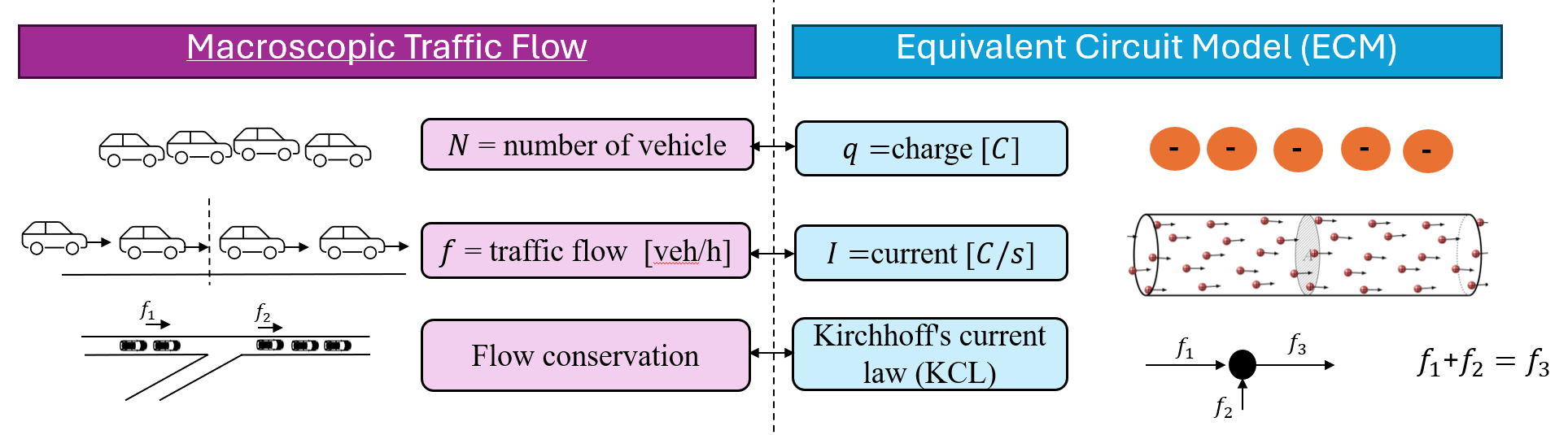}
    \caption{Analogy between traffic network and electrical circuits. }
    \label{fig:parallel_comparison}
\end{figure}
\begin{figure}
    \centering
    \includegraphics[width=0.97\linewidth]{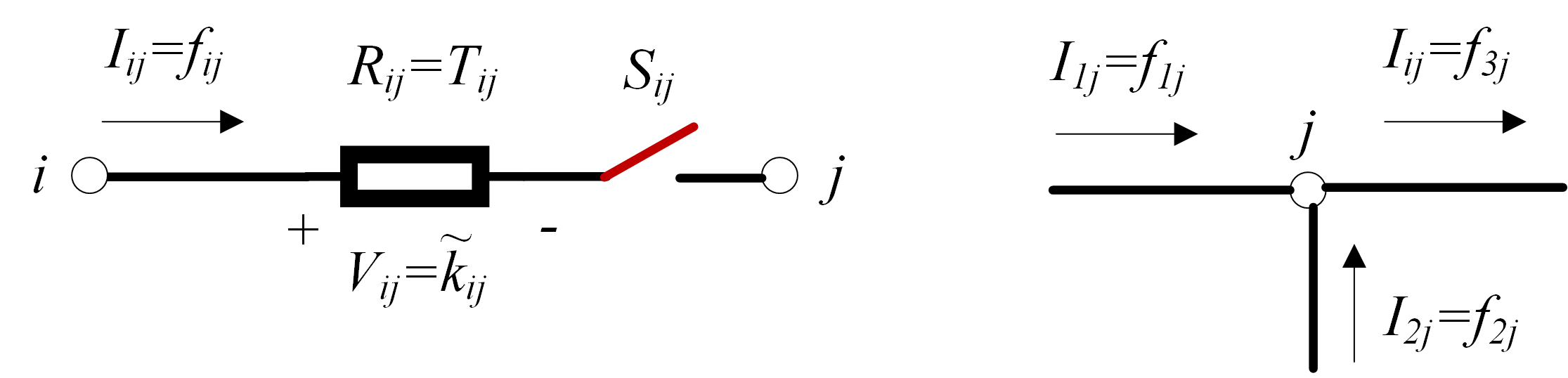}
    \caption{ECM for a transportation link (current -> transportation flow, resistance -> travel time, voltage -> vehicle concentration). }
    \label{fig:ECM-link}
\end{figure}
\section{ECM-based Evacuation Planning} \label{section:main}
This section is an extended version of Section 3 of \cite{moyalan2026optimal}, which discusses the ECM formulation for the evacuation problem. The incorporation of MCS into the ECM-based evacuation framework, the use of Kirchhoff's voltage law to define driving-range constraints, and the optimal calculation of EV charging duration are introduced for the first time in this paper. Recall the fundamental (macroscopic) relationship \cite{knoop2017introduction}  in a traffic link (Fig.~\ref{fig:FundamentalTranspRelation}):
\begin{equation}
f_{ij} = k_{ij} u^{avg}_{ij}(k_{ij}) \label{eq:flow_calc}
\end{equation}
where $k_{ij}$ [in vehicles/km] is the density of cars in the link $i \rightarrow j$, $f_{ij}$ [in vehicles/h] is the flow, and $u^{avg}_{ij}$  [in km/h] is the average speed in the link. Note that the average vehicle speed is generally dependent on the traffic density $k_{ij}$. For instance, at higher vehicle densities, the traffic flow typically decreases due to congestion effects. In this paper, however, we restrict the vehicle flow on each edge of the road network to remain within the free-flow capacity ($f_{ij}^{free}$). Under this assumption, the average speed on each link can be treated as independent of the vehicle density ($k_{ij}$). Next, the previous relationship can also be rewritten as 
\begin{equation}
f_{ij} = \left( k_{ij} d_{ij}\right) \left( u^{avg}_{ij}/d_{ij} \right) = \left( k_{ij} d_{ij}\right)/{T_{ij}}.\label{eq:OhmsLaw_transport}
\end{equation}
The term $k_{ij} d_{ij}$ denotes the total number of vehicles on the link $i\rightarrow j$ at any given time, while $T_{ij}=\left( u^{avg}_{ij}/d_{ij} \right)^{-1}$ represents the expected travel time for the link.

    \begin{figure}
    \centering
    \begin{subfigure}[t]{0.5\textwidth}
        \centering
        \includegraphics[height=1.15in]{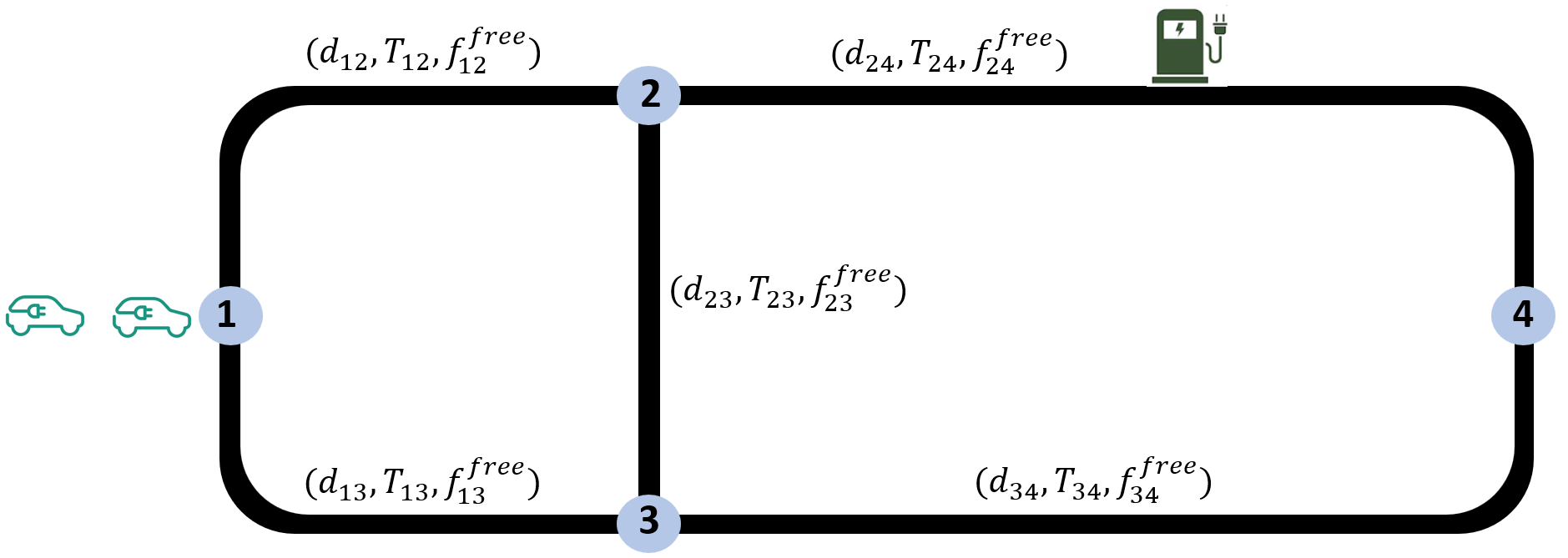}
        \caption{}
    \end{subfigure}
    
    \begin{subfigure}[t]{0.5\textwidth}
        \centering
        \includegraphics[height=1.25in]{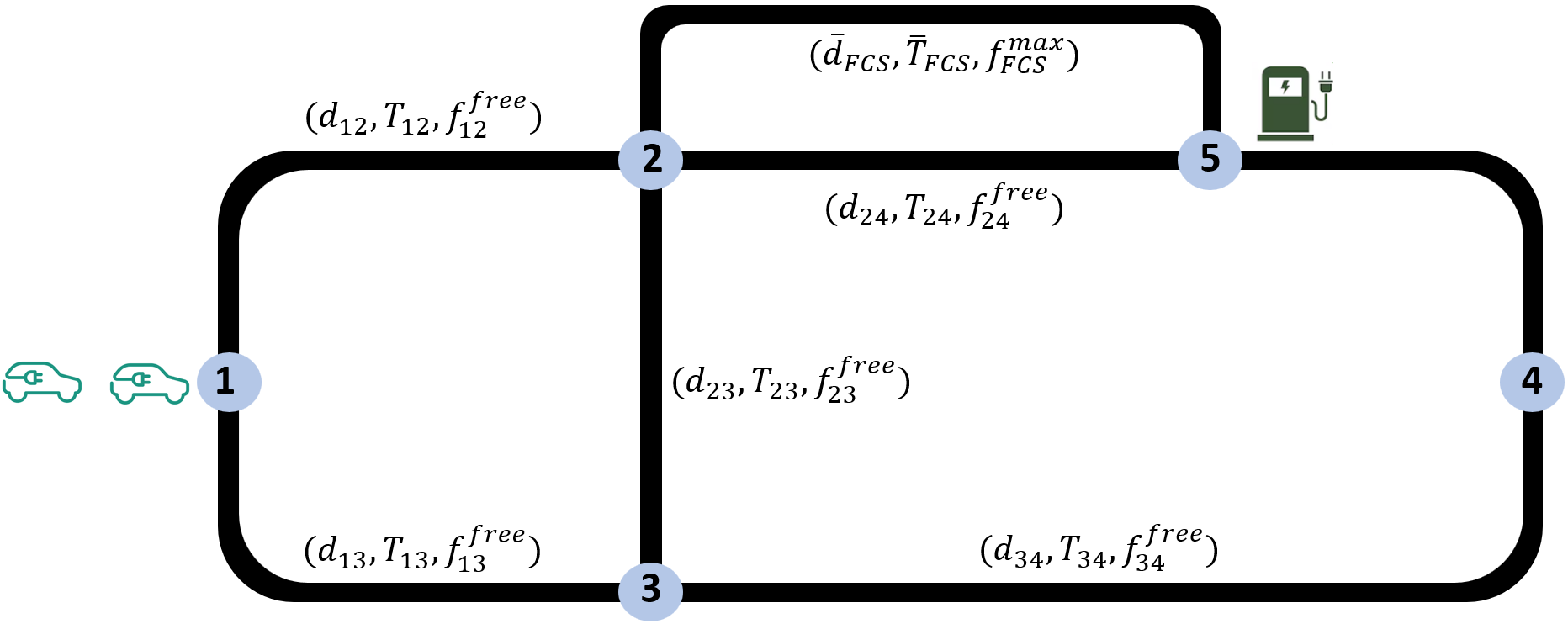}
        \caption{}
    \end{subfigure}
    \begin{subfigure}[t]{0.5\textwidth}
        \centering
        \includegraphics[height=1.3in]{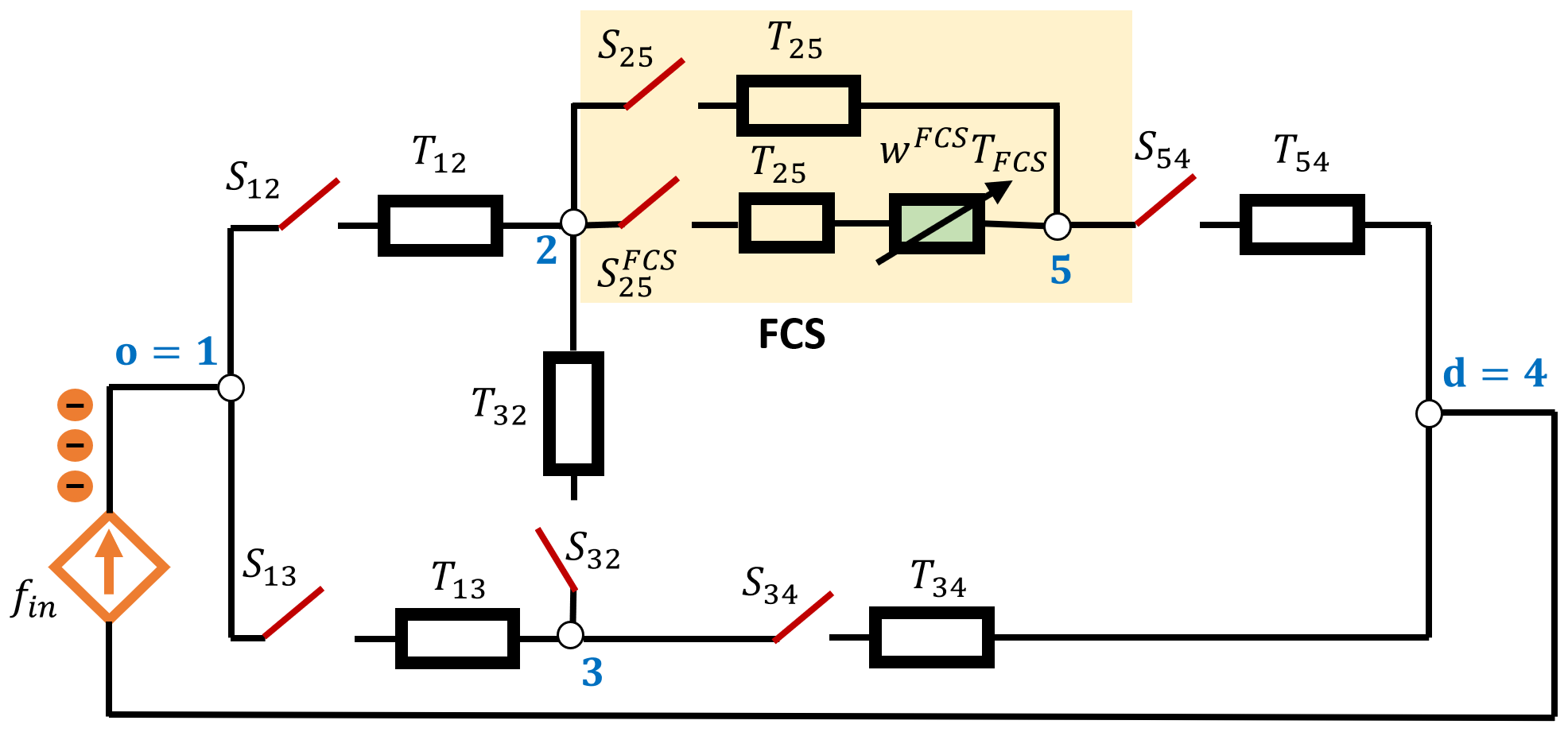}
        \caption{} 
    \end{subfigure}
    \caption{(a) Transportation network used for emergency evacuation, where node 1 represents the origin and node 4 corresponds to the destination. 
        (b) Augmented transportation network that introduces an artificial node 5 to represent the charging station. The two links connecting nodes 2 and 5 capture the alternative decision of either stopping to charge or bypassing the charging station. 
        (c) Equivalent circuit model (ECM) proposed in this work, where resistors correspond to travel time and electrical current represents vehicle flow, with charged particles depicting vehicles. The binary switch variables $S_{ij}$, obtained from the optimization process, determine which network links are activated for evacuation routing.}
    \label{fig:ECM-T-4nodes}
\end{figure}
Consider the node $j$ with three links and corresponding flows $f_{1j},f_{2j},f_{j3}$ (Fig.~\ref{fig:ECM-link}). Links $f_{1j}$ and $f_{2j}$ are "incoming" links, while link $f_{j3}$ is an "outgoing link". They must fulfill the flow balance constraint:
\begin{equation}
    f_{1j} + f_{2j} = f_{3j}. \label{eq:TrafficFlowBalance}
\end{equation}
It should be noted that the relations described above are derived under steady-state traffic conditions. The fundamental traffic equations \eqref{eq:OhmsLaw_transport}--\eqref{eq:TrafficFlowBalance} share an analogous structure with Ohm's law \footnote{Recall that $I = V/R$, where $V$ denotes voltage, $R$ represents resistance, and $I$ is the electric current.} and Kirchhoff's current law (KCL) (see Fig.~\ref{fig:parallel_comparison}), where 
\begin{itemize}
\item Traffic node $j$ is equivalent to an electric node $j$.
\item Traffic link $i \rightarrow j$ is equivalent to an electric branch $ i \rightarrow j $.
\item Vehicle flow $f_{ij}$  is equivalent to current $I_{ij}$.
\item The number of vehicles in edge $d_{ij}$ is equivalent to voltage drop between node $i$ and $j$ ($V_{ij})$. 
\item Travel time $T_{ij}$ is equivalent to a resistance $R_{ij}$.
\end{itemize}
\subsection{ECM with FCS only} \label{section:ECM_model}
Fig.~\ref{fig:ECM-link} illustrates the equivalent circuit that models the flow between nodes $i$ and $j$, reflecting the relationships described in \eqref{eq:OhmsLaw_transport}--\eqref{eq:TrafficFlowBalance}. The circuit representation also includes a switch $S_{ij}\in\{0,1\}$, which is used to control whether current (i.e., traffic flow) is permitted or blocked along the corresponding link.  

Fig.~\ref{fig:ECM-T-4nodes} shows an example transportation network consisting of four links and a single charging station located between nodes 2 and 4. To aid in the derivation of the ECM representation, an artificial node 5 is introduced. This node is connected to node 2 through two separate links and to node 4, as illustrated in Fig.~\ref{fig:ECM-T-4nodes}b. The two links characterized by the parameters $(d_{25}, T_{25}, f^{free}_{25})$ and $(\bar{d}_{FCS}, \bar{T}_{FCS}, f^{max}_{FCS})$ represent the alternative choices available when traveling from node 2 to node 4 via node 5: either bypassing the charging station or stopping to recharge. Here, $\bar{d}_{FCS} = w^{FCS}d_{FCS} - d_{25}$ represents the net distance gained at the FCS, i.e., the additional range provided by the charger (integer multiple of $d_{FCS}$) minus the distance traveled on the corresponding edge. Similarly, $\bar{T}_{FCS} = w^{FCS} T_{FCS} + T_{25}$ denotes the total time spent on that edge, consisting of an integer multiple of the charging time $T_{FCS}$ and the travel time $T_{25}$ for the edge. Here, $w^{FCS}$ takes values from the positive integer set. This formulation can be extended to accommodate any number of FCS by introducing the vectors $\mathbf{\bar{\mathbb{D}}}_{FCS}=[\bar{d}_{FCS,(i,j)}]$ and $\mathbf{\bar{\mathbb{T}}}_{FCS}=[\bar{T}_{FCS,(i,j)}]$. Fig.~\ref{fig:ECM-T-4nodes}c illustrates the corresponding ECM representation consisting of 5 nodes and 7 links. Node $o=1$ serves as the origin and is connected to a current source that injects a continuous flow of vehicles during the evacuation process. The current can then propagate through different branches, which correspond to the available road links in the transportation network. Flow along a branch $(i,j)$ occurs only when the associated switch $S_{ij}$ is activated, indicating that the corresponding path is selected for evacuation. In the case of the two links connecting nodes 2 and 5, the decision to stop at the charging station is governed by the switch $S_{25}^{FCS}$. 
\begin{assumption}\label{assume:no_splitting}
    We assume that, for each origin--destination pair defined as $m=(o,d,r^0_{od})\in\mathcal{OD}\times \mathcal{R}$, all vehicles travel along a single route without splitting into multiple paths. Furthermore, every edge in the transportation network is assumed to be unidirectional.
\end{assumption}

\textbf{Decision variables:}
In what follows, we will treat the following variables as decision variables:
\begin{itemize}
    \item $\mathbf{S}=[S_{12}, S_{13},...]^{\top} \in \mathbb{B}^{N_E} $ is a set of boolean variables that indicate if traffic is allowed to flow in a given non-charging edge. Here, $\mathbb{B}=\{0,1\}$ is a Boolean set and $N_E$ represents edges without the charging stations.
    \item $\mathbf{S}^{FCS}=[S_{12}^{FCS}, S_{13}^{FCS},...]^{\top} \in \mathbb{B}^{N_{FCS}}$ is a set of boolean variables that indicate if traffic is allowed to flow in a given charging edge.
    \item $\mathbf{w}^{FCS}=[w_{12}^{FCS}, w_{13}^{FCS},...]^{\top} \in \mathbb{I}^{N_{FCS}}$ is a set of integer variables that indicate total time spent at the FCS as a multiple of $T_{FCS}$.
    \item $\mathbf{r} = [r_1, r_2, \ldots]^{\top} \in \mathbb{R}^{N_{S}}$ denotes a set of real-valued auxiliary variables representing the available driving range of the vehicle at each node.
\end{itemize}
\textbf{Transportation flow matrix:}
We apply KCL to the nodes of the network shown in Fig.~\ref{fig:ECM-T-4nodes}c, where electrical current is interpreted as vehicle flow scaled by the corresponding link switches. For instance, applying KCL at node 1 yields
\begin{align}
    f_{12} + f_{13} = f_{in}.  \label{eq:flow_temp1}
\end{align}
However, utilizing Assumption~\ref{assume:no_splitting}, we know that
\begin{align}
    f_{12} = f_{in}S_{12},\; f_{13} = f_{in}S_{13} .\label{eq:flow_temp2}
\end{align}
Therefore, utilizing \eqref{eq:flow_temp1} and \eqref{eq:flow_temp2}, we get
\begin{align}
    f_{in} (S_{12} + S_{13}) = f_{in}.\label{eq:flow_temp}
\end{align}
Similarly, applying KCL to all nodes of Fig.~\ref{fig:ECM-T-4nodes}c allows us to obtain the following relations:
\begin{subequations}    
\begin{align}
  (\text{node}\; 1): \quad  &&  f_{in} (S_{12} + S_{13}) = f_{in}, \label{eq:node1}\\
   (\text{node}\; 2): \quad  &&   f_{in}  (S_{12}+S_{32}-S_{25}-S_{25}^{FCS}) =0, \\
   (\text{node}\; 3):    \quad      && f_{in}  (S_{13}-S_{32}-S_{34}) =0,\\
   (\text{node}\; 4):   \quad      &&f_{in}  ( S_{54}+S_{34}) = f_{in},\\
  (\text{node}\; 5):   \quad       && f_{in}  (S^{FCS}_{25} +S_{25} - S_{54} ) = 0. \label{eq:node5}
\end{align}
\end{subequations}
This can be compactly rearranged into a matrix form:
\begin{equation}
    \mathbf{A}_{KCL} \mathbf{S} + \mathbf{A}_{KCL}^{FCS} \mathbf{S}^{FCS} = \mathbf{B}_{KCL} \label{eq:network}
\end{equation}
where $\mathbf{A}_{KCL} \in \mathbb{R}^{ N_{S}   \times N_E}$, $\mathbf{A}_{KCL}^{FCS} \in \mathbb{R}^{N_S  \times N_{FCS}}$, $\mathbf{B}_{KCL} \in \mathbb{R}^{ N_{S}  \times 1}$ are matrices that depend on the configuration of the circuit/transportation network. 

\textbf{Travel time:}
The overall travel time in the network can be estimated by adding the resistances corresponding to the links utilized in the circuit. In other words,
\begin{equation}
    t_{evac} (\mathbf{S}, \mathbf{S}^{FCS},\mathbf{w}^{FCS}) = \mathbb{T}^{\top}\mathbf{S}+T_{FCS}\mathbf{w}^{FCS ^{\top}} \mathbf{S}^{FCS}.
\end{equation}
\begin{figure}
        \centering
        \includegraphics[width=1\linewidth]{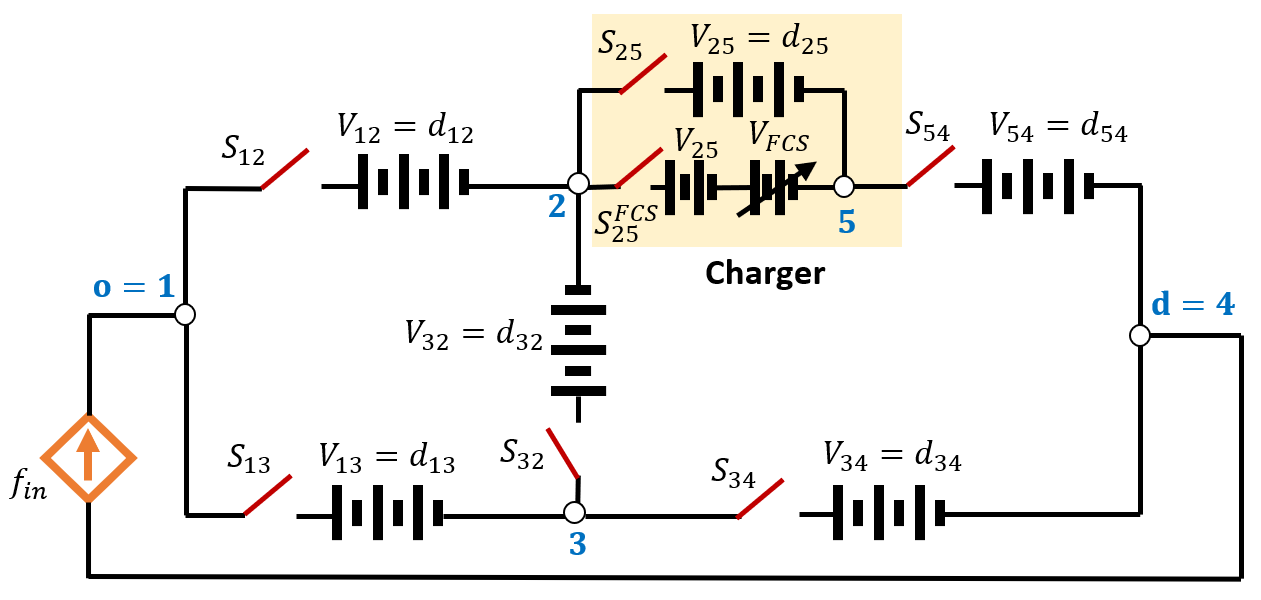}
        \caption{ECM model proposed in this study, where the voltage source represent travel distance of the corresponding edge.}
        \label{fig:voltage_circuit}
    \end{figure}  
\begin{figure}
        \centering
        \includegraphics[width=0.7\linewidth]{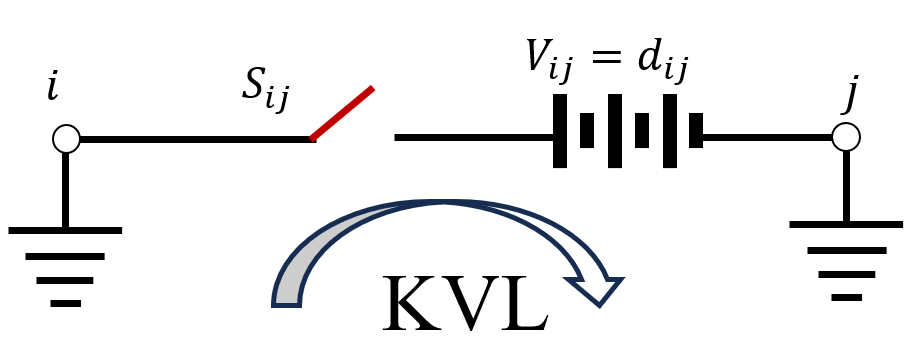}
        \caption{ECM model proposed in this study, where the voltage source represent travel distance of the corresponding edge.}
        \label{fig:kvl_loop}
    \end{figure}  
\textbf{Driving range:} The initial driving range of a vehicle ($r^0$) may be insufficient to reach the destination, necessitating a stop at a charging station. To determine feasibility, we built a second circuit. As illustrated in Fig.~\ref{fig:voltage_circuit}, each edge is modeled as a voltage source whose magnitude corresponds to the edge length. A controllable voltage source is also introduced to represent the additional range obtained at an FCS.

By applying KVL across a loop that contains each edge\footnote{Each edge only appears in one KVL loop.} (Fig.~\ref{fig:kvl_loop}) and multiplying by the corresponding binary switch variable, we obtain the driving-range constraint for the entire network, as shown below:

\begin{subequations}
    \begin{align}
    r_1 = r^0, \\
    S_{12}\left( r_1 - r_2 - d_{12}\right) = 0, \\
    S_{13}\left( r_1 - r_3 - d_{13}\right) = 0, \\
    \vdots \nonumber \\
    S_{25}\left(r_2 - r_5 - d_{25}\right) = 0, \\
    S^{FCS}_{25}\left(r_2 - r_5 - d_{25} + w^{FCS}d_{FCS}\right) = 0.
    \end{align}
\end{subequations}
where $r^0$ corresponds to the initial range of the vehicle starting at the origin (at node 1). This can be written in a general way as follows:
\begin{subequations}\label{eq:KVL}
    \begin{align}
        &\quad\quad\quad r_i = r^0,\; \text{when}\; i\in \mathcal{OD} \\
        &\quad\quad S_{ij}\left( r_i - r_j - d_{ij}\right) = 0,\\
        &\quad\quad S^{FCS}_{ij}\left( r_i - r_j - d_{ij} + w^{FCS}_{ij}d_{FCS,(ij)}\right) = 0,\\
        &\quad\quad\quad\quad\quad\quad \forall \; i,j\in \mathbb{S}.
    \end{align}
\end{subequations}

Using the ECM representation of the transportation network, the flow in each branch can be determined to minimize the upper bound on the total evacuation time. This results in the following optimization problem: 
\begin{subequations}\label{eq:FCS_optimization}
  \begin{align}
 \mathbb{P}^{FCS}:&   \min_{\mathbf{S},\mathbf{S}^{FCS}, \mathbf{w}^{FCS},\mathbf{r}} t_{evac} \left(\mathbf{S}, \mathbf{S}^{FCS},\mathbf{w}^{FCS}\right)\label{eq:FCS_cost}  \\
 & \text{s.t.}  \; \; \;  \eqref{eq:network},\eqref{eq:KVL}, \\
 & f_{in}\mathbf{S}\le\mathbb{F}^{free}, \; f_{in}\mathbf{S}^{FCS}\le\mathbb{F}^{max}_{FCS}. \label{eq:FCS_flow_capacity}
\end{align} 
\end{subequations}
 
The equation \eqref{eq:FCS_flow_capacity} ensures that the free-flow capacity of the roads and the maximum service rate of the FCS are not violated. 
\begin{remark}
     The problem given in \eqref{eq:FCS_optimization} can be solved using Integer programming, which can be tackled using modern solvers such as Gurobi \cite{gurobi}.
\end{remark}
\begin{proposition}\label{prop:unique_path}
    Under Assumption~\ref{assume:no_splitting}, given a single od-pair, the set of enabled switches obtained as the solution of \eqref{eq:FCS_optimization} defines a unique path between the origin and destination nodes in the network.
\end{proposition}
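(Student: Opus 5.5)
The plan is to read the KCL constraints \eqref{eq:network} as a unit-flow conservation system on the augmented directed graph. Dividing each row by $f_{in}>0$, each switch $S_{ij}\in\{0,1\}$ (including the charging-edge switches $S^{FCS}_{ij}$) is a 0/1 flow on a unidirectional edge, by Assumption~\ref{assume:no_splitting}. The constraints then say three things: the origin has net outflow $1$, the destination has net inflow $1$, and every other node has inflow equal to outflow. In particular, at the origin \eqref{eq:node1} forces exactly one outgoing switch to be enabled. More generally, at any intermediate node the number of enabled incoming switches equals the number of enabled outgoing switches.

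\textbf{Existence of a path.} Start at $o$, take its unique enabled outgoing edge, and repeatedly leave each reached intermediate node along an enabled outgoing edge. Such an edge exists because the node has enabled inflow $\ge 1$ and, by balance, the same enabled outflow. Since the graph is finite, the walk either reaches $d$ or revisits a node. The standard flow-decomposition argument makes this precise: any 0/1 solution of the conservation system decomposes into one $o$–$d$ path plus a collection of edge-disjoint directed cycles.

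\textbf{Uniqueness.} We must show that the optimal solution contains no cycles and no extra branching, so that the enabled set is exactly one simple path. First suppose the enabled set contained a directed cycle $C$. Setting all switches on $C$ to zero preserves \eqref{eq:network}, since each cycle node loses one unit of inflow and one of outflow. It preserves the capacity constraints \eqref{eq:FCS_flow_capacity}, since switches only decrease. It also preserves \eqref{eq:KVL}, because disabling a switch removes a constraint rather than adding one, and the values $r$ on the remaining path are unchanged. The cost $t_{evac}$, however, decreases by $\sum_{(i,j)\in C}T_{ij}+T_{FCS}\sum w^{FCS}_{ij}$, which is strictly positive because free-flow times are positive. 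This contradicts optimality.

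Second, cycles whose edges sum to zero cannot be created through the KVL relations. Along a cycle, summing $r_i-r_j$ gives zero, so the constraints would require the net range change around the cycle to vanish. Such cycles might be feasible, for example by looping through a charger, but they still cost strictly positive time, so the first step already excludes them. Finally, the two parallel links between a node and its artificial charging node (bypass versus charge) cannot both be enabled. Both would leave the same node, and with no cycles present the outflow at that node would be $2$ while its inflow is at most $1$, violating balance.

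The main obstacle is the cycle-removal step, which needs $T_{ij}>0$ for every edge. We also need to ensure that removing a cycle does not break range feasibility elsewhere, for example when a charging loop appears to ``fund'' the remaining path. I would handle this by noting that the KVL equalities along the retained $o$–$d$ path involve only edges on that path. Each edge appears in only one KVL loop, so the $r$ values on the path stay consistent and nonnegative after the cycle is deleted. If that consistency could fail, it would occur at the node where the cycle attaches to the path, and that is where I would check the argument carefully.
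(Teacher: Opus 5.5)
Your proof is correct, and it takes a different and more careful route than the paper's. The paper asserts directly from no-splitting and KCL that every node has at most one enabled incoming and at most one enabled outgoing switch, and reads off ``non-branching chains.'' It then dismisses cycles because they ``would not improve'' \eqref{eq:FCS_cost}, and uses the single source and sink to obtain one chain. You instead treat \eqref{eq:network} as a unit $o$--$d$ flow and decompose it into one path plus edge-disjoint cycles. You then cancel cycles by strict improvement, explicitly checking that \eqref{eq:network}, \eqref{eq:FCS_flow_capacity} and \eqref{eq:KVL} still hold. Your ordering buys two things.
First, the paper's degree bounds do not follow from KCL alone. Balance permits in- and out-degree $2$ at a node where a cycle attaches, e.g.\ the path $o\to a\to d$ together with the cycle $a\to b\to c\to a$. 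So those bounds are only valid after cycles are removed, which your argument supplies first.
Second, your explicit positivity hypothesis ($T_{ij}>0$, $T_{FCS}>0$, $w^{FCS}_{ij}\ge 1$) shows that \emph{every} optimal solution is a simple path. ``Would not improve'' only guarantees that \emph{some} optimal solution is, since zero-cost cycles could otherwise survive in a solver's output. The paper's version is shorter but leaves both points implicit.

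The worry in your last paragraph is unfounded, and you can drop it. You keep the same vector $\mathbf{r}$, and zeroing switches only deletes equations from \eqref{eq:KVL}. Everything that remains, along with any sign condition $\mathbf{r}\ge 0$, therefore holds verbatim. A cycle cannot ``fund'' the path, because $r$ is single-valued at each node and KVL around a closed loop forces zero net range change at the attachment node.
Also, \eqref{eq:node1} yields exactly one enabled outgoing switch only because the example's origin has no incoming edges. In general the origin row reads outflow minus inflow equals one, but your decomposition argument does not need the stronger claim.
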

\begin{proof}
    The proof is straightforward. Utilizing Assumption~\ref{assume:no_splitting} and equations \eqref{eq:flow_temp1}-\eqref{eq:flow_temp}, we know that each edge is either selected (carries the full OD flow $f_{in}$) or not selected (carries 0) through the activation of the corresponding switch. Next, consider $\delta_{out,i}$ and $\delta_{in,i}$ to be the set of outgoing and incoming edges for the node $i$, respectively. Next, we observe that
    \begin{align*}
        \sum_{j\in \delta_{out,i}}(S_{ij}+S^{FCS}_{ij})\le 1,\quad \sum_{j\in \delta_{in,i}}(S_{ji}+S_{ji}^{FCS})\le 1.
    \end{align*}
    Therefore, the selected edges form non-branching chains \cite{botnan2024persistent}. This is defined by every node within the path having a degree of 2 (i.e., one edge in, one edge out). We can also rule out cyclic branches as well, since it would not improve the cost function given in \eqref{eq:FCS_cost}. Now, since there is only one origin (source) and one destination (sink) in the network, there must be a single directed chain starting at the origin and ending at the destination.
\end{proof}
\begin{remark}
    The results of Proposition~\ref{prop:unique_path} ensure that a unique path exists between any two nodes in the network. Consequently, in Fig.~\ref{fig:ECM-T-4nodes}c, if the evacuation path passes through nodes 2 and 5, then either $S_{25}$ or $S_{25}^{FCS}$ is activated at any given time, but never both. This configuration allows open switches to account for voltage discrepancies between links connecting the same two nodes.
\end{remark}
\begin{figure}
        \centering
        \includegraphics[width=1\linewidth]{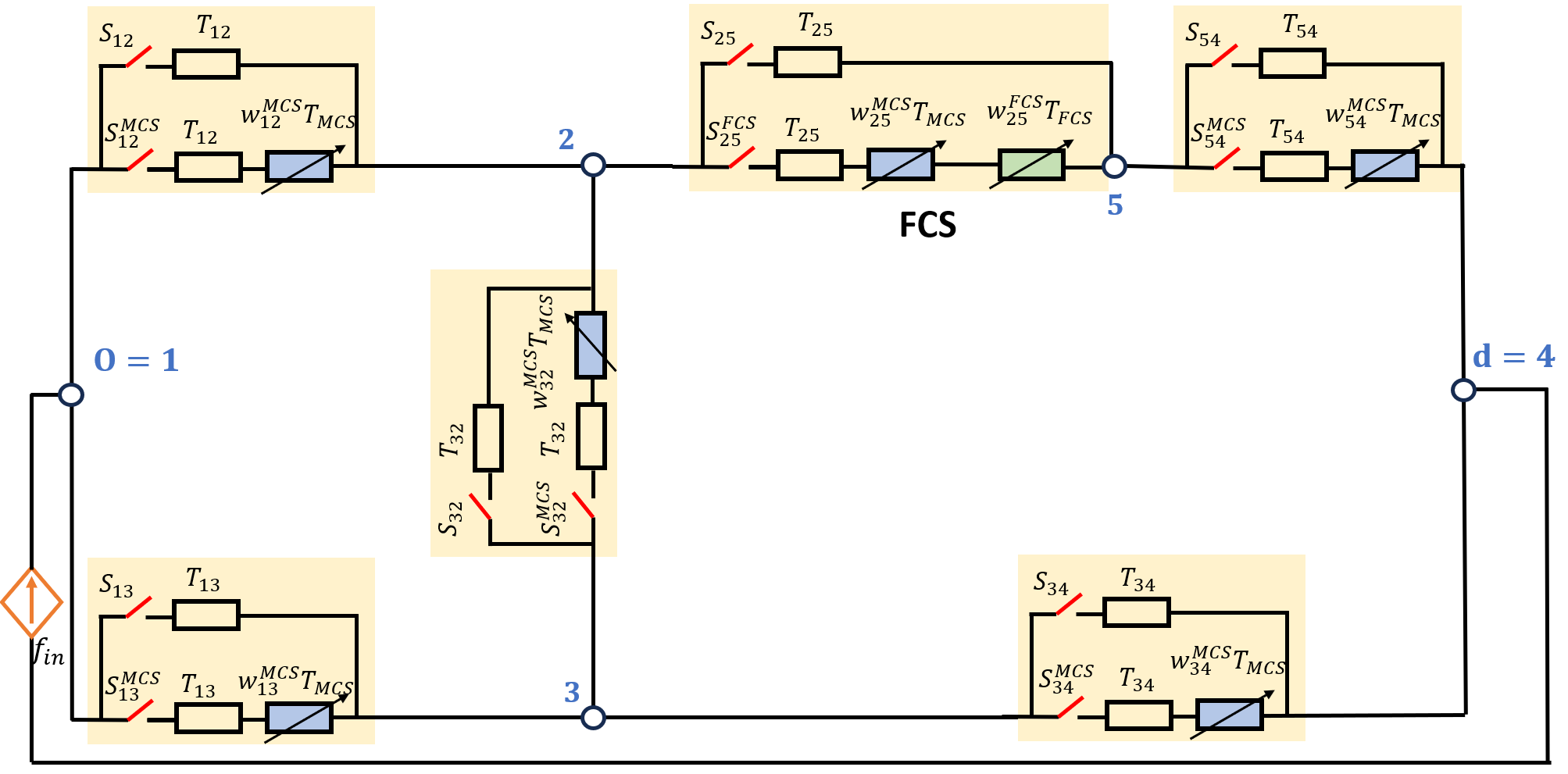}
        \caption{ECM model for the transportation network incorporating both FCS and MCS.}
        \label{fig:FCS_MCS}
    \end{figure}
\begin{figure*}[ht]
        \centering
    \begin{subfigure}[t]{0.49\textwidth}
        \centering
        \includegraphics[width=\textwidth]{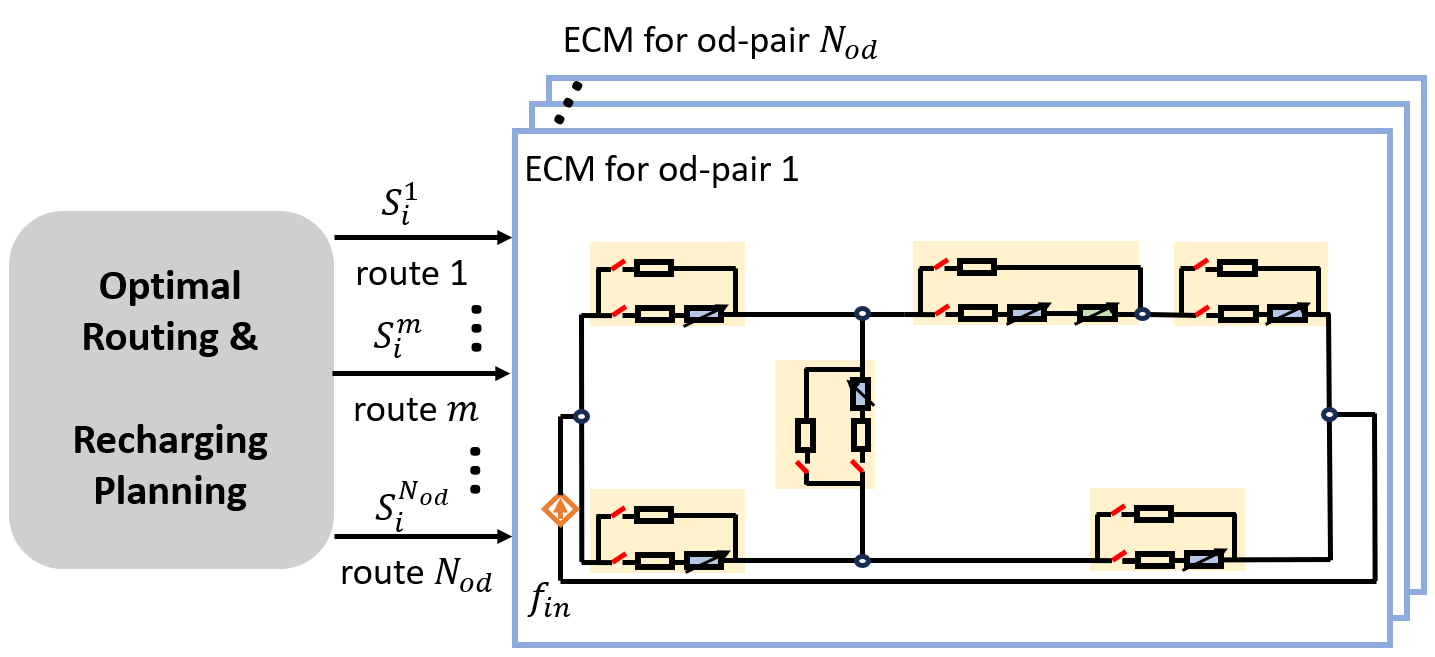}
        \caption{}
    \end{subfigure}
    \begin{subfigure}[t]{0.49\textwidth}
        \centering
        \includegraphics[width=\textwidth]{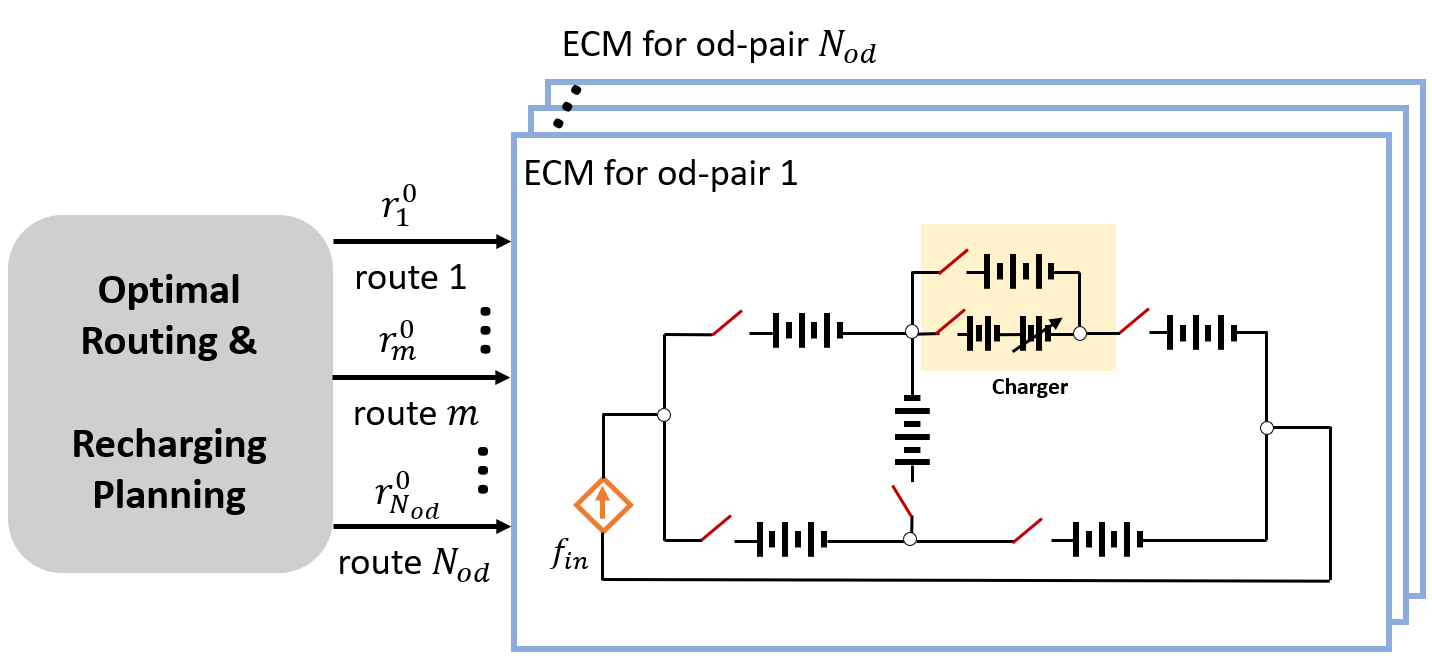}
        \caption{}
    \end{subfigure}
    \caption{(a) ECM model framework for multiple od-pairs. (b) ECM model framework with driving range constraints for multiple od-pairs.}
    \label{fig:multiple_od_ECM_model}
\end{figure*}
\subsection{ECM with FCS and MCS}
In this section, we discuss the incorporation of MCS units into the transportation network to support emergency evacuation. Similar to the methodology used for FCS, we introduce additional parallel edges to the existing network edges. The edge parameters of these new MCS-related edges, denoted by $(\bar{d}_{MCS,(i,j)}, \bar{T}_{MCS,(i,j)}, f^{max}_{MCS,(i,j)})$, encode the charging characteristics at the MCS. Specifically, $\bar{d}_{MCS,(i,j)} = w^{MCS}_{ij} d_{MCS,(i,j)} - d_{ij}$ represents the net distance gained at the MCS, i.e., the additional driving range provided by the charger (integer multiple of $d_{MCS,(i,j)}$) minus the distance traveled on the original edge. Likewise, $\bar{T}_{MCS,(i,j)} = w^{MCS}_{ij} T_{MCS} + T_{ij}$ denotes the total time spent on that edge, which includes an integer multiple of the charging time $T_{MCS}$ and the travel time $T_{ij}$ (see Fig.~\ref{fig:FCS_MCS}). The parameter $f^{max}_{MCS}$ indicates the maximum number of EVs per hour that the MCS can serve.

Each new MCS edge is associated with a binary switch $S^{MCS}_{ij}$ that represents the decision to utilize or bypass the MCS on that edge. One significant advantage of deploying MCS is the ability to position multiple units at the same location to accommodate higher vehicle inflows. Consequently, unlike FCS, the service rate of an MCS-equipped location can be scaled based on the number of available MCS units within the network. Furthermore, the mobility of MCS allows them to supplement existing FCS sites during periods of high queuing and extended waiting times.

Therefore, the additional decision variables required to incorporate MCS into the optimization are as follows:
\begin{itemize}
    \item $\mathbf{S}^{MCS}=[S_{12}^{MCS}, S_{13}^{MCS},...]^{\top} \in \mathbb{B}^{N_{MCS}}$ is a set of boolean variables that indicate if traffic is allowed to flow in a given charging edge.
    \item $\mathbf{w}^{MCS}=[w_{12}^{MCS}, w_{13}^{MCS},...]^{\top} \in \mathbb{I}^{N_{MCS}}$ is a set of integer variables that indicate total time spent at MCS as a multiple of $T_{MCS}$.
    \item $\mathbf{v}^{F-MCS}=[v_{12}^{F-MCS}, v_{13}^{F-MCS},...]^{\top} \in \mathbb{I}^{N_{F-MCS}} \subset \mathbb{I}^{N_{MCS}}$ is a set of integer variables that indicate the total number of MCS units supporting each FCS network edge.
    \item $\mathbf{v}^{I-MCS}=[v_{12}^{I-MCS}, v_{13}^{I-MCS},...]^{\top} \in \mathbb{I}^{N_{I-MCS}} \subset \mathbb{I}^{N_{MCS}}$ is a set of integer variables that indicate the total number of MCS units present at each network edge without an FCS.
    \item $\mathbf{r} = [r_1, r_2, \ldots]^{\top} \in \mathbb{R}^{N_{S}}$ denotes a set of real-valued auxiliary variables representing the available driving range of the vehicle at each node.
\end{itemize}
Now, similar to \eqref{eq:network}, the {\bf transportation flow matrix} including MCS configuration can be written in matrix form as follows:
\begin{equation}
    \mathbf{A}_{KCL} \mathbf{S} + \mathbf{A}_{KCL}^{FCS} \mathbf{S}^{FCS} + \mathbf{A}_{KCL}^{MCS} \mathbf{S}^{MCS} = \mathbf{B}_{KCL} \label{eq:MCS_network}
\end{equation}
where $\mathbf{A}_{KCL} \in \mathbb{R}^{ N_{S}   \times N_E}$, $\mathbf{A}_{KCL}^{FCS} \in \mathbb{R}^{N_S  \times N_{FCS}}$, $\mathbf{A}_{KCL}^{MCS} \in \mathbb{R}^{N_S  \times N_{MCS}}$, $\mathbf{B}_{KCL} \in \mathbb{R}^{ N_{S}  \times 1}$ are matrices that depend on the configuration of the circuit/transportation network.

Similarly, the {\bf travel time} for the new configuration can be calculated by summing up the resistances that are used in the circuit. Therefore,
\begin{align}
    &t_{evac} (\mathbf{S}, \mathbf{S}^{FCS},\mathbf{w}^{FCS},\mathbf{S}^{MCS},\mathbf{w}^{MCS}) = \nonumber \\ &\quad\quad\quad\quad\mathbb{T}^{\top}\mathbf{S}+T_{FCS}\mathbf{w}^{FCS^{\top}} \mathbf{S}^{FCS}+ \nonumber\\&\quad\quad\quad\quad\quad\quad\quad\quad T_{MCS}\mathbf{w}^{MCS^{\top}}  \mathbf{S}^{MCS}.
\end{align}
Similarly, the {\bf driving range constraints} can be formulated by applying KVL across each network edge as follows:
\begin{subequations}\label{eq:KVL_MCS}
    \begin{align}
        &\quad r_i = r^0,\; \text{when}\; i\in \mathcal{OD} \\
        & S_{ij}\left( r_i - r_j - d_{ij}\right) = 0,\\
        & S^{MCS}_{ij}\left( r_i - r_j - d_{ij} + w^{MCS}_{ij}d_{MCS,(ij)}\right) = 0,\\
        & S^{FCS}_{ij}\left( r_i - r_j - d_{ij} + w^{FCS}_{ij}d_{FCS,(ij)}+w^{MCS}_{ij}d_{MCS,(ij)}\right) \nonumber \\
        & \quad\quad\quad= 0,\\
        &\forall \; i,j\in \mathbb{S}.
    \end{align}
\end{subequations}

Let $\bar{S}:=\{\mathbf{S},\mathbf{S}^{FCS}, \mathbf{w}^{FCS},\mathbf{S}^{MCS}, \mathbf{w}^{MCS}, \mathbf{v}^{MCS}, \mathbf{r} \}$. Hence, the problem of determining the branch flows that minimize the upper bound on the evacuation time across the network can be formulated as the following optimization problem:
\begin{subequations}
    \begin{align}
 &\mathbb{P}^{FCS+MCS}:\nonumber\\
 &   \quad \min_{\bar{S}} t_{evac} \left(\mathbf{S}, \mathbf{S}^{FCS},\mathbf{w}^{FCS},\mathbf{S}^{MCS},\mathbf{w}^{MCS}\right)  \label{eq:cost}\\
 & \quad\text{s.t.}  \; \; \;  \eqref{eq:MCS_network},\eqref{eq:KVL_MCS}, \\
 & \quad f_{in}\mathbf{S}\le\mathbb{F}^{free}, \label{eq:road_capacity}\\
 & \quad f_{in}\mathbf{S}^{FCS}\le\mathbb{F}^{max}_{FCS}+f^{max}_{MCS}\mathbf{v}^{F-MCS}, \label{eq:FCS_MCS_capacity} \\
 & \quad f_{in}\mathbf{S}^{MCS}\le f^{max}_{MCS}\mathbf{v}^{I-MCS}, \label{eq:MCS_capacity} \\
 & \sum_i \left(v_i^{F-MCS} + v_i^{I-MCS} \right)\le N_{MCS}. \label{eq:MCS_number}
\end{align} 
\end{subequations}
The equation \eqref{eq:road_capacity} ensures that the free flow capacity of the roads is not violated. Similarly, \eqref{eq:FCS_MCS_capacity} ensures that the charging station's service rate is not violated. Here, $\mathbf{v}^{F-MCS}$ are integer decision variables used to increase the number of MCS units to support the FCS at a particular location. Similarly, \eqref{eq:MCS_capacity} ensures that regular network edges without an FCS can charge EVs by making use of available MCS units deployed to that branch. Here, $\mathbf{v}^{I-MCS}$ represents the number of MCS units deployed to that network edge. Finally, \eqref{eq:MCS_number} ensures the number of MCS deployed does not exceed the available MCS units. We can also aim to minimize the total number of MCS deployed by modifying \eqref{eq:cost} to the following:
\begin{align}
    \min_{\bar{S}} t_{evac} \left(\mathbf{S}, \mathbf{S}^{FCS},\mathbf{w}^{FCS},\mathbf{S}^{MCS},\mathbf{w}^{MCS}\right) \nonumber \\
    + \kappa \sum_i \left(v_i^{F-MCS} + v_i^{I-MCS} \right)
\end{align}
where $\kappa>0$ is a tuning parameter.
\begin{remark}
    The results of Proposition~\ref{prop:unique_path} can be readily extended to networks with both FCS and MCS to establish the uniqueness of the path between the origin and the destination. This is evident from the fact that network constraints given in \eqref{eq:MCS_network} ensure
    \begin{align*}
        \sum_{j\in \delta_{out,i}}(S_{ij}+S^{FCS}_{ij}+S^{MCS}_{ij})\le 1,\\
        \sum_{j\in \delta_{in,i}}(S_{ji}+S_{ji}^{FCS}+S_{ji}^{MCS})\le 1.
    \end{align*}
    
\end{remark}
\subsection{Multiple origins and destinations}
We extend the previous example to networks with multiple origins and destinations. 
Let $m=(o,d,r^0_{od}) \in \mathcal{OD} \times \mathcal{R}$ denote a feasible od-pair. We define $f_{in,m}$ as the flow injected at the origin associated with pair $m$. To address scenarios with multiple od-pairs, we employ the principle of superposition. Specifically, we examine the od-pair $m=(o,d,r^0_{od})$ while setting the inflows of all other od-pairs to zero ($f_{in,l}=0$ for $l \neq m$). In this way, the contribution of each current source is evaluated individually: 
\begin{equation}
     \mathbf{A}_{KCL} \mathbf{S}_m + \mathbf{A}_{KCL}^{FCS} \mathbf{S}^{FCS}_m + \mathbf{A}_{KCL}^{MCS} \mathbf{S}^{MCS}_m = \mathbf{B}_{KCL,m} \label{eq:network_super_pos}
\end{equation}
where $\mathbf{S}_m$, $\mathbf{S}_m^{FCS}$, and $\mathbf{S}_m^{MCS}$ denote the switch variables corresponding to non-charging routes, fixed charging station routes, and mobile charging station routes, respectively, for vehicles associated with the od-pair $m$. The vector $\mathbf{B}_{KCL,m}$ is a column vector representing the injection of a flow $f_{in,m}$ at the node corresponding to the origin $o$ of the od-pair $m \in \mathcal{OD} \times \mathcal{R}$. Fig.~\ref{fig:multiple_od_ECM_model}a presents the physical interpretation of the ECM framework for multiple od-pairs. Each od-pair has its own copy of ECM to find the unique optimal path and recharging strategy to reach the destination from the origin. The {\bf travel time} for each od-pair can be calculated as follows:
\begin{align}
    &t_{evac,m} (\mathbf{S}_m, \mathbf{S}_m^{FCS},\mathbf{w}^{FCS}_m,\mathbf{S}_m^{MCS},\mathbf{w}^{MCS}_m) = \nonumber \\ &\quad\quad\quad\quad\mathbb{T}^{\top}\mathbf{S}_m+T_{FCS}\mathbf{w}_m^{FCS^{\top}} \mathbf{S}^{FCS}_m + \nonumber\\&\quad\quad\quad\quad\quad\quad\quad\quad T_{MCS}\mathbf{w}_m^{MCS^{\top}} \mathbf{S}^{MCS}_m.
\end{align}
\begin{figure*}[ht]
        \centering
    \begin{subfigure}[t]{0.49\textwidth}
        \centering
        \includegraphics[width=\textwidth]{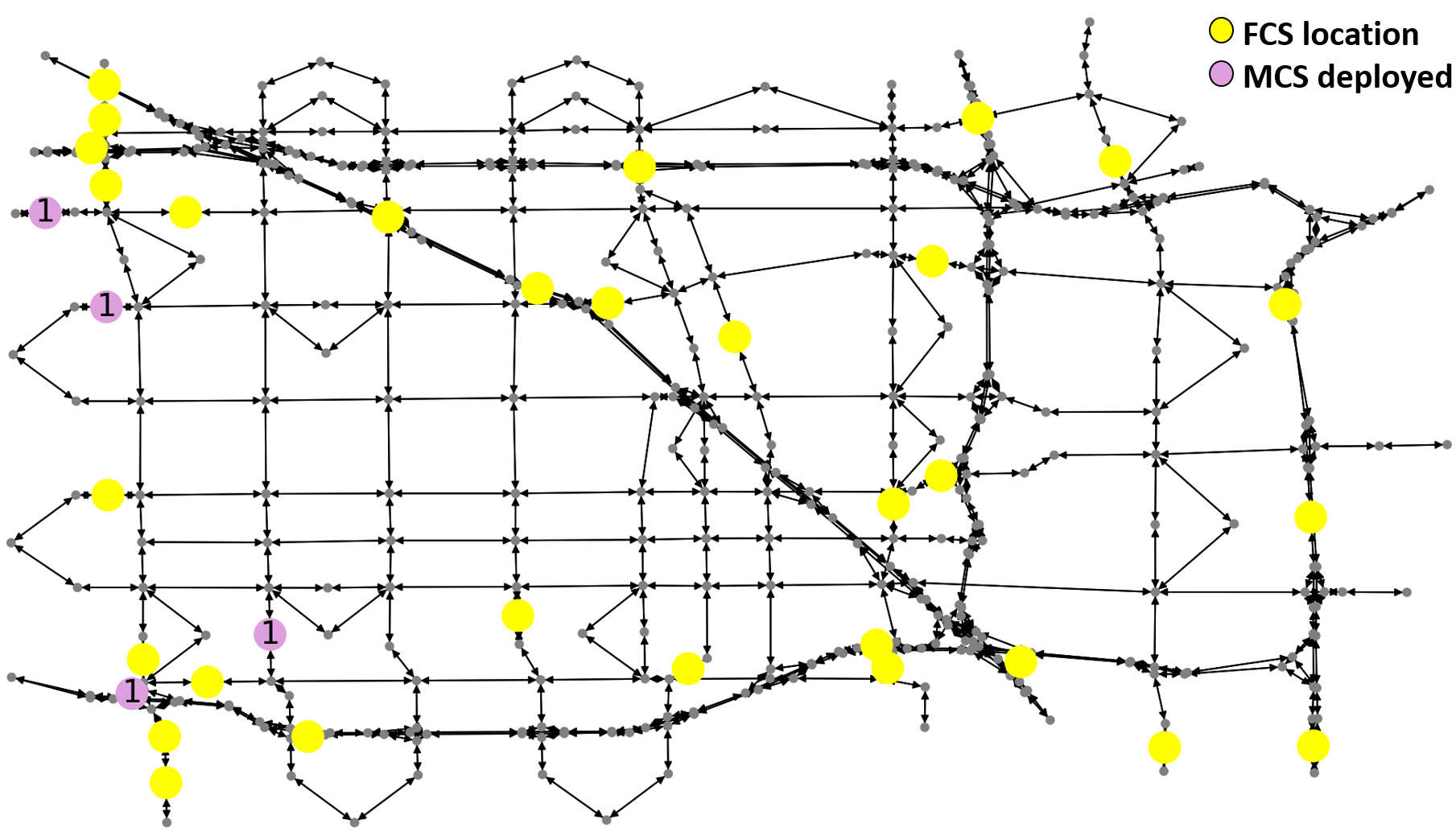}
        \caption{}
    \end{subfigure}
    \begin{subfigure}[t]{0.49\textwidth}
        \centering
        \includegraphics[width=\textwidth]{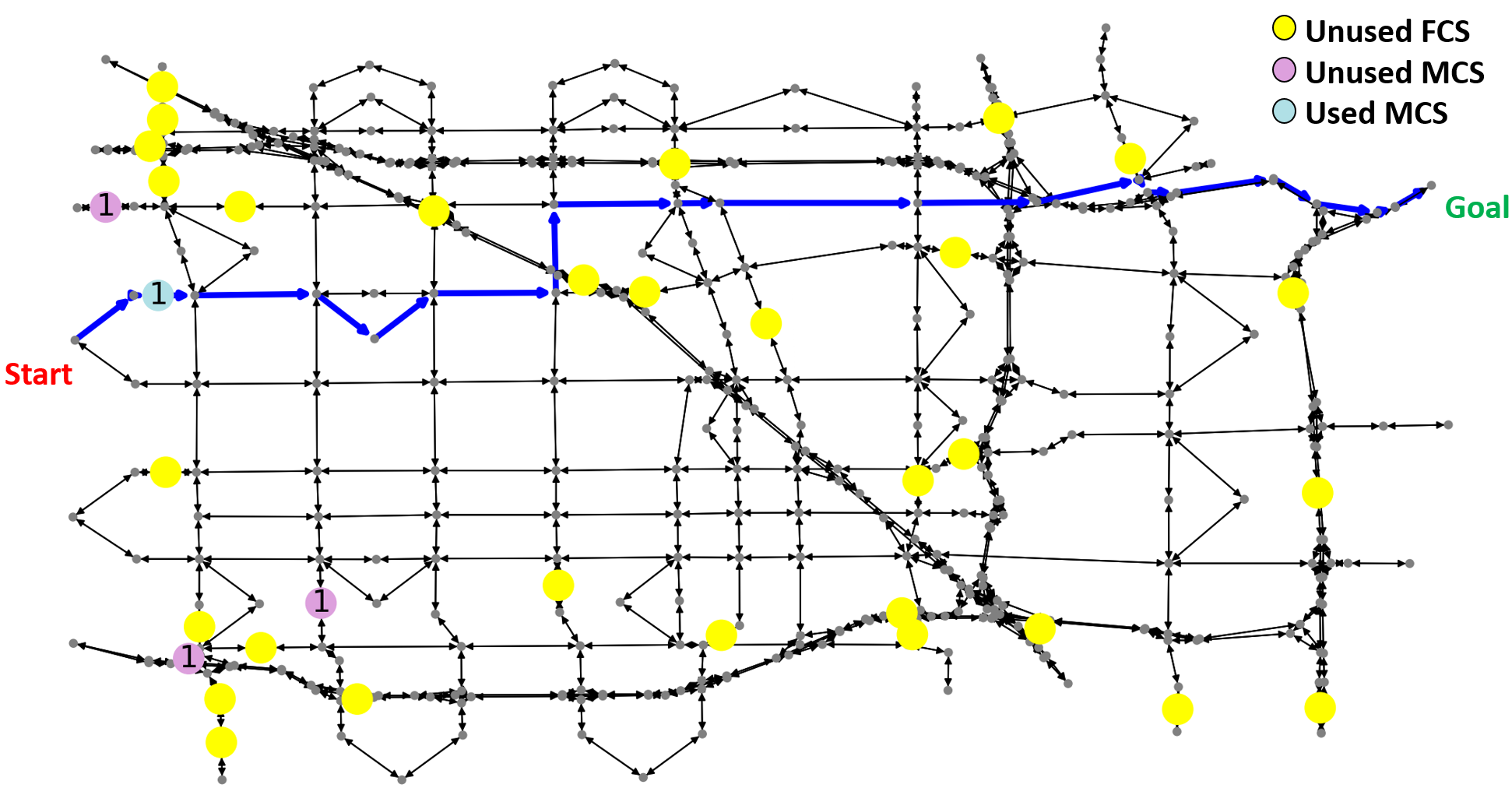}
        \caption{}
    \end{subfigure}
    \caption{(a) Anaheim network showing the locations of FCS (shown as yellow nodes) and deployed MCS (shown as pink nodes). (b) Optimal route (shown in blue) for the od-pair, $m=(22,11)$ and $\mathbb{P}^{avg}$. The vehicles of this od-pair utilize the one deployed MCS (shown in sky blue with "1" written inside to represent 1 MCS) to recharge their EVs before driving towards the goal.}
    \label{fig:anaheim}
\end{figure*}
Similarly, the {\bf driving range constraints} for each od-pair can be written as follows:
\begin{subequations}
    \begin{align}
        &\quad r_{i,m} = r^0_m,\; \text{when}\; i\in \mathcal{OD} \\
        & S_{ij,m}\left( r_{i,m} - r_{j,m} - d_{ij}\right) = 0,\\
        & S^{MCS}_{ij,m}\left( r_{i,m} - r_{j,m} - d_{ij} + w^{MCS}_{ij,m}d_{MCS,(ij)}\right) = 0,\\
        & S^{FCS}_{ij,m}\left( r_{i,m} - r_{j,m} - d_{ij} + w^{FCS}_{ij,m}d_{FCS,(ij)}\right)+ \nonumber \\
        &\quad\quad\quad S^{FCS}_{ij,m}\left(w^{MCS}_{ij,m}d_{MCS,(ij)}\right)= 0,\\
        &\forall \; i,j\in \mathbb{S}.
    \end{align}
\end{subequations}
Fig.~\ref{fig:multiple_od_ECM_model}b presents the physical interpretation of the ECM framework with driving range constraints for multiple od-pairs. The total flow on each transportation link can be obtained by aggregating the contributions from all od-pairs, i.e., $\sum_m f_{in,m}$. This combined flow must satisfy the free-flow capacity constraint associated with the non-charging link, given by
\begin{equation}
    \sum_m f_{in,m}\mathbf{S}_m \leq \mathbb{F}^{free} \label{eq:flow_capacity}
\end{equation} and the total charging capacity of FCS and MCS 
\begin{align}
    \sum_m f_{in,m}\mathbf{S}_m^{FCS} \leq \mathbb{F}^{max}_{FCS}+f^{max}_{MCS}\mathbf{v}^{F-MCS},\\
    \sum_m f_{in,m}\mathbf{S}_m^{MCS} \leq f^{max}_{MCS}\mathbf{v}^{I-MCS}. \label{eq:charger_capacity}
\end{align}
Let $\hat{S}:=\{\mathbf{S}_m,\mathbf{S}_m^{FCS}, \mathbf{w}^{FCS}_m,\mathbf{S}_m^{MCS}, \mathbf{w}^{MCS}_m, \mathbf{v}^{MCS}, \mathbf{r}_m \}$. We can now compute the optimal paths for each od-pair using the following formulation :
\begin{subequations}
    \begin{align}
         &\mathbb{P}^{FCS+MCS+m}: \min_{\bar{S},\gamma} \;\gamma  \\
          & \quad \quad \quad \quad \quad\text{s.t.}  \; \; \;  \eqref{eq:network_super_pos}-\eqref{eq:charger_capacity} \label{eq:central0}\\
          & \quad \quad \quad \quad \quad \quad \quad t_{evac,m} \le \gamma \\
          & \quad \quad \quad \quad \quad \quad \quad m \in \mathcal{OD} \times \mathcal{R} \label{eq:central2}
    \end{align}
\end{subequations}
where $t_{evac,m}$ denotes the estimated evacuation time associated with od-pair $m$, and $\gamma$ is a slack variable. The objective of this formulation is to reduce the maximum evacuation time across all od-pairs. The first group of constraints computes the flows and corresponding evacuation times for each od-pair. 
The final set of constraints applies the superposition principle to determine the total flow on every link and guarantees that this flow remains within the free-flow capacity limits. The parameters $f_{in,m}$ correspond to the input flows and can be adjusted as part of the analysis. So far, the focus has been on minimizing the worst-case evacuation time among all evacuees ($J^{max}=\gamma$). In the following, we introduce two additional performance metrics: the \textit{average evacuation time} ($J^{avg}$) and the \textit{maximum deviation in evacuation time} ($J^{\Delta}$).
\begin{align}
    J^{avg}= \frac{1}{M}\sum_m t_{evac,m}, \quad J^{\Delta} = \max_{m} |t_{evac,m}-J^{avg}|
\end{align}
where $M$ denotes the total number of od-pairs contained in the set $\mathcal{OD} \times \mathcal{R}$. The evacuation objective can then be formulated as a weighted combination of these three performance measures, resulting in the following general formulation for optimal evacuation planning:
\begin{align}
  \mathbb{P}:  &\min w_{max} J^{max} + w_{avg} \theta J^{avg} +w_{\Delta}(1-\theta) J^{\Delta}  \nonumber \\
    &\text{s.t.} \quad  \eqref{eq:central0}-\eqref{eq:central2} \nonumber     
\end{align}
where $w_{max}, w_{avg}, w_{\Delta} \geq 0$ are user-defined weighting coefficients, and $\theta \in [0,1]$ is a trade-off parameter used to regulate the penalty associated with the worst-case variation. This additional term is introduced to promote fairness in route assignment, since evacuees may be unwilling to accept routes that lead to significantly longer evacuation times. Accordingly, this work investigates three evacuation planning strategies that penalize:
\begin{itemize}
    \item $\mathbb{P}^{max}$: max. evacuation time among all od-pairs ($w_{max}=1, w_{avg}=w_{\Delta}=\theta=0$)
    \item $\mathbb{P}^{avg}$: average evacuation time ($w_{avg}=1, w_{max}=w_{\Delta}=\theta=0$)
    \item $\mathbb{P}^{avg+\Delta}$: average and max. variation in evacuation time ($w_{avg}=w_{\Delta}=1, w_{max}=0$)
\end{itemize}
\begin{remark}
    The results of Proposition~\ref{prop:unique_path} can be easily extended to scenarios containing multiple od-pairs. Here, the uniqueness of the path is established within each network copy for the individual od-pairs.  
\end{remark}
\section{Simulation Results}\label{section:results}
To evaluate the proposed evacuation strategies, we consider the transportation networks of Anaheim \cite{Anaheim_network} and Mariposa, both located in California. All simulations are conducted in Python 3.19 on a Dell workstation equipped with 32~GB of RAM and an Intel(R) i9-13900HX processor running at 2.20~GHz. The optimization problems are solved using the Gurobi 12.0 solver. 
\subsection{Anaheim network (Urban Example)}
The Anaheim transportation network consists of 416 nodes and 914 edges. In this section, we consider an emergency evacuation scenario involving 8 evacuation od-pairs. The free flow capacity of each edge, denoted by ($f^{free}_{ij}$), lies in the range of 1800 to 12600 vehicles/hour \cite{Anaheim_network}. The locations of the FCS in the Anaheim network are obtained from \cite{plugshare}. We consider only those FCS with DC fast-charging capability (see Fig.~\ref{fig:anaheim} a), which can deliver a charging rate of ($d_{FCS}/T_{FCS}$) equal to 80 km/hr to support rapid evacuation. We assume that the state of charge (SOC) of vehicles participating in the emergency evacuation corresponds to an initial driving range ($r_0$) between 10 and 80 km (i.e., $2\%-20\%$ of full battery capacity). The EV population of Orange County, California, where the Anaheim network is located, was approximately 134,000 vehicles in 2024 \cite{Anaheim_survey}. Since Anaheim accounts for roughly one-tenth of the county’s population, a proportional EV population of approximately 13,400 vehicles is assumed for the study area. To avoid the adverse effects of the disaster, we assume that the complete evacuation of EVs across the 8 od-pairs must be achieved within 4 hours. This assumption results in an input flow rate ($f_{in}$) of 420 vehicles per hour at each of the 8 od-pairs. We also use 20 MCS to aid in the evacuation procedures of the network. Each MCS has five charging ports and provides a charging rate of 200 km/h.

Fig.~\ref{fig:anaheim} illustrates the optimal evacuation routes for one of the od-pairs obtained by solving $\mathbb{P}^{avg}$. Fig.~\ref{fig:anaheim}a highlights the location of 4 MCS (shown as pink nodes) deployed by the ECM-based optimization to aid in the evacuation process.  Fig.~\ref{fig:anaheim}b presents the optimal route for one of the 8 od-pairs. In this case, the vehicles do not have sufficient SOC to reach any nearby FCS. Consequently, the ECM-based optimization deploys one MCS, shown as a sky-blue node, near the origin of the od-pair to support the evacuation by enabling on-route charging of the EVs. The optimal evacuation paths for all 8 od-pairs are shown in Fig.~\ref{fig:anaheim_full} in the Appendix.
    
\begin{figure}[ht]
    \centering
    \begin{subfigure}[t]{0.5\textwidth}
        \centering
        \includegraphics[height=1.85in]{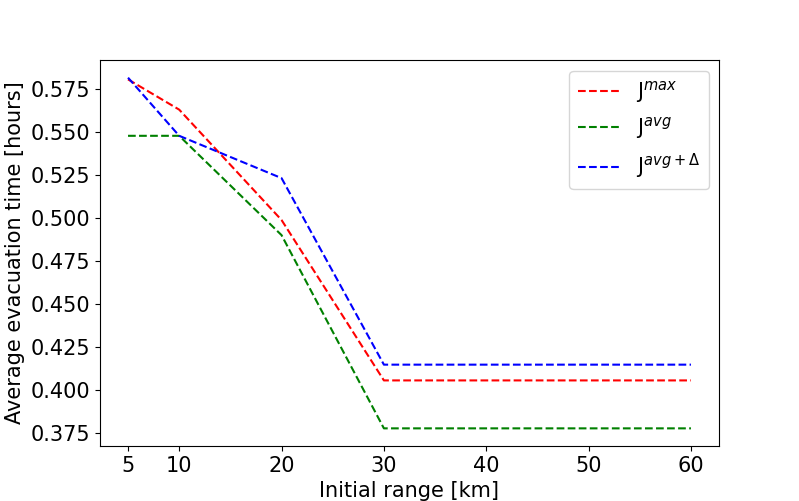}
        \caption{Comparing average $t_{evac}$ for different cost indices.}
    \end{subfigure}
    
    \begin{subfigure}[t]{0.5\textwidth}
        \centering
        \includegraphics[height=1.85in]{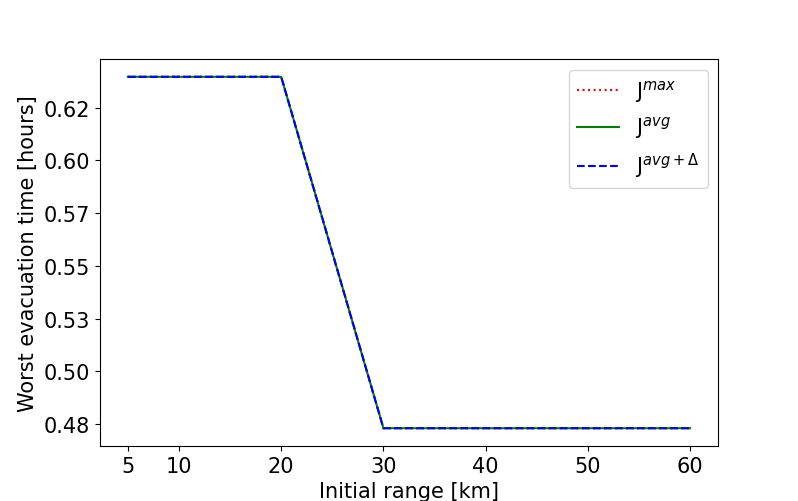}
        \caption{Comparing worst $t_{evac}$ for different cost indices.}
    \end{subfigure}
    \begin{subfigure}[t]{0.5\textwidth}
        \centering
        \includegraphics[height=1.85in]{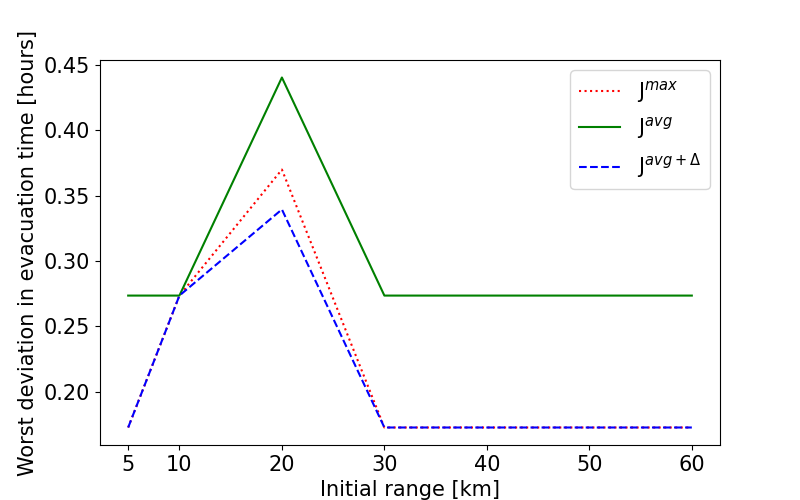}
        \caption{Comparing worst deviation of $t_{evac}$ for different cost indices.} 
    \end{subfigure}
    \caption{ Comparison of different performance indices for the Anaheim network case with respect to the initial range ($r_0$) available to EVs.}
    \label{fig:different_cost_Anaheim}
\end{figure}
Fig.~\ref{fig:different_cost_Anaheim} presents a sensitivity analysis examining the effect of the initial driving range ($r_0$) on the cost functions associated with three evacuation strategies: $J^{\max}$, $J^{\mathrm{avg}}$, and $J^{\mathrm{avg}+\Delta}$. A comparison of the average evacuation times for the different performance indices in Fig.~\ref{fig:different_cost_Anaheim}a shows that using $J^{\mathrm{avg}}$ as the cost function yields the minimum average evacuation time across all values of $r_0$. In particular, at $r_0 = 5$ km, the average evacuation time under $J^{\mathrm{avg}}$ is approximately $4.5\%$ lower than that obtained using $J^{\max}$ and $J^{\mathrm{avg}+\Delta}$. Fig.~\ref{fig:different_cost_Anaheim}b illustrates the worst-case evacuation times for the different performance indices. Notably, these values remain identical regardless of the selected cost function, indicating an inherent physical limitation of the road network in improving the worst evacuation time across all od-pairs. Finally, Fig.~\ref{fig:different_cost_Anaheim}c shows the worst deviation from the average evacuation time, denoted by $\Delta$. As expected, $\Delta$ is minimized when using $J^{\mathrm{avg}+\Delta}$. Specifically, at $r_0 = 20$ km, the value of $\Delta$ under $J^{\mathrm{avg}+\Delta}$ is approximately $8.8\%$ lower than that of $J^{\max}$ and $29\%$ lower than that of $J^{\mathrm{avg}}$. These results demonstrate the effectiveness of $J^{\mathrm{avg}+\Delta}$ in mitigating variability in evacuation times, with the improvement being particularly pronounced at intermediate values of $r_0$.

\begin{figure}[ht]
    \centering
    \begin{subfigure}[t]{0.5\textwidth}
    \centering
    \includegraphics[width=0.9\linewidth]{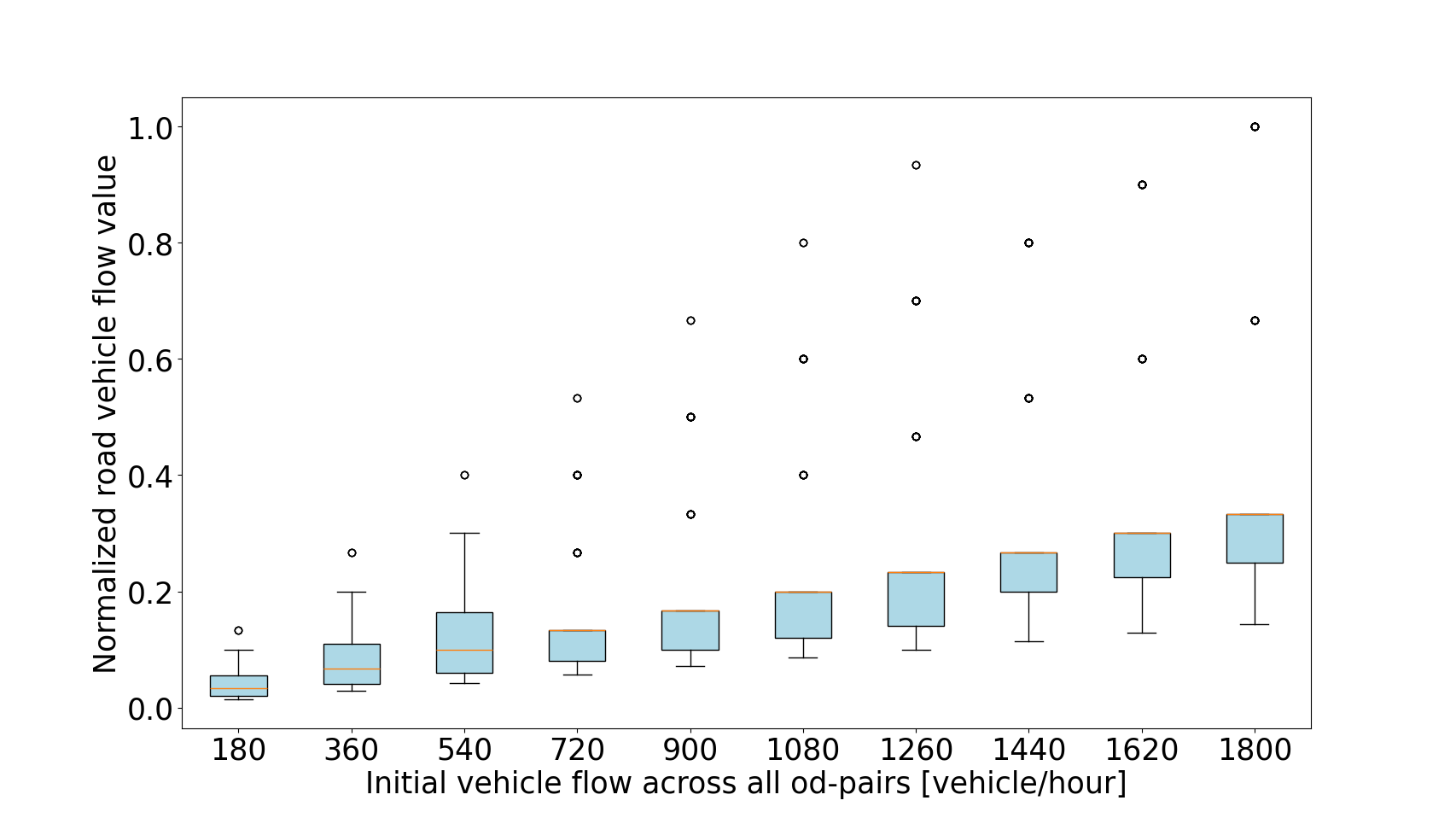}
    \caption{Normalized vehicle flow comparison for different roads of the Anaheim network with respect to the input flow ($f_{in}$). Here, $r_0 = 100$ km.}
    \end{subfigure}
    \begin{subfigure}[t]{0.5\textwidth}
        \centering
        \includegraphics[width=0.9\linewidth]{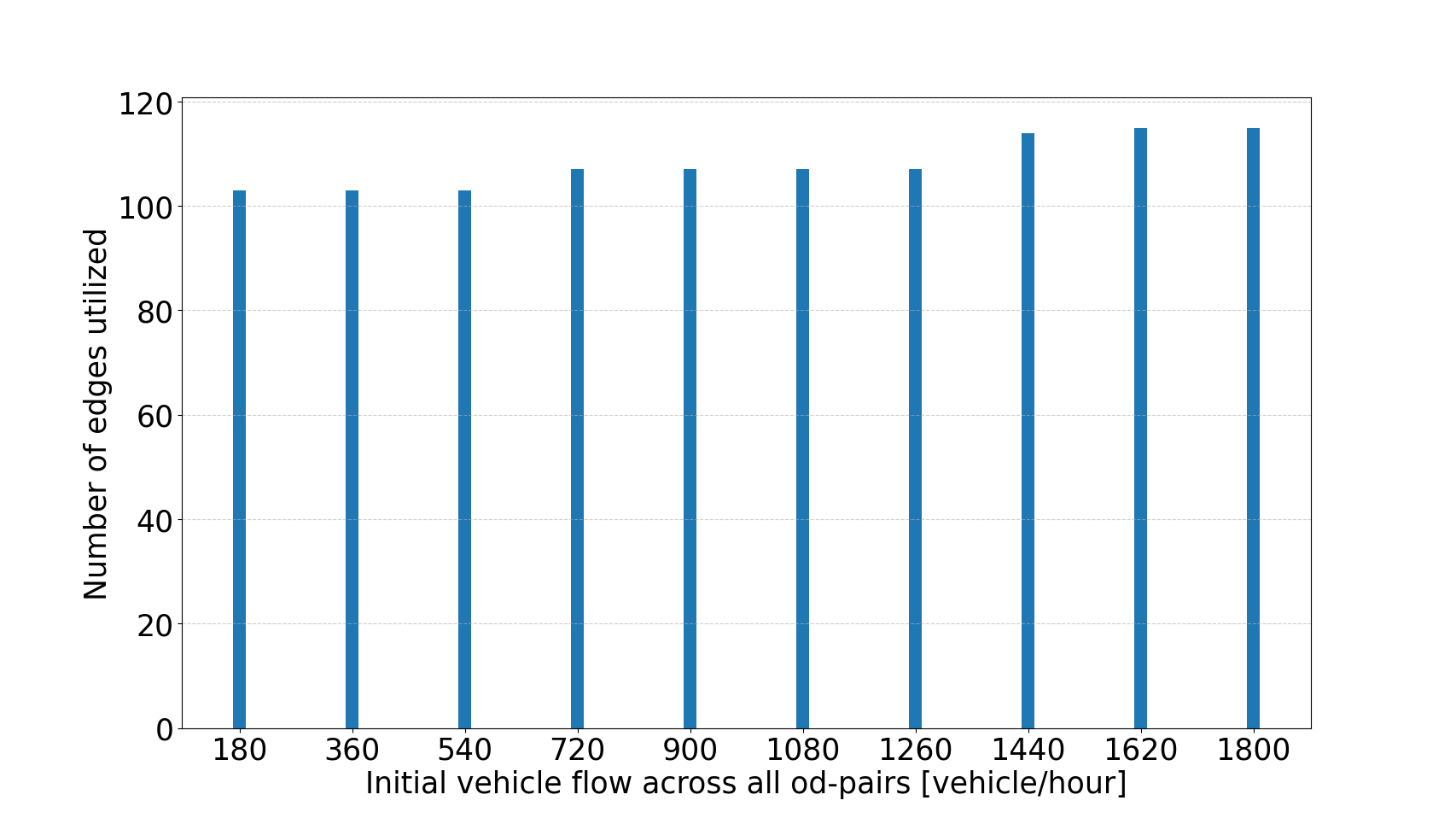}
        \caption{Number of edges used in Anaheim evacuation process for 8 od-pairs with respect to the input flow ($f_{in}$). Here, $r_0 = 100$ km. }
    \end{subfigure}
    \caption{ Sensitivity analysis for different edges for the Anaheim network for various values of input flow ($f_{in}$).}
    \label{fig:Anaheim_edge_numbers}
\end{figure}

Fig.~\ref{fig:Anaheim_edge_numbers} illustrates the impact of the solution of ECM-based optimization with respect to increasing input flow ($f_{in}$) across all od-pairs. Specifically, it showcases the distribution of road utilization and the spatial extent of congestion in the network. The boxplot in Fig.~\ref{fig:Anaheim_edge_numbers}a shows a clear upward shift in the median and range of normalized road vehicle flow as $f_{in}$ increases from 180 to 1800 vehicles/hour, indicating a systematic rise in average edge utilization. At lower demand levels, traffic is concentrated on a small subset of edges with relatively low variability, whereas higher demand scenarios exhibit wider dispersion and a growing number of outliers approaching full capacity, highlighting the emergence of localized congestion. This trend is reinforced by Fig.~\ref{fig:Anaheim_edge_numbers}b, which shows a monotonic increase in the number of edges utilized as demand grows, implying that additional routes are progressively activated to accommodate higher flows. Together, these results suggest that while the network initially absorbs increased demand through underutilized edges, higher demand levels push the system toward widespread utilization of the edges.
\begin{figure}[ht]
    \centering
    \includegraphics[width=0.9\linewidth]{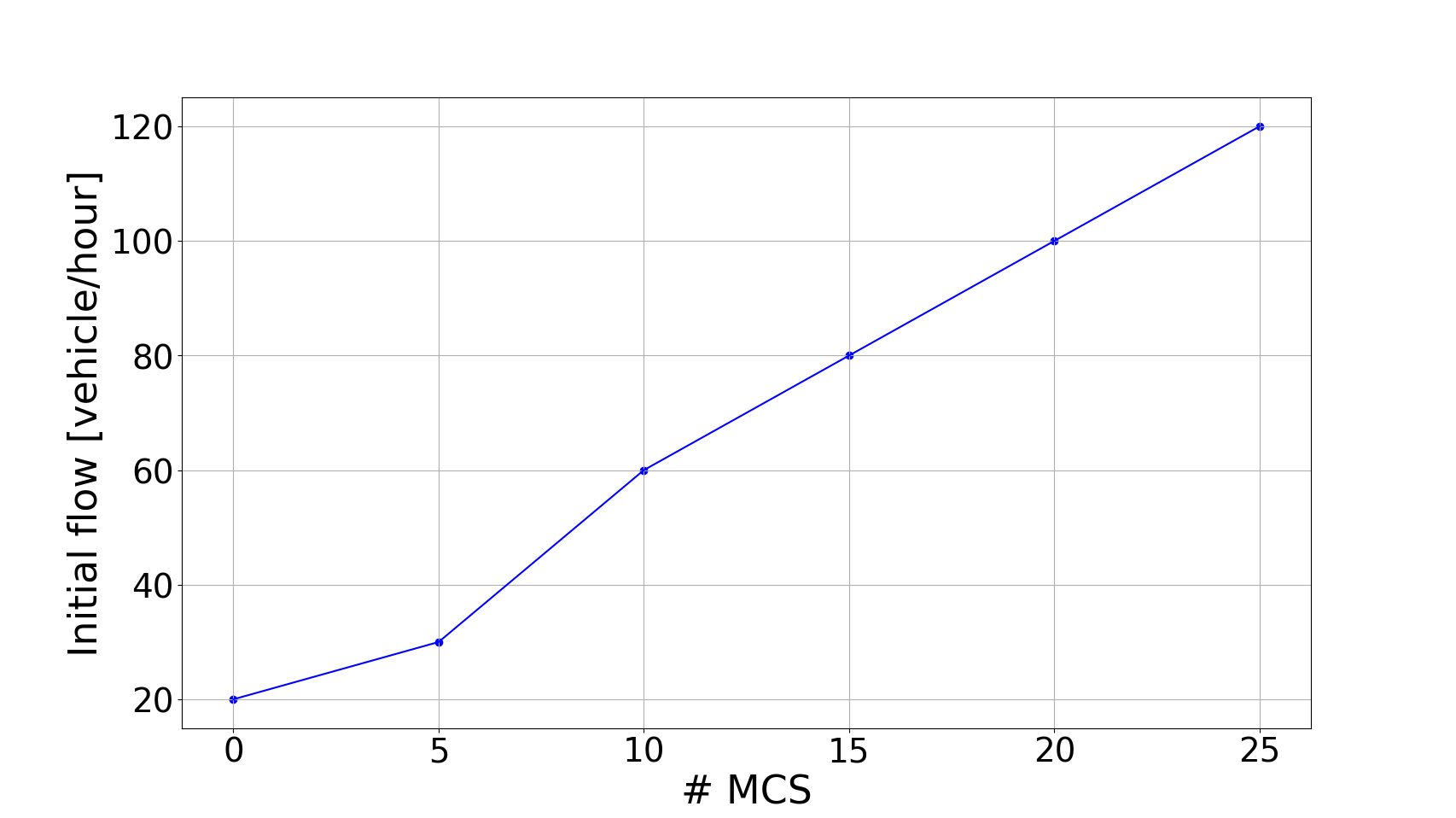}
    \caption{Effect of number of MCS on the initial flow of Anaheim network that can be evacuated without any congestion on roads and queuing on charging stations with 8 od-pairs ($r_0=10$ km). }
    \label{fig:MCS_influence_Anaheim}
\end{figure}
\begin{figure}[ht]
    \centering
    \begin{subfigure}[t]{0.5\textwidth}
        \centering
        \includegraphics[height=1.85in]{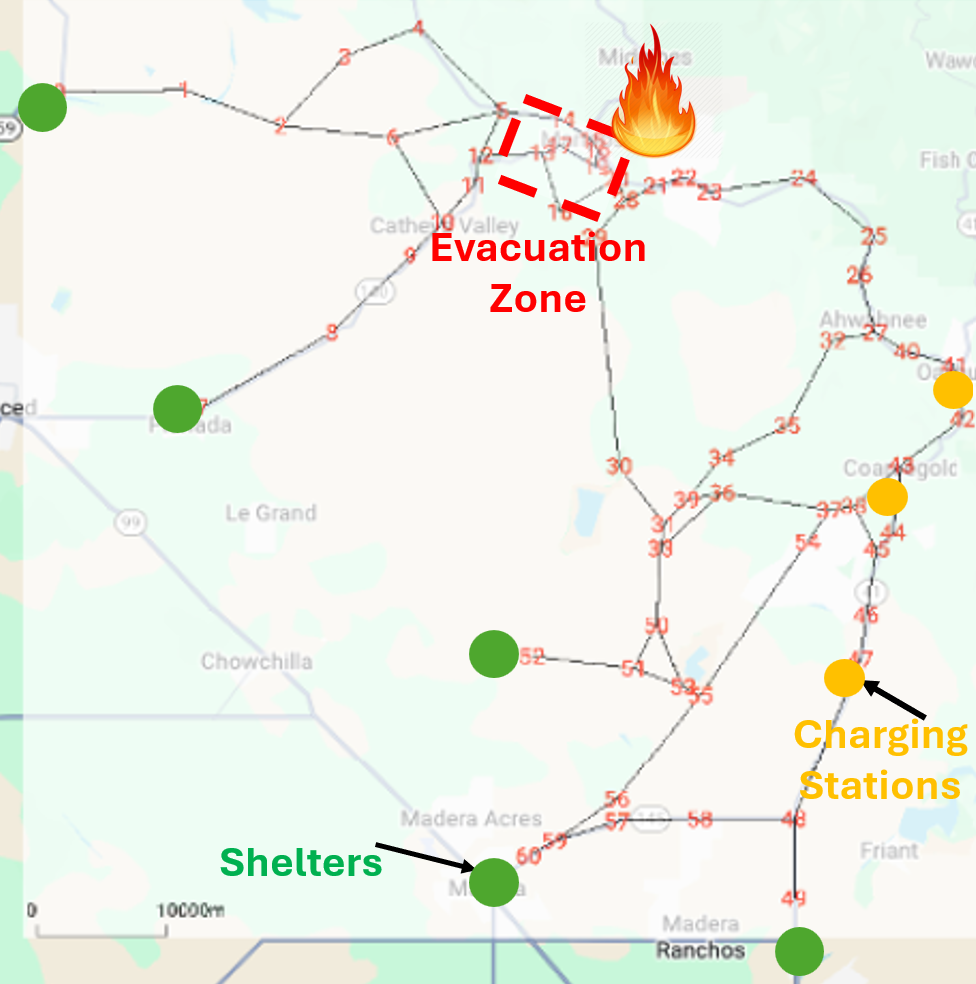}
        \caption{}
    \end{subfigure}
    
    \begin{subfigure}[t]{0.5\textwidth}
        \centering
        \includegraphics[height=1.85in]{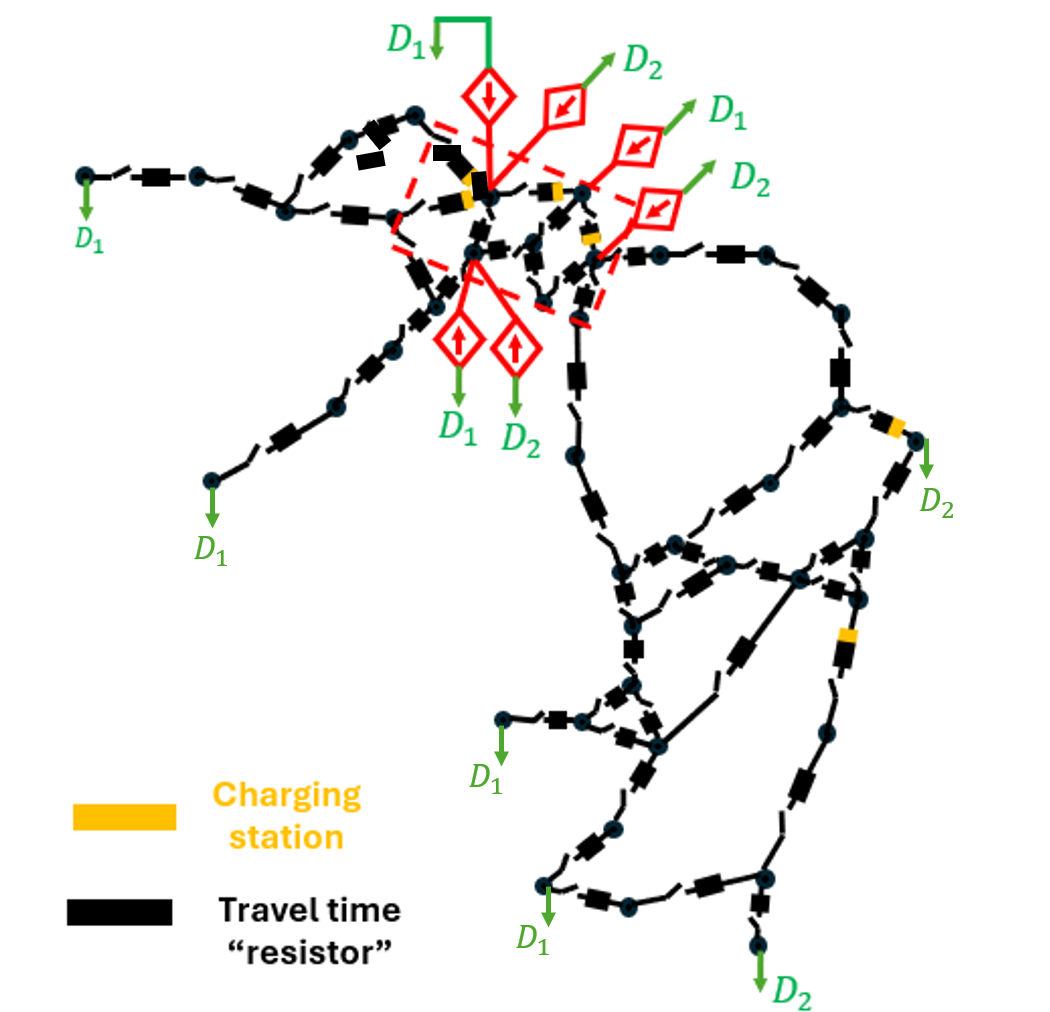}
        \caption{}
    \end{subfigure}
    \caption{(a) Illustration of an evacuation request and transportation network. (b) ECM of the evacuation problem.}
    \label{fig:mariposa_map}
\end{figure}

Fig.~\ref{fig:MCS_influence_Anaheim} illustrates the minimum number of MCS required to complete the evacuation for varying input flow rates ($f_{in}$) across the eight od-pairs of the Anaheim network, assuming an initial driving range of 10 km for all EVs. The results show that for $f_{in}\leq 20$ vehicles/hour, the existing FCS are sufficient to meet charging demands without inducing congestion or queuing within the network. However, when $f_{in}>20$ vehicles/hour, the minimum number of required MCS increases markedly to prevent congestion and excessive waiting times. This behavior arises from the fixed locations and limited service capacities of FCS. In contrast, the flexibility and location-independent charging capability of MCS enable them to dynamically alleviate localized bottlenecks, highlighting their critical role in high-demand evacuation scenarios.  


\begin{figure}[ht]
    \centering
    \includegraphics[width=1\linewidth]{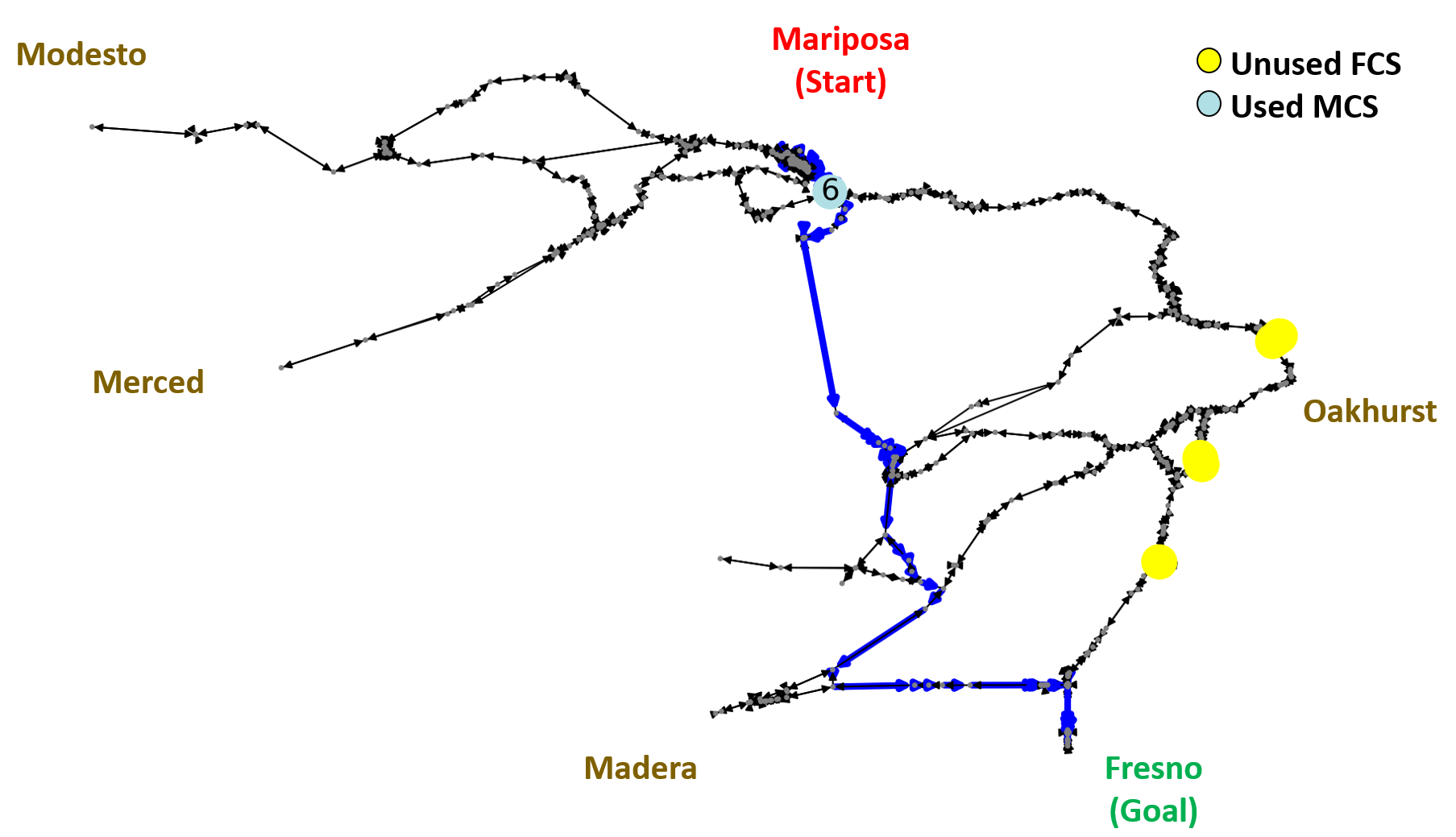}
    \caption{Mariposa network showing the optimal route (shown with blue) for the od-pair from Mariposa downtown towards Fresno using the optimization $\mathbb{P}^{avg}$. Here, the yellow nodes represent all the fast charging FCS locations on the network. The vehicles of this od-pair utilize the two deployed MCS (shown in sky blue with "6" written inside to represent 6 MCS) to recharge their EVs before driving towards the goal.}
    \label{fig:Mariposa_plot}
\end{figure}
    
\begin{figure}[ht]
    \centering
    \begin{subfigure}[t]{0.5\textwidth}
        \centering
        \includegraphics[height=1.85in]{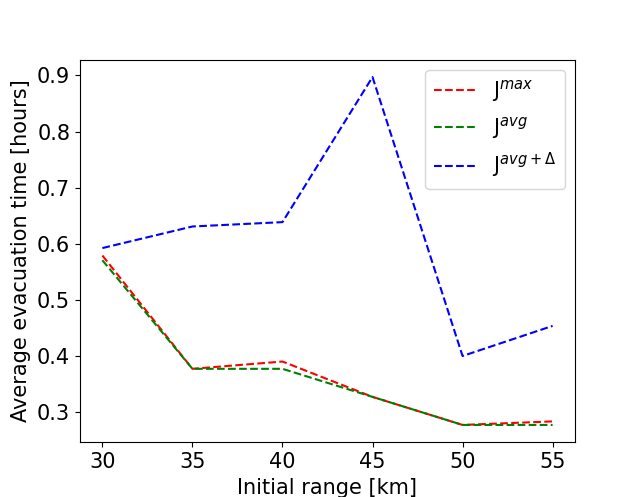}
        \caption{Comparing average $t_{evac}$ for different cost indices.}
    \end{subfigure}
    
    \begin{subfigure}[t]{0.5\textwidth}
        \centering
        \includegraphics[height=1.85in]{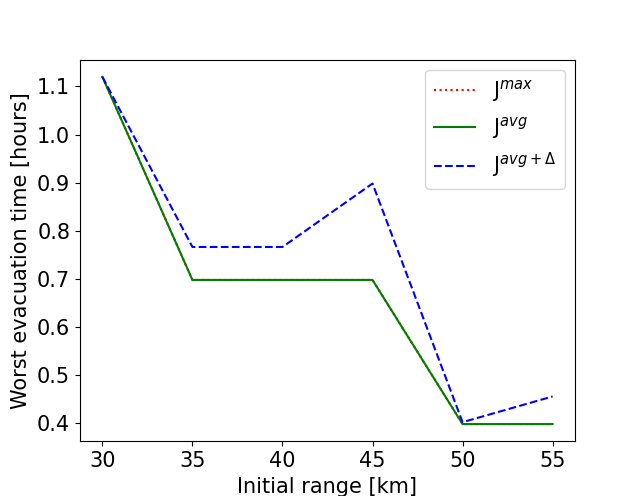}
        \caption{Comparing worst $t_{evac}$ for different cost indices.}
    \end{subfigure}
    \begin{subfigure}[t]{0.5\textwidth}
        \centering
        \includegraphics[height=1.85in]{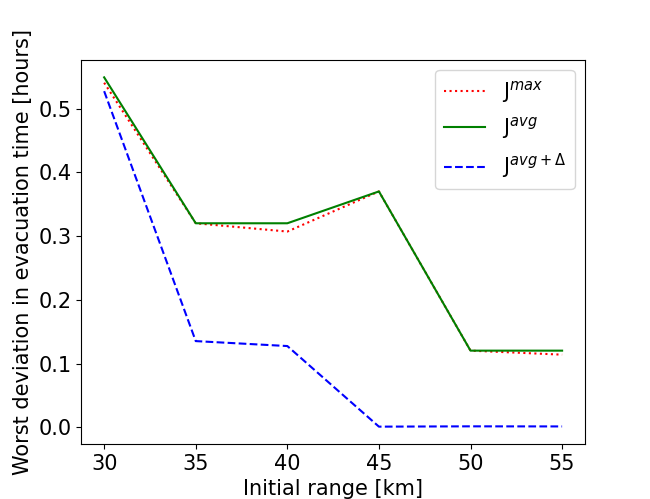}
        \caption{Comparing worst deviation of $t_{evac}$ for different cost indices.} 
    \end{subfigure}
    \caption{ Comparison of different performance indices for the Mariposa network case with respect to the initial range ($r_0$) available to EVs.}
    \label{fig:different_cost_mariposa}
\end{figure}
\begin{figure}[ht]
    \centering
    \begin{subfigure}[t]{0.5\textwidth}
        \centering
        \includegraphics[height=1.85in]{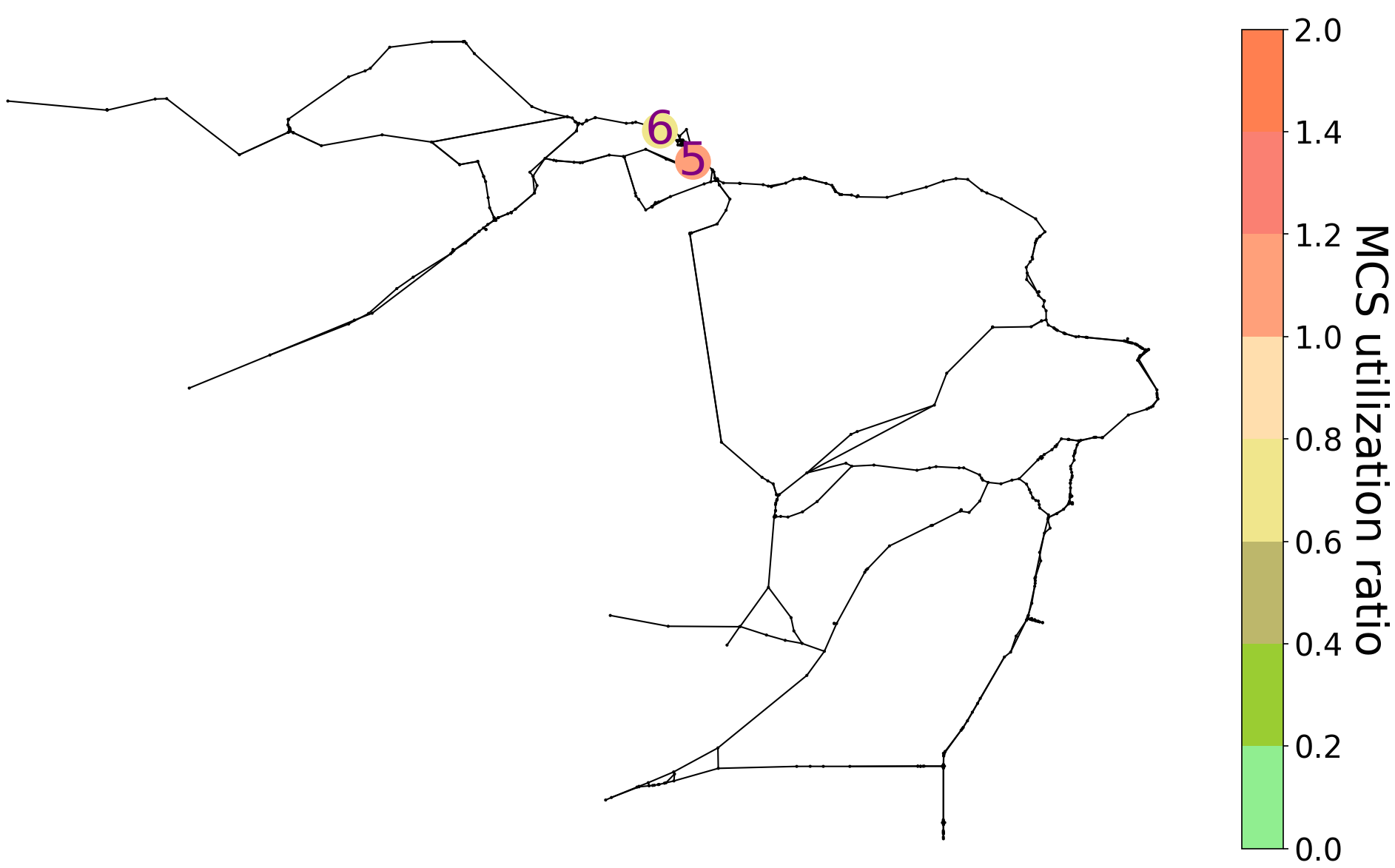}
        \caption{}
    \end{subfigure}
    
    \begin{subfigure}[t]{0.5\textwidth}
        \centering
        \includegraphics[height=1.85in]{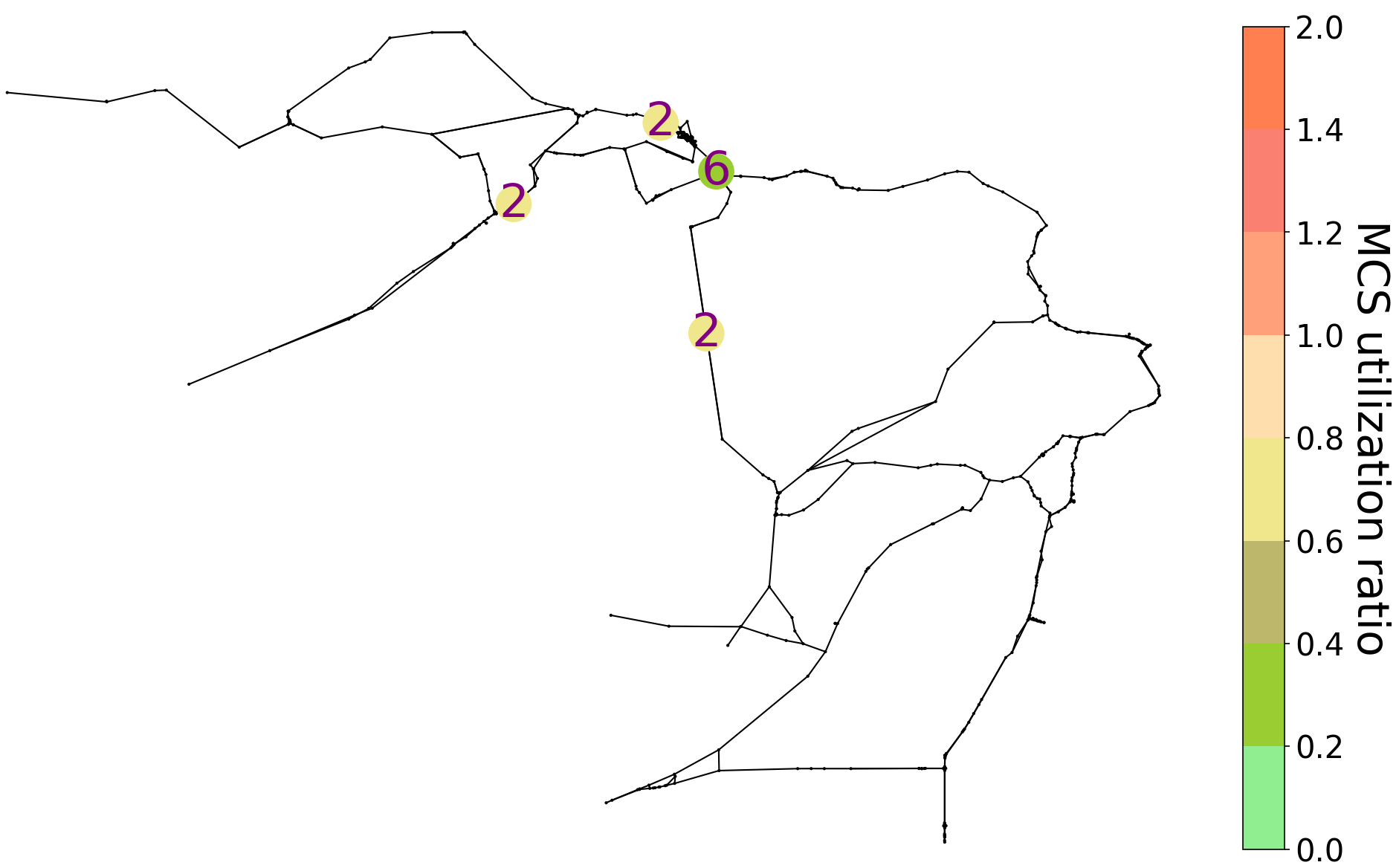}
        \caption{}
    \end{subfigure}
    \caption{Deployment of MCS using (a) baseline algorithm and (b) ECM-based optimization algorithm. Here, the color of the MCS node represents the MCS utilization ratio.}
    \label{fig:mariposa_baseline_comparison}
\end{figure}
\subsection{Mariposa (Rural Example)}
The Mariposa transportation network consists of 632 nodes and 1065 edges (see Fig.~\ref{fig:mariposa_map}). In this section, we consider an emergency evacuation scenario involving 7 evacuation od-pairs. The free flow capacity of each edge, denoted by ($f^{free}_{ij}$), lies in the range of 400 to 1600 vehicles/hour, which is determined by the corresponding road types (e.g., local street vs. major arterial). The locations of the FCS in the Mariposa network are obtained from \cite{plugshare}. We consider only those FCS with DC fast-charging capability (see Fig.~\ref{fig:Mariposa_plot}), which can deliver a charging rate of ($d_{FCS}/T_{FCS}$) equal to 150 km/hr to support rapid evacuation. To reduce exposure to the approaching disaster, all FCS located in downtown Mariposa (near the origins) are assumed to be deactivated. Similar to the Anaheim network, we assume that the state of charge (SOC) of vehicles participating in the emergency evacuation corresponds to an initial driving range ($r_0$) between 10 and 80 km (i.e., $2\%-20\%$ of full battery capacity). The number of vehicles to be evacuated is approximately 2100: 1500 for residents and 600 for visitors staying overnight in the city of Mariposa \cite{Mariposa_survey,Mariposa_visitor_survey}.  We assume that roughly 20$\%$ of this population owns an EV. Therefore, we need to plan the evacuation strategy for around 420 vehicles. Therefore, to evacuate within 1 hour of the disaster alert, the input flow ($f_{in}$) at all 7 od-pairs is assumed to be 60 vehicles/hour. We also use 20 MCS to aid in the evacuation procedures of the network. Each MCS has 5 charging ports and can provide a charging rate of 200 km/hr.

\begin{figure}[ht]
    \centering
    \includegraphics[width=0.9\linewidth]{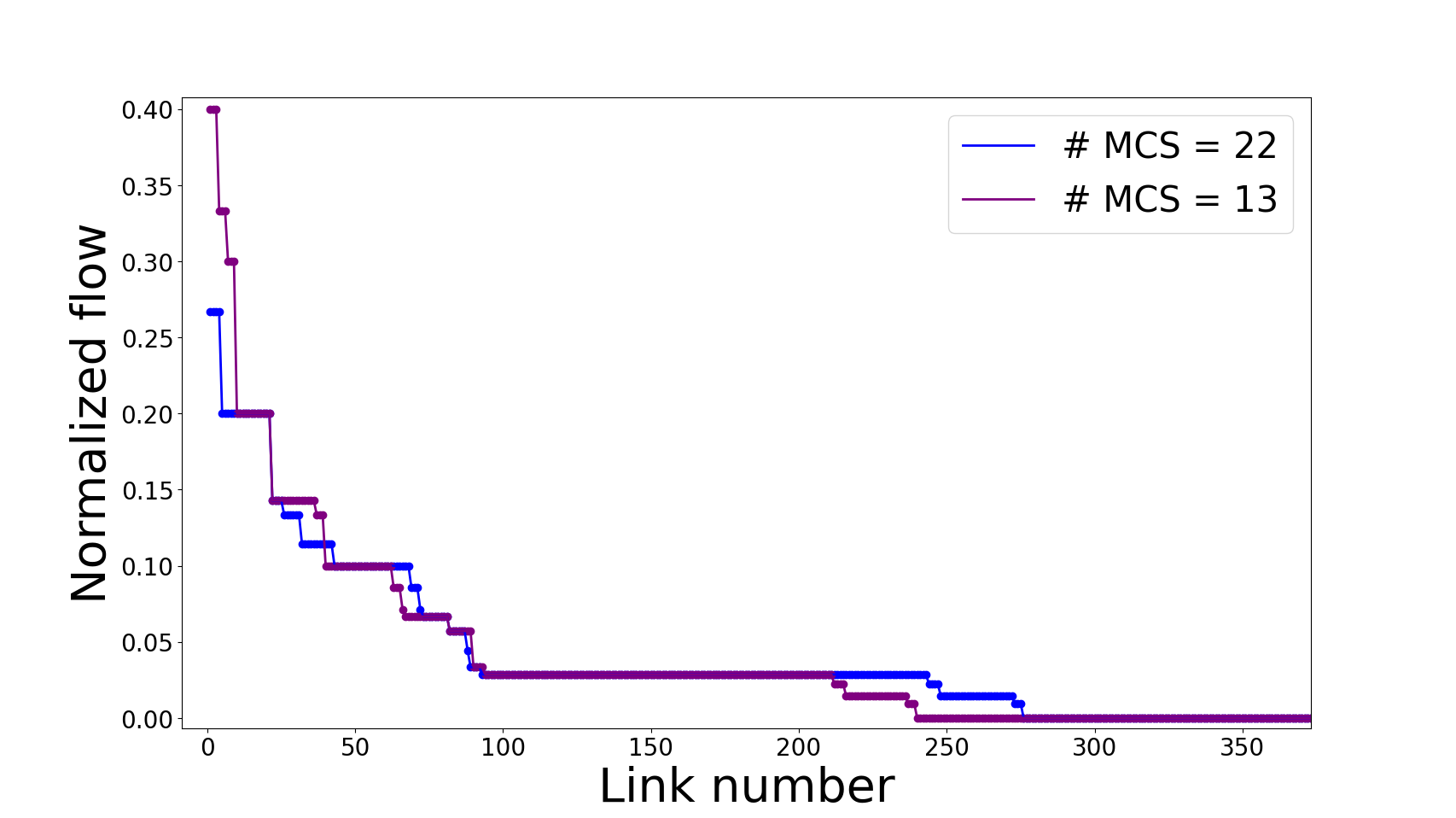}
    \caption{Effect of the number of MCS on the normalised vehicle for each edge in the Mariposa network. }
    \label{fig:MCS_dependency}
\end{figure}

Fig.~\ref{fig:Mariposa_plot} depicts the optimal evacuation route (highlighted in navy blue) for one od-pair of the Mariposa network obtained by solving $\mathbb{P}^{avg}$. In this scenario, the evacuating EVs do not have sufficient SOC to reach any nearby FCS, which are shown as yellow nodes. As a result, the ECM-based optimization strategically deploys 6 MCS, represented by sky-blue nodes, to facilitate evacuation through on-route charging. This example underscores the critical importance of the location-independent nature of MCS, as all available FCS are positioned too far from the origin to provide effective charging support to the evacuating vehicles.

Fig.~\ref{fig:different_cost_mariposa} presents a sensitivity analysis for the Mariposa test case, examining the effect of the initial driving range ($r_0$) on the cost functions of three evacuation strategies: $J^{\max}$, $J^{\mathrm{avg}}$, and $J^{\mathrm{avg}+\Delta}$. As shown in Fig.~\ref{fig:different_cost_mariposa}a, the average evacuation time is minimized when the ECM-based optimization employs $J^{\mathrm{avg}}$ as the cost function. In particular, at $r_0 = 40$ km, the average evacuation time under $J^{\mathrm{avg}}$ is approximately $5.2\%$ lower than that obtained using $J^{\max}$ and $68\%$ lower than that obtained using $J^{\mathrm{avg}+\Delta}$. Fig.~\ref{fig:different_cost_mariposa}b illustrates the worst-case evacuation times for the different cost functions, where the minimum values are achieved using both $J^{\max}$ and $J^{\mathrm{avg}}$. The proximity between the average and worst-case evacuation times under these two cost functions can be attributed to inherent limitations of the Mariposa network, which restrict the extent to which worst-case performance can be improved. In contrast, across all figures, the values associated with $J^{\mathrm{avg}+\Delta}$ are comparatively high, except in Fig.~\ref{fig:different_cost_mariposa}c, where $J^{\mathrm{avg}+\Delta}$ yields the lowest values by explicitly minimizing the deviation from the average evacuation time. This observation further underscores the importance of selecting cost functions based on application-specific objectives and performance priorities. 

In Fig.~\ref{fig:mariposa_baseline_comparison}, we compare the MCS deployment obtained from the proposed ECM-based optimization with that from a baseline evacuation algorithm. The baseline algorithm deploys MCS to the nearest candidate charging location relative to the origin of each od-pair. If the nearest location reaches its MCS capacity limit, the algorithm terminates further deployment without considering alternative, more distant sites. Consequently, the baseline approach follows a greedy strategy, which leads to significant queuing of EVs at nearby charging locations. Fig.~\ref{fig:mariposa_baseline_comparison}a and Fig.~\ref{fig:mariposa_baseline_comparison}b illustrate the spatial distribution of MCS deployments on the Mariposa network for the baseline and ECM-based optimization methods, respectively. The color of each MCS node represents its utilization ratio, defined as the ratio between the incoming vehicle charging demand rate and the MCS service rate. Under the baseline algorithm, all MCS are deployed at only two locations, causing the available parking capacity at these sites to saturate rapidly and resulting in insufficient charging service at one location. This is reflected in the utilization ratio exceeding one, indicating congestion and queue buildup. In contrast, the ECM-based optimization explicitly accounts for the initial driving ranges of EVs at the origin nodes and deploys MCS in a more distributed manner to mitigate both road congestion and charging queues. This is evidenced by the utilization ratios at all MCS locations remaining below one, indicating that the charging demand is adequately served without excessive queuing.     

Fig.~\ref{fig:MCS_dependency} illustrates the dependence of normalized vehicle flow, sorted in descending order, across individual edges of the Mariposa network as a function of the number of MCS deployed during the evacuation. The results show that when a larger number of MCS (e.g., 22) is utilized, the vehicle flow on each edge is relatively lower compared to scenarios with fewer MCS (e.g., 13). This behavior arises because a higher availability of MCS allows them to be distributed across multiple locations and edges, providing EV drivers with greater routing flexibility when traveling toward the evacuation zone. In contrast, when fewer MCS are deployed, charging options become limited, causing a larger fraction of EVs to converge onto the same edges to access the available MCS for charging, thereby increasing traffic concentration and edge utilization.

\subsection{High-fidelity Validation}
To validate the ECM optimization results at a granular level, a micro-simulation model for the Mariposa network was created using the commercial simulation software Aimsun \cite{Aimsun_info}, which is shown in Fig.~\ref{fig:Mariposa_aimsun_network}. Aimsun has implemented the Microscopic Free-flow aCceleration (MFC) Model and the Battery Consumption Model to enable the simulation of EVs (passenger cars only) at a network scale. As illustrated in Fig.~\ref{fig:EV_and_charging_station_aimsun}, each EV has its battery capacity determined by its vehicle type, and its current state of charge (SOC) and total energy consumed are tracked and updated by Aimsun at every simulation time step. Furthermore, a new module was developed using Aimsun's Application Programming Interface (API) to enable the simulation of real-world charging behavior. As shown in Fig.~\ref{fig:EV_and_charging_station_aimsun}, charging stations are modeled as centroids. Both fixed and mobile charging stations can be deployed at the same location (i.e., centroid), with different configurations of charging ports and charger powers (e.g., DCFC and L2). An EV with a low SOC can be routed to its nearest available charging station or to a designated charging station determined by an optimization algorithm (e.g., the proposed algorithm in this study). A charging station will assign available chargers to EVs, queue EVs when all chargers are in use, monitor the remaining battery of mobile charging stations, and disable them when they run out of battery. The Aimsun mobility simulator can be used to replicate realistic traffic conditions, including road congestion, stop signs, traffic signals, and speed reductions during turning maneuvers. Table~\ref{table:evac_time_validated} presents the evacuation times for the seven od-pairs in the Mariposa evacuation scenario obtained using the proposed ECM-based optimization framework and compares them with the corresponding simulation results generated in Aimsun by implementing the same evacuation routes. The differences in evacuation times vary across the od-pairs. For od-pairs 3, 4, 6, and 7, which correspond to shorter travel distances (see Fig.~\ref{fig:Mariposa_full}), the error remains below 6.67$\%$. In contrast, for od-pairs 1, 2, and 5, which involve longer travel distances, the error increases to approximately 13.84$\%$ or larger. This trend is expected because longer routes encounter a greater number of traffic control elements and more frequent speed variations due to road conditions, which are captured in the microscopic simulation but simplified in the optimization model.

Table~\ref{table:Aimsun_comparison_baseline} further compares the evacuation times obtained using a baseline algorithm and the ECM-based optimization after implementing both strategies in Aimsun. The results clearly demonstrate improved evacuation performance for the ECM-based approach, which leverages information about the EVs’ initial SOC to deploy MCS more efficiently. This targeted deployment reduces roadway congestion and charging station queuing, thereby leading to shorter evacuation times.

\begin{figure}[ht]
    \centering
    \includegraphics[width=0.75\linewidth]{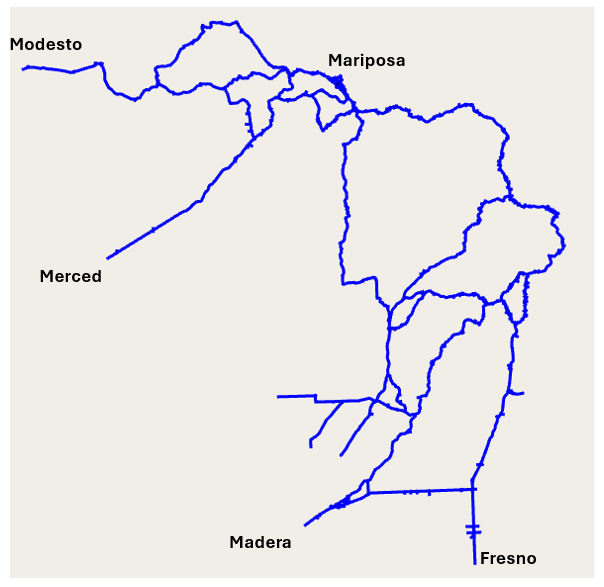}
    \caption{The Mariposa network in Aimsun.}
    \label{fig:Mariposa_aimsun_network}
\end{figure}

\begin{figure}[ht]
    \centering
    \includegraphics[width=0.9 \linewidth]{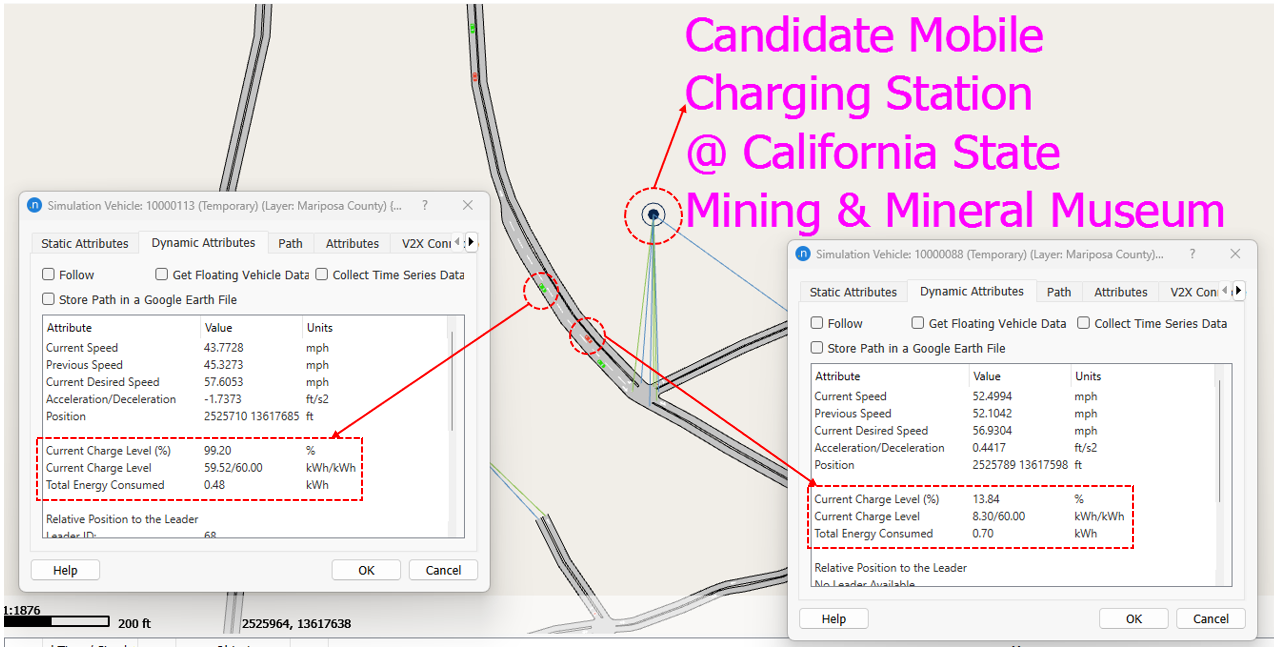}
    \caption{Electric vehicles and charging stations in Aimsun.}
    \label{fig:EV_and_charging_station_aimsun}
\end{figure}
\begin{table}[!htp]
\centering
\caption{Validation of evacuation time obtained from ECM-based optimization in Aimsun}
\label{table:evac_time_validated}
\begin{tabular}{c c c c}
\toprule
$\#$ od-pair & Flow-based model & Microscopic (Aimsun) & Error [\%]\\
\midrule
1    & 0.97 hrs & 1.15 hrs & 15.65\\
2    & 1.12 hrs & 1.3 hrs & 13.84\\
3  & 0.76 hrs & 0.79 hrs & 3.79\\
4  & 0.59 hrs & 0.61 hrs & 3.27\\
5  & 0.73 hrs & 0.94 hrs & 22.34\\
6  & 0.64 hrs & 0.68 hrs & 5.88\\
7  & 0.7 hrs & 0.75 hrs & 6.67\\
\bottomrule
\end{tabular}
\end{table}
\begin{table}[!htp]
\centering
\caption{Aimsun evacuation time comparison between baseline and ECM Algorithm}
\label{table:Aimsun_comparison_baseline}
\begin{tabular}{c c c c}
\toprule
$\#$ od-pair & Baseline & ECM-optimization & Improvement [\%]\\
\midrule
1    & 1.17 hrs & 1.15 hrs & 1.7\\
2    & 1.34 hrs & 1.3 hrs & 2.98\\
3  & 1.31 hrs & 0.79 hrs & 39.69\\
4  & 1.08 hrs & 0.61 hrs & 43.51\\
5  & 1.47 hrs & 0.94 hrs & 36.05\\
6  & 1.17 hrs & 0.68 hrs & 41.88\\
7  & 0.82 hrs & 0.75 hrs & 8.53\\
\bottomrule
\end{tabular}
\end{table}
\section{Conclusion}
This paper presented a novel ECM–based framework for EVs emergency evacuation that jointly addresses routing, charging, and infrastructure constraints under disaster scenarios. By exploiting analogies between transportation networks and electrical circuits, the proposed approach models traffic flow, travel time, and EV driving range using Kirchhoff’s laws, enabling a unified and physically interpretable optimization formulation. Unlike conventional evacuation models, the framework explicitly captures energy feasibility through voltage-based range modeling and incorporates both FCS and MCS to mitigate limited charging accessibility and infrastructure disruptions.

The resulting integer programming framework determines optimal evacuation routes, charging durations, and the placement and number of MCS units while respecting road capacity and charging service constraints. The extension to multiple origin–destination pairs using the principle of superposition enables scalable evacuation planning and supports fairness-aware objectives, including worst-case, average, and variance-based evacuation times. Simulation results on large-scale transportation networks in California demonstrate that the proposed method significantly improves evacuation efficiency and robustness, particularly in low initial-state-of-charge scenarios, where traditional infrastructure alone is insufficient.

Future work will focus on incorporating dynamic traffic congestion models, stochastic disaster impacts, real-time MCS reallocation, and tighter coupling with power system constraints to further enhance the resilience and practicality of EV-based evacuation planning.

\bibliographystyle{IEEEtran}
\bibliography{reference}
\section*{Appendix}
\newpage
\begin{figure*}[ht]
    \centering
    \begin{subfigure}[t]{0.49\textwidth}
        \centering
        \includegraphics[width=\textwidth]{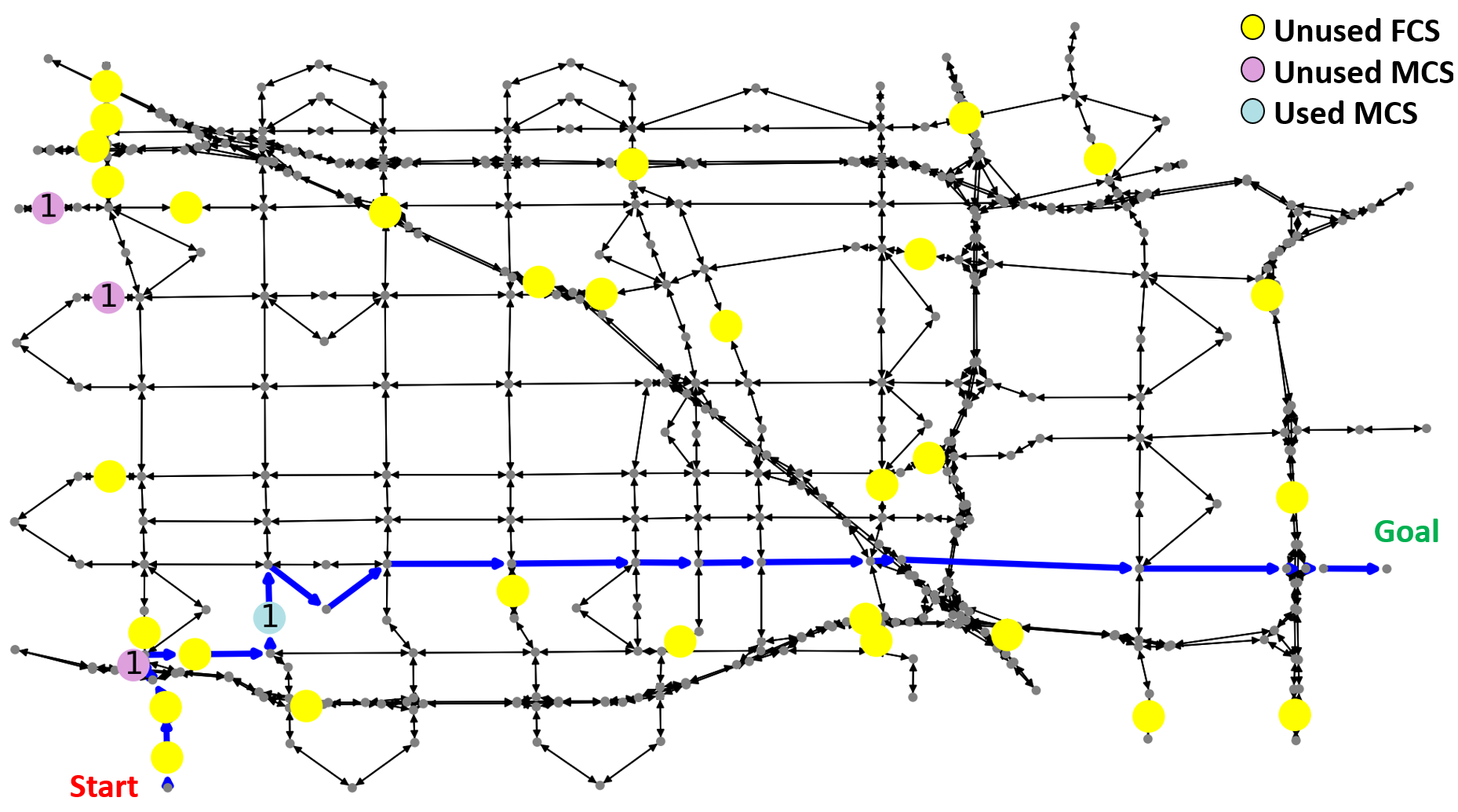}
        \caption{}
    \end{subfigure}
        \centering
    \begin{subfigure}[t]{0.49\textwidth}
        \centering
        \includegraphics[width=\textwidth]{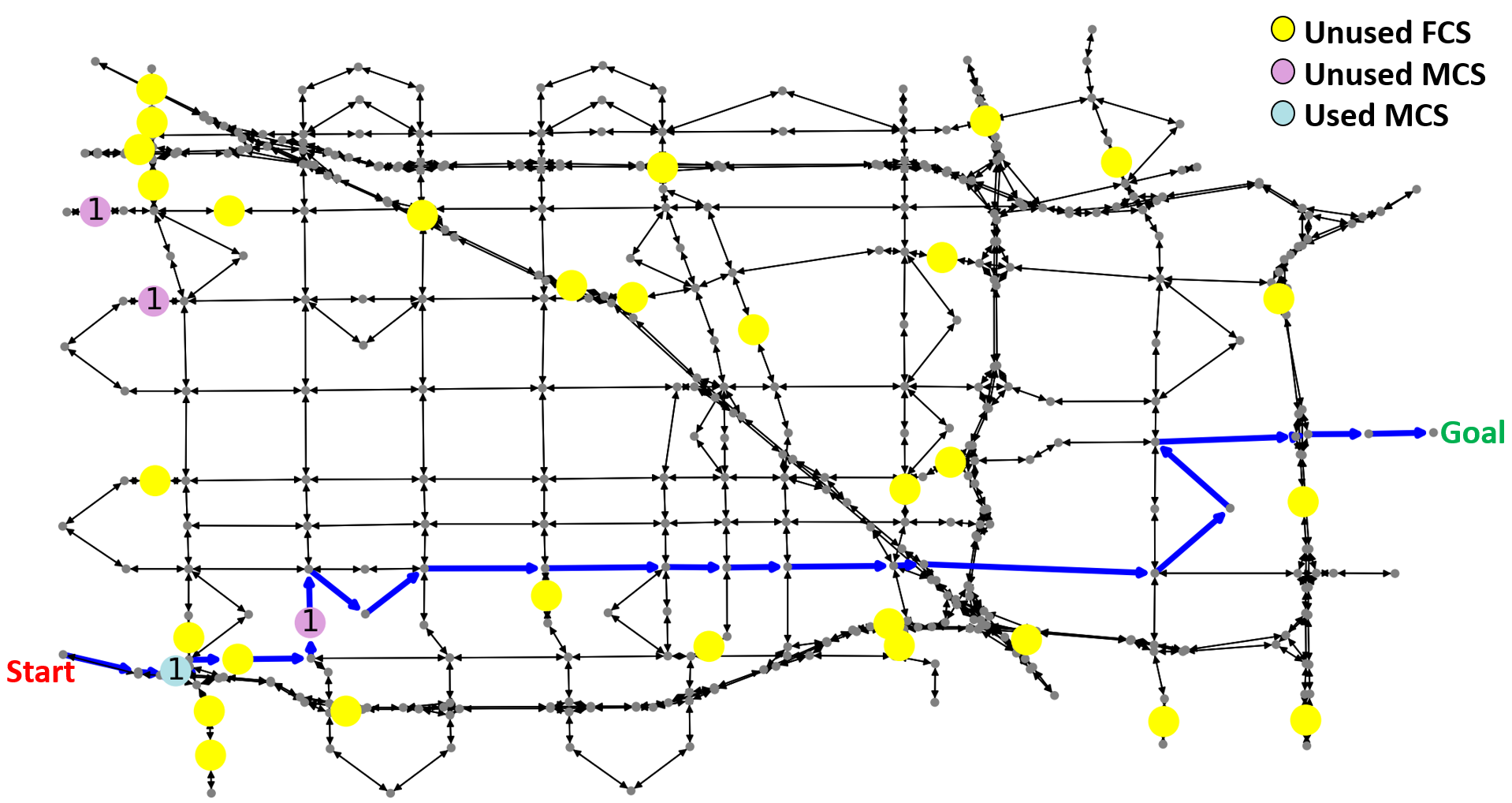}
        \caption{}
    \end{subfigure}
    \begin{subfigure}[t]{0.49\textwidth}
        \centering
        \includegraphics[width=\textwidth]{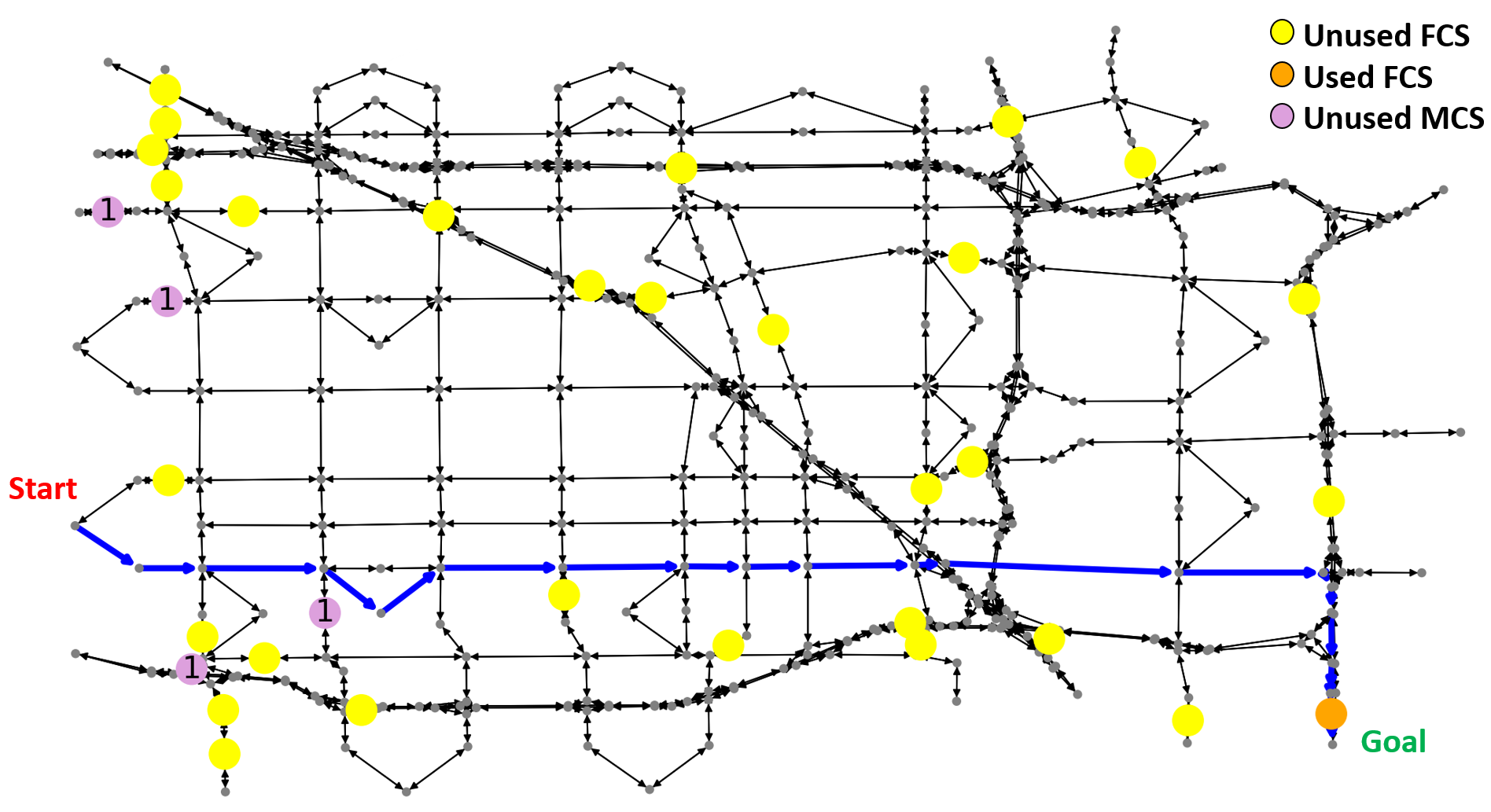}
        \caption{}
    \end{subfigure}
        \begin{subfigure}[t]{0.49\textwidth}
        \centering
        \includegraphics[width=\textwidth]{Figures/Anaheim_sol_4.png}
        \caption{}
    \end{subfigure}
        \begin{subfigure}[t]{0.49\textwidth}
        \centering
        \includegraphics[width=\textwidth]{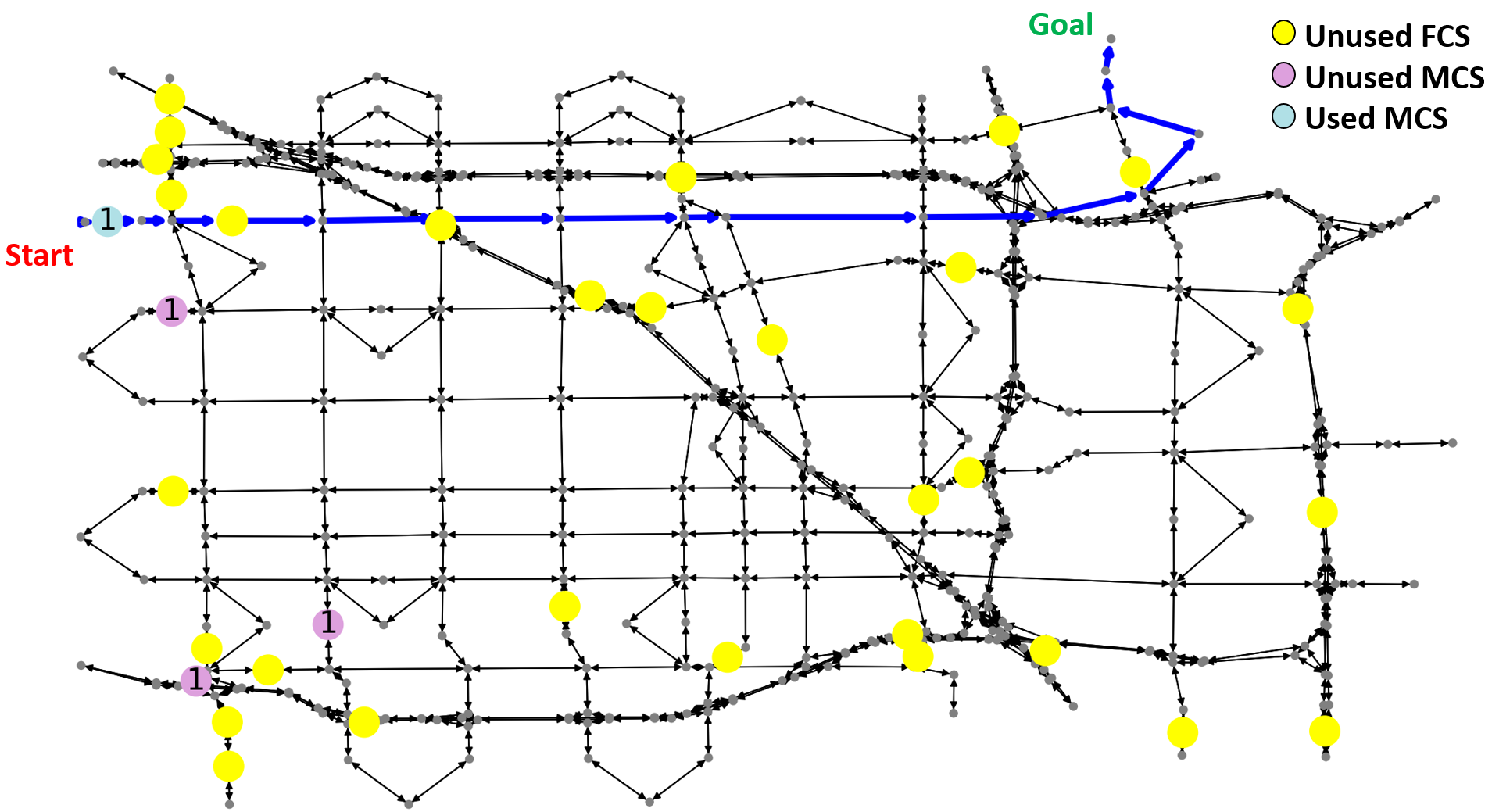}
        \caption{}
    \end{subfigure}
        \begin{subfigure}[t]{0.49\textwidth}
        \centering
        \includegraphics[width=\textwidth]{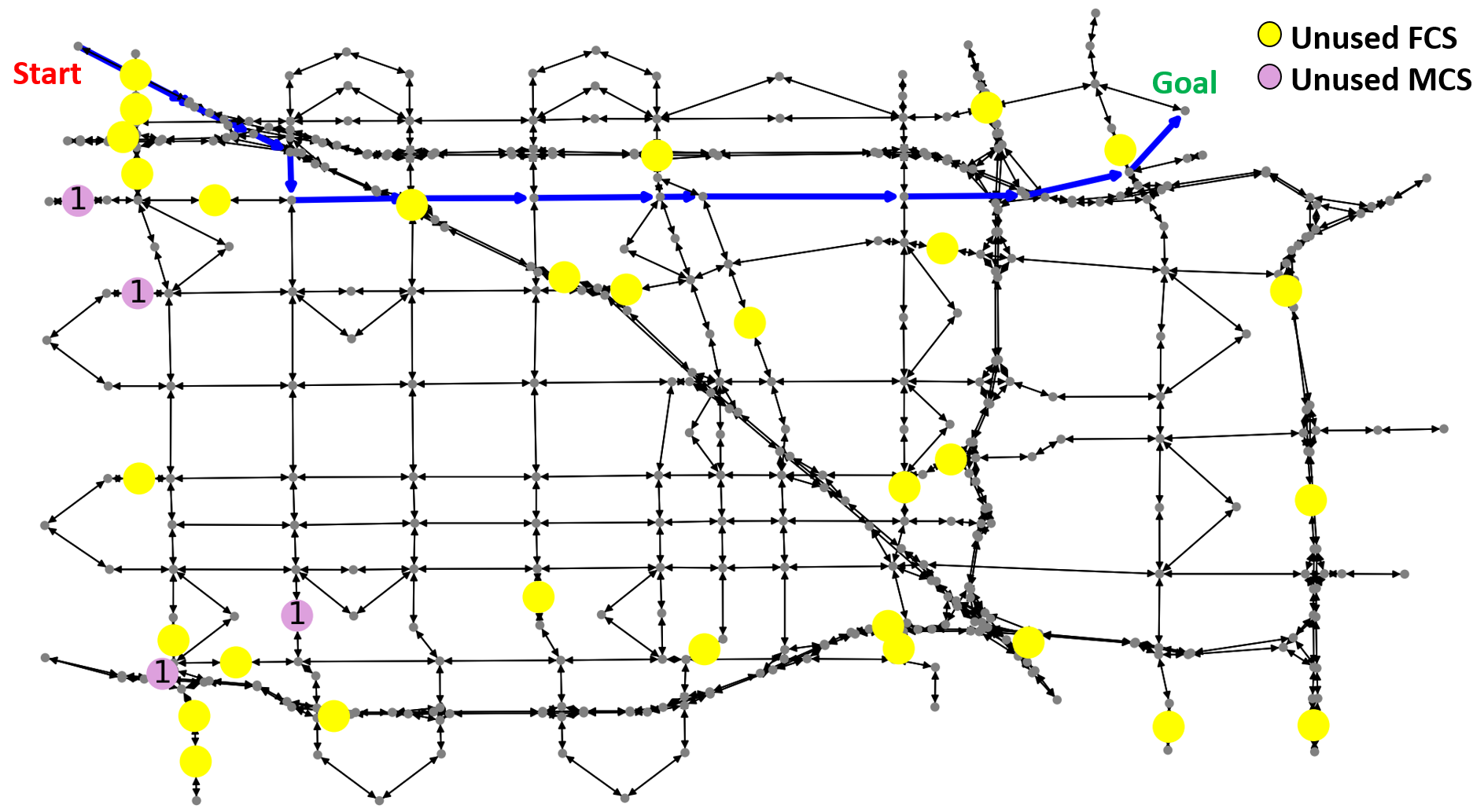}
        \caption{}
    \end{subfigure}
        \begin{subfigure}[t]{0.49\textwidth}
        \centering
        \includegraphics[width=\textwidth]{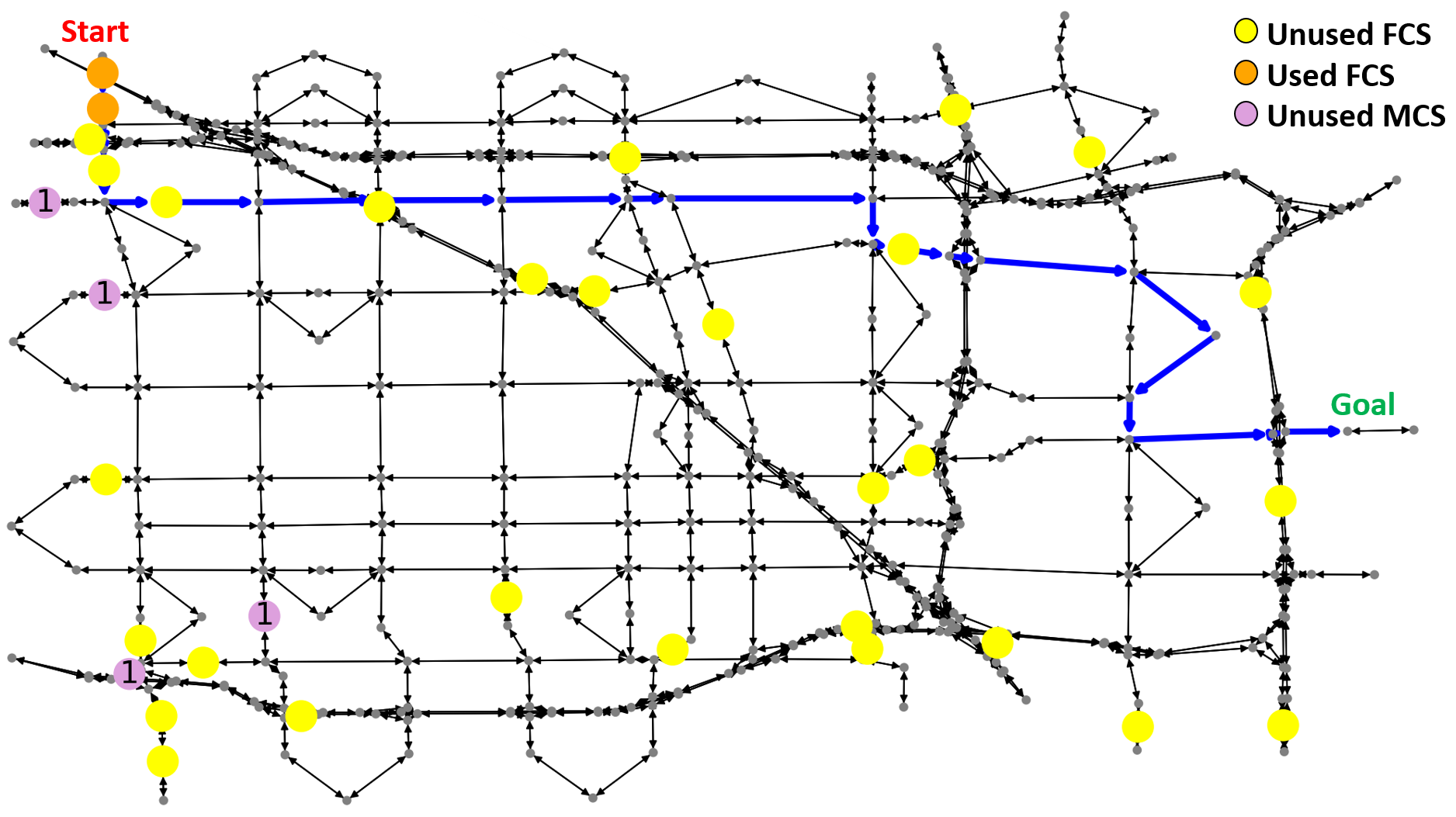}
        \caption{}
    \end{subfigure}
        \begin{subfigure}[t]{0.49\textwidth}
        \centering
        \includegraphics[width=\textwidth]{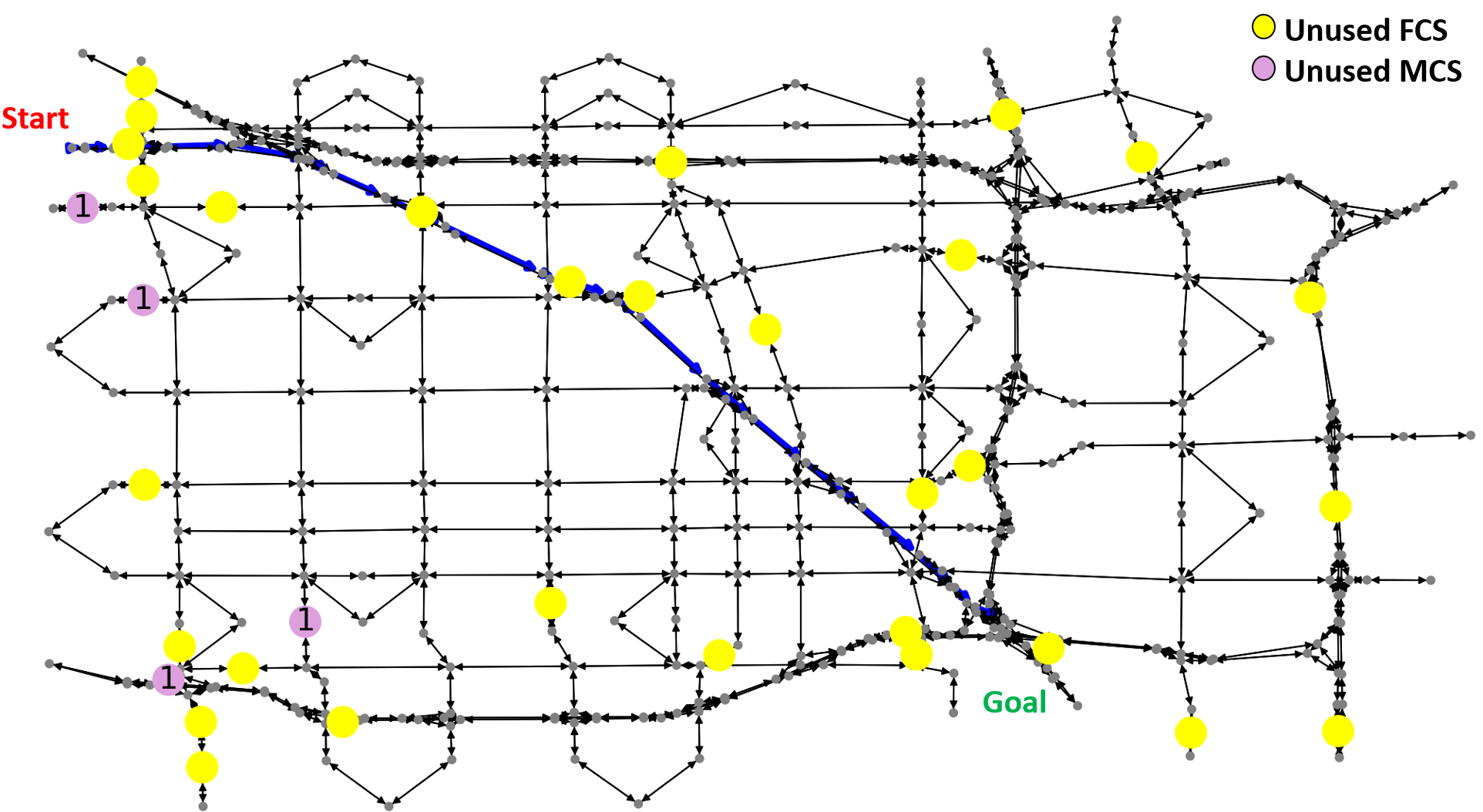}
        \caption{}
    \end{subfigure}
    \caption{Optimal evacuation paths of the 8 od-pairs of the Anaheim network obtained as the solution of ECM-based optimization. }
    \label{fig:anaheim_full}
\end{figure*}
\begin{figure*}[ht]
    \centering
    \begin{subfigure}[t]{0.49\textwidth}
        \centering
        \includegraphics[width=\textwidth]{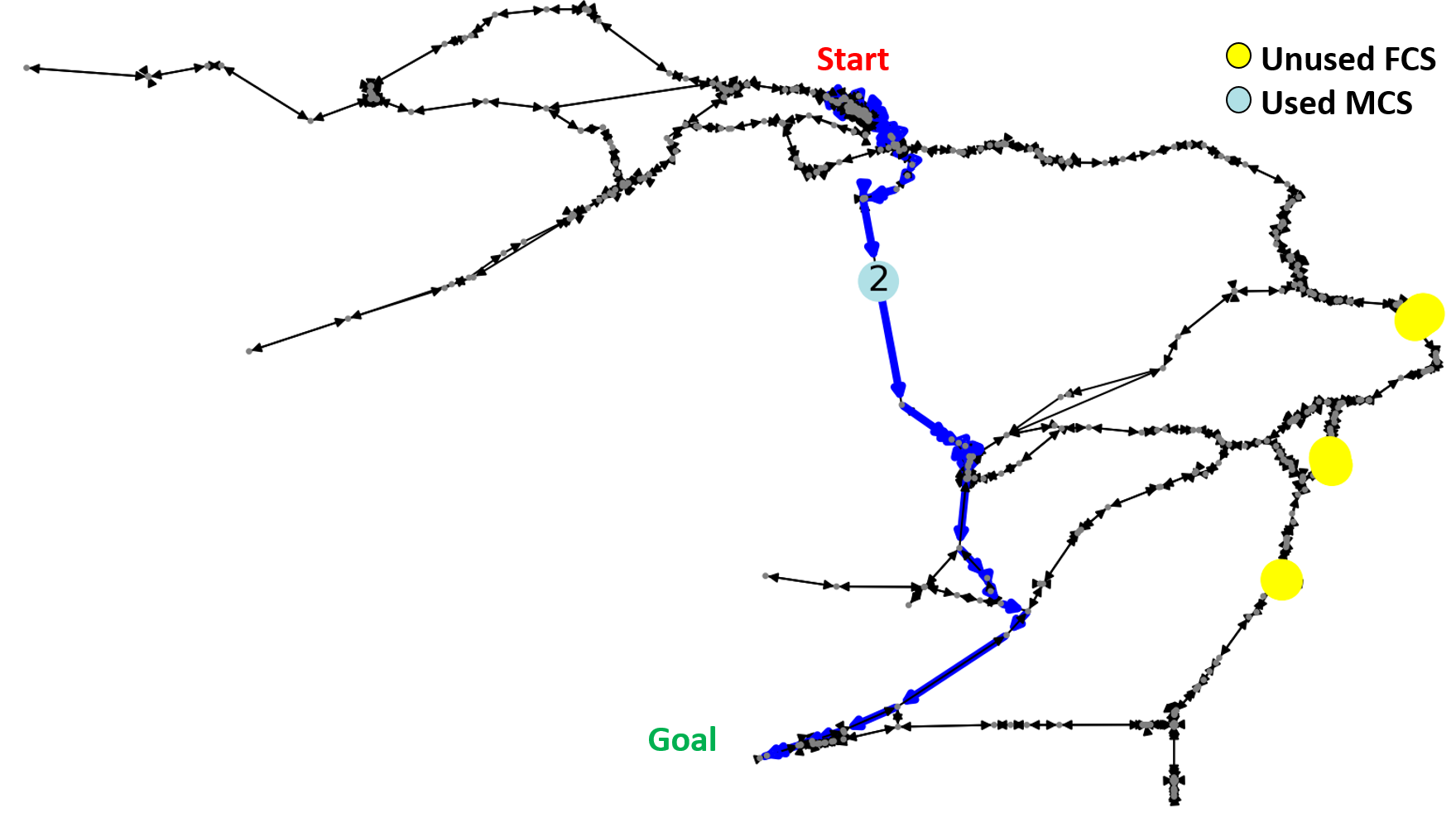}
        \caption{}
    \end{subfigure}
        \centering
    \begin{subfigure}[t]{0.49\textwidth}
        \centering
        \includegraphics[width=\textwidth]{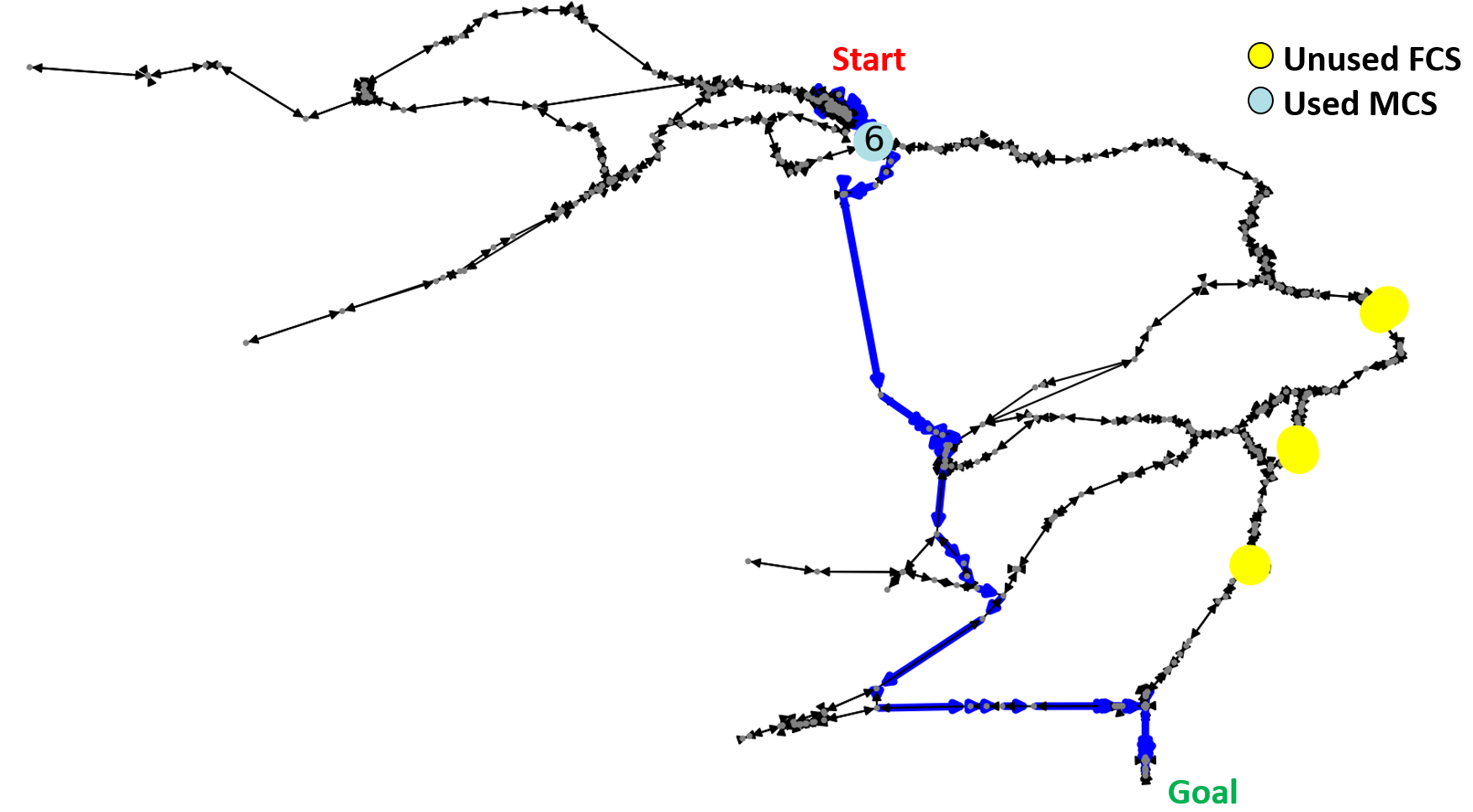}
        \caption{}
    \end{subfigure}
    \begin{subfigure}[t]{0.49\textwidth}
        \centering
        \includegraphics[width=\textwidth]{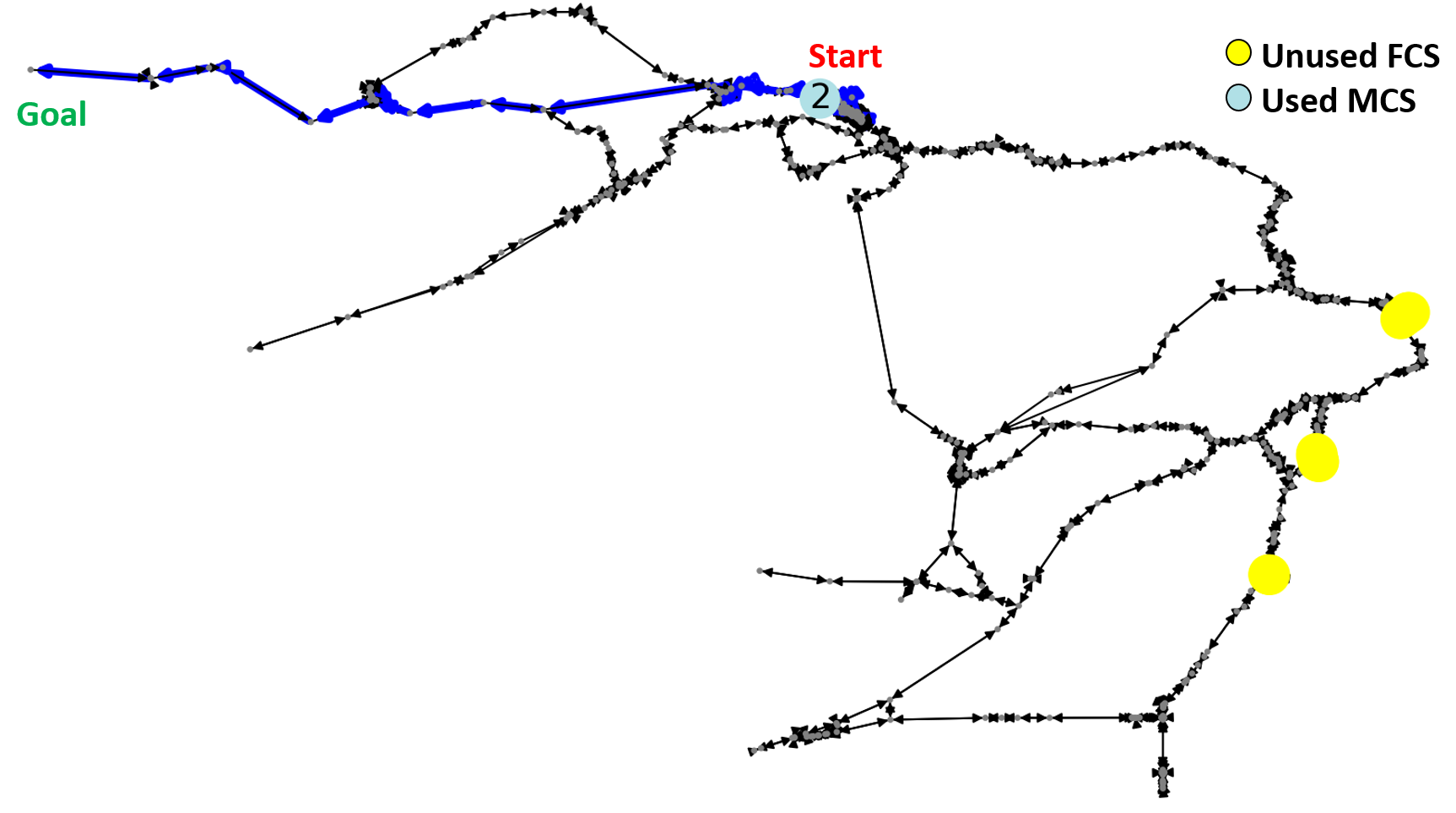}
        \caption{}
    \end{subfigure}
        \begin{subfigure}[t]{0.49\textwidth}
        \centering
        \includegraphics[width=\textwidth]{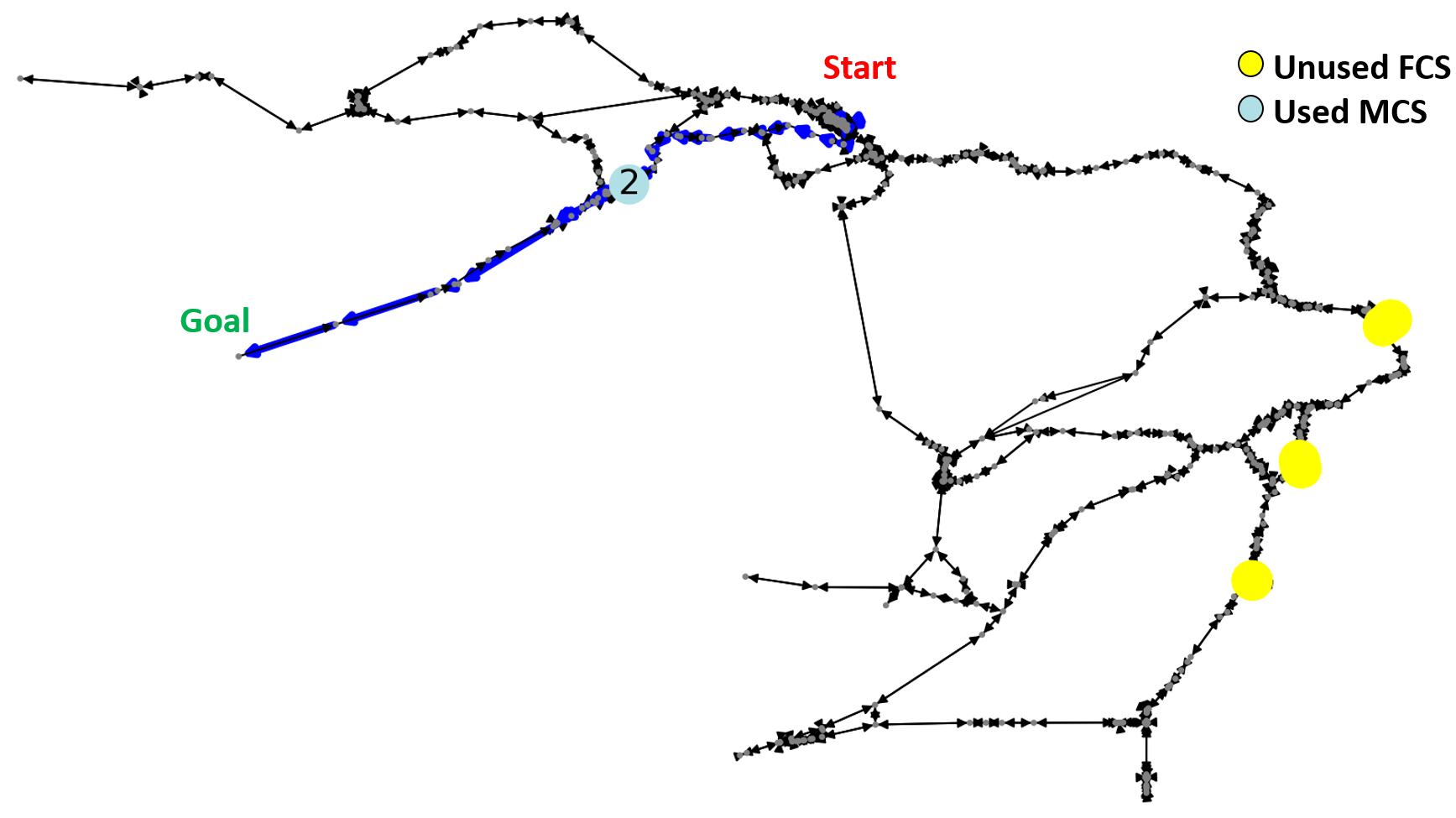}
        \caption{}
    \end{subfigure}
        \begin{subfigure}[t]{0.49\textwidth}
        \centering
        \includegraphics[width=\textwidth]{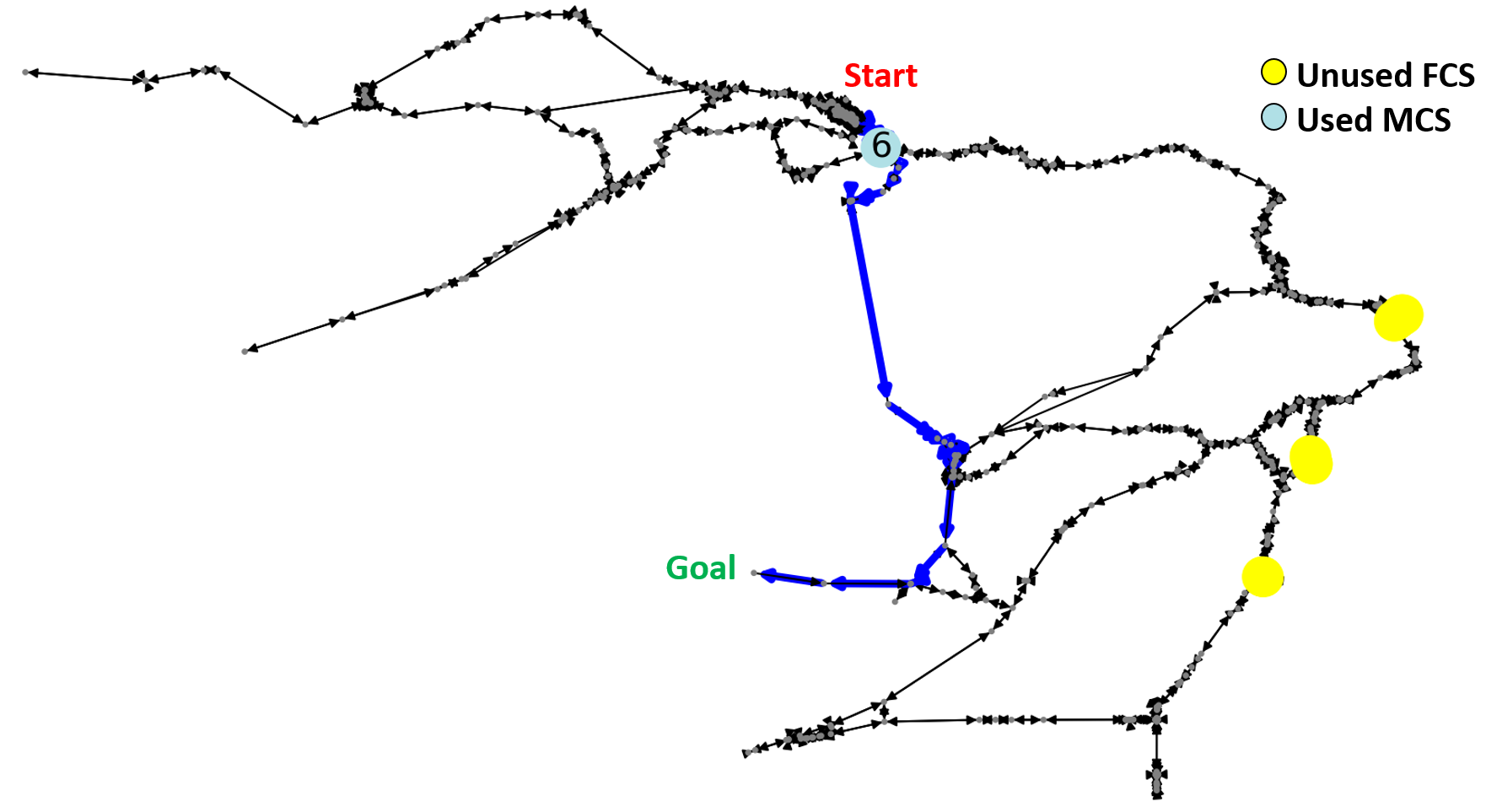}
        \caption{}
    \end{subfigure}
        \begin{subfigure}[t]{0.49\textwidth}
        \centering
        \includegraphics[width=\textwidth]{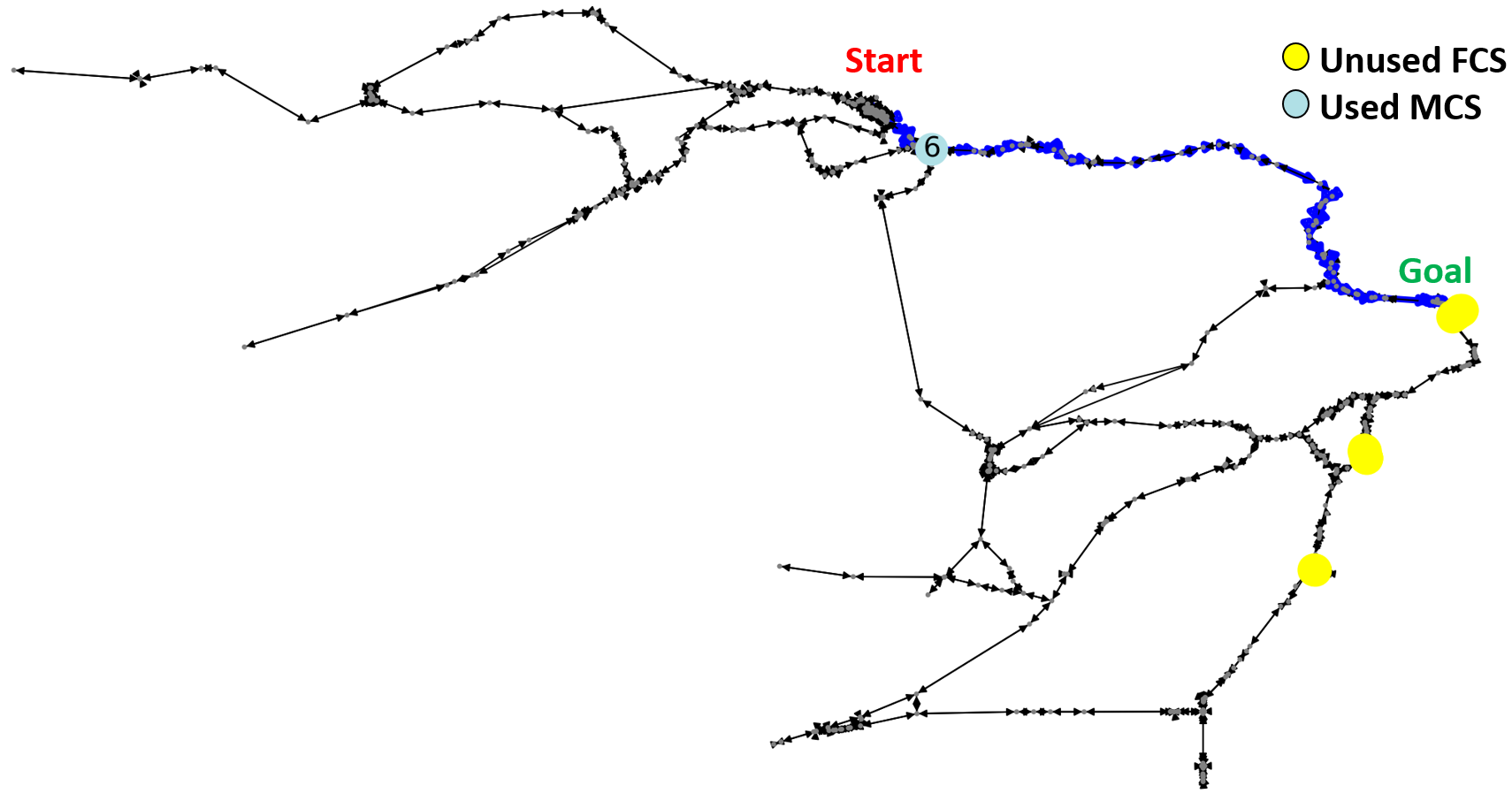}
        \caption{}
    \end{subfigure}
        \begin{subfigure}[t]{0.49\textwidth}
        \centering
        \includegraphics[width=\textwidth]{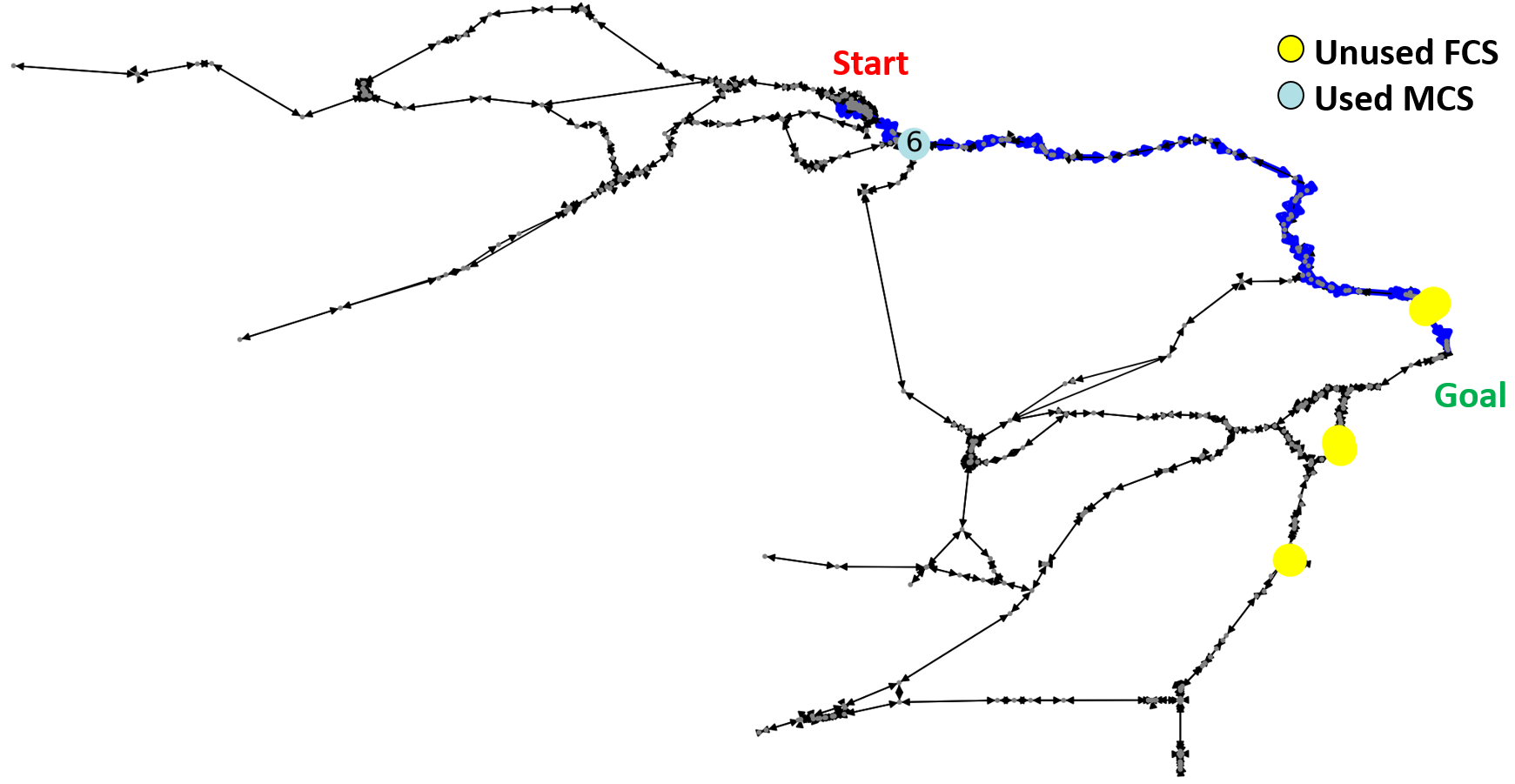}
        \caption{}
    \end{subfigure}

    \caption{Optimal evacuation paths of the 7 od-pairs of the Mariposa network obtained as the solution of ECM-based optimization. }
    \label{fig:Mariposa_full}
\end{figure*}
\end{document}